	\newif\ifsubmit
	  \newcommand{\FDComm}[1]{\relax} 
	  \newcommand{\VPComm}[1]{\relax} 
	  \newcommand{\UTComm}[1]{\relax} 
	   \newcommand{\PAComm}[1]{\relax} 
           \def\bfd{}
	  \def\efd{} 
	  \def\bpa{}
	  \def\epa{} 
	  \def\but{} 
	  \def\eut{} 
	 \newcommand{\FDComm}[1]{\textcolor{red}{\scriptsize [FD: #1]}}
	\newcommand{\PAComm}[1]{{\scriptsize \textcolor{magenta}{[Paola{:} #1]}}}
        \newcommand{\VPComm}[1]{{\scriptsize \textcolor{violet}{[VP{:} #1]}}}
        \newcommand{\UTComm}[1]{{\scriptsize \textcolor{blue}{[UT{:} #1]}}}
        \def\bfd{\begin{color}{red}}
	  \def\efd{\end{color}} 
	  \def\bpa{\begin{color}{magenta}} 
	  \def\epa{\end{color}} 
	  \def\bez{\begin{color}{blue}} 
	  \def\eez{\end{color}} 
	  \def\but{\begin{color}{blue}} 
	  \def\eut{\end{color}} 
	\definecolor{lightgray}{gray}{0.9}
	\definecolor{airforceblue}{rgb}{0.36, 0.54, 0.66}
	\definecolor{bittersweet}{rgb}{1.0, 0.44, 0.37}
	\definecolor{brilliantlavender}{rgb}{0.96, 0.73, 1.0}
\newcommand{\refItem}[2]{\cref{#1}(\ref{#1:#2})}
\newcommand{\ple}[1]{\ensuremath{\langle #1 \rangle}\xspace} 
\newcommand{\fun}[3]{#1 : #2\to#3}
\newcommand{\pfun}[3]{#1 : #2 \rightharpoonup #3}
	\newcommand{\resid}{\mathit{r}}
	\newcommand{\actorid}{\mathit{a}}
	\newcommand{\aactorid}{\mathit{b}}
	\newcommand{\methid}{\mathit{m}}
	\newcommand{\futid}{\mathit{f}}
	\newcommand{\gradeid}{\mathit{g}}
\newcommand{\gr}{\gradeid}
	\newcommand{\ggradeid}{\mathit{h}}
\newcommand{\ggr}{\ggradeid}
	\newcommand{\var}{\mathit{x}}
	\newcommand{\vvar}{\mathit{y}}
	\newcommand{\val}{\mathit{v}}
	\newcommand{\threadid}{\mathcal{T}}
	\newcommand{\GMon}{{\mathit{G}}} 
	\newcommand{\GSet}{|\GMon|}  
	\newcommand{\gsum}{+}
	\newcommand{\gord}{\leq}
	\newcommand{\gsub}{-}
	\newcommand{\gzero}{\mathbf{0}}
    	\newcommand{\key}[1]{\mbox{\fontsize{8}{8}\selectfont\sffamily\bfseries #1}}
	\newcommand{\define}{::=}
	\newcommand{\many}[1]{\overline{#1}}
	\newcommand{\T}{\ensuremath \mathit{T}} 
	\newcommand{\FutTy}[2]{\ensuremath \key{Fut}\langle#1,#2\rangle} 
	\newcommand{\valueexpression}{\mathit{ve}}
	\newcommand{\expression}{\mathit{e}}
	\newcommand{\res}{\gradedr{\resid}{\gradeid}}
	\newcommand{\rres}{\gradedr{\resid}{\ggradeid}}
	\newcommand{\letin}[3]{\key{let} \; #1 \; \key{=} \;#2 \; \key{in} \; #3}
	\newcommand{\return}[1]{\key{return} \; #1}
	\renewcommand{\return}[1]{\key{ret} \; #1}
	\newcommand{\magiccall}[3]{#1!#2(#3)}
	\newcommand{\holdexp}[2]{\key{hold} \; #1\; #2}
	\newcommand{\releaseexp}[2]{\key{release} \; #1\; #2}
	\newcommand{\awaitexp}[1]{#1?}
	\newcommand{\callexp}[3]{#1!#2(#3)}
	\newcommand{\Sigop}{\mathcal{S}_{\key{op}}} 
	\newcommand{\opexp}[1]{\key{op}(#1)}
	\newcommand{\letexp}[3]{\letin{#1}{#2}{#3}}
	\newcommand{\unit}{\key{unit}}
	\newcommand{\Unit}{\key{Unit}}
	\newcommand{\actorenvironment}{\actorcontext} 
	\newcommand{\aactorenvironment}{\aactorcontext} 
	\newcommand{\processid}{P}
	\newcommand{\pprocessid}{Q}
	\newcommand{\ppprocessid}{U}
	\newcommand{\pdone}{\aux{done}} 
	\newcommand{\parP}{\mathrel{\|}}
	\newcommand{\parallelcomp}[2]{#1 \parP#2}
	\newcommand{\vestep}{\Rightarrow}
	\newcommand{\expstep}[2][]{\xrightarrow[#1]{#2}}
	\newcommand{\veeval}[2]{#1\; | \; #2}
	\newcommand{\expeval}[2]{#1\; | \; #2}
	\newcommand{\fp}[1]{\mathit{fp(#1)}}
	\newcommand{\fr}[1]{\mathit{fr(#1)}}
	\newcommand{\labelid}{l}
	\newcommand{\emptylabel}{\tau}
        \newcommand{\ch}{\oplus}
	\newcommand{\nchoicesym}{\mathrel{\ch}}
	\newcommand{\nondetchoic}[2]{#1\nchoicesym#2}
	\newcommand{\gradedvar}[2]{\gradedr{\mathit{#1}}{#2}}
	\newcommand{\dis}[1]{dis(#1)}
	\newcommand{\Subst}[3]{#1[#3/#2]} 
	\newcommand{\config}{\sigma}
	\newcommand{\resenv}{\rho}
	\newcommand{\activationrecord}{\lambda}
	\newcommand{\varbind}[3]{#1[ #3/ #2]}
	\newcommand{\assocval}[2]{#1 \mapsto #2}
	\newcommand{\dom}{\aux{dom}}
	\newcommand{\markcontext}{\aux{mark}}
	\newcommand{\fresh}{\aux{fresh}}
	\newcommand{\gradedr}[2]{#1^{#2}}
	\newcommand{\idle}{\key{idle}}
	\newcommand{\anythread}[5][]{#5\ifblank{#1}{}{^{#1}}[#2 \mid #3]^{#4}}
	\newcommand{\ath}{\bullet}
	\newcommand{\qth}{\circ} 
	\newcommand{\queuedthread}[4]{\anythread[\qth]{#1}{#2}{#3}{#4}} 
        \newcommand{\activethread}[4]{\anythread[\ath]{#1}{#2}{#3}{#4}} 
	\newcommand{\idleactor}[1]{\idle^{\mathit{#1}}}
	\newcommand{\placefut}[2]{#1 \leftarrow #2}
	\newcommand{\callmsg}[4]{#1 \leftarrow \magiccall{#2}{#3}{#4}}
	\newcommand{\holdmsg}[1]{\key{hold} \; #1}
	\newcommand{\rlsmsg}[1]{\key{rls} \; #1}
	\newcommand{\semanticstep}{\longrightarrow}
        \newcommand{\red}{\semanticstep}
	\newcommand{\manysemanticsteps}{\longrightarrow^*}
	\newcommand{\emptycontext}{\emptyset}
	\newcommand{\actorcontext}{\Phi}      
	\newcommand{\aactorcontext}{\Psi}    
	\newcommand{\aaactorcontext}{\Theta}
	\newcommand{\typingcontext}{\Gamma} 
	\newcommand{\ttypingcontext}{\Delta}    
	\newcommand{\futcontext}{\Sigma}
	\newcommand{\ffutcontext}{\Omega} 
	\newcommand{\rescontext}{\Theta}
	\newcommand{\joinctx}{; \;}
	\newcommand{\aux}[1]{\mathsf{#1}}       
	\newcommand{\linearsum}{,}
       \newcommand{\tysum}{\gsum} 
        \newcommand{\actsum}{\gsum}
           \newcommand{\actord}{\gord}
	\newcommand{\assoctype}[2]{#1 : #2}
	\newcommand{\assoctypeS}[2]{#1{:} #2}
	\newcommand{\assocctx}[2]{#1 :: #2}
        \newcommand{\assocctxS}[2]{#1{::}#2}
	\newcommand{\assocresenv}[2]{#1:#2}
\renewcommand{\key}[1]{\texttt{#1}}   
\newcommand{\ve}{\valueexpression}   
\newcommand{\e}{\expression}                
\newcommand{\PP}{\processid}                
\newcommand{\PQ}{\pprocessid}                
\newcommand{\Conf}[2]{#1\; ||\; #2}       
\newcommand{\defin}[1]{\widetilde{#1}}      
\newcommand{\Pcong}{\triangleright}
\newcommand{\Cong}{\bowtie}
\newcommand{\mbody}[2]{\aux{mbody}(#1,#2)}
\newcommand{\mtype}[2]{\aux{mtype}(#1,#2)}
\newcommand{\MBody}[2]{(#1,#2)}
\newcommand{\MType}[5]{(#1,#2\to#3,#4,#5)}
\newcommand{\OpTy}[3]{#1:#2\to#3}
\newcommand{\WTexpr}[8]{#1 \joinctx #2 \joinctx #3 \vdash^{#4}_{#5} \assoctype{#6}{#7\joinctx #8}}
\newcommand{\WTexprS}[8]{#1;#2 ;#3 \vdash^{#4}_{#5} \assoctypeS{#6}{#7;#8}}
\newcommand{\WTexpctx}[9]{#2 \joinctx #3 \joinctx #4 \prescript{}{#6}{\vdash^{#5}_{#1}} \assoctype{#7}{#8\joinctx #9}}
\newcommand{\WTvexpr}[4]{#1 \joinctx #2 \vdash \assoctype{#3}{#4}}
\newcommand{\WTvectx}[5]{#2 \joinctx #3 \vdash_{#1} \assoctype{#4}{#5}}
\newcommand{\WTenv}[3]{  #1 \vdash \assocctx{#2}{#3}}
\newcommand{\WTproc}[5]{ #1 \joinctx #2 \vdash^{#3} \assocctx{#4}{#5}}
\newcommand{\WTprocS}[5]{ #1 \joinctx #2 \vdash^{#3} \assocctxS{#4}{#5}}
\newcommand{\WTconf}[6][\aaactorcontext]{#2 \joinctx #3 \vdash^{#4}_{#1} \assocctx{#5}{#6}}
\newcommand{\WTthread}[8]{#1 \joinctx #2 \joinctx #3 \vdash^{#4}_{#5} \assoctype{#6}{#7\joinctx#8}}
\newcommand{\WTlab}[6]{  #1 \joinctx #2 \vdash_{#3} \assocctx{#4}{#5\joinctx#6}}
\newcommand{\refRule}[1]{{\small{\textsc{(#1)}}}\xspace} 
\newcommand{\NamedRule}[4]{\scriptstyle{\refRule{#1}}\
\displaystyle                  
\frac{#2}{#3}         
\begin{array}{l}
#4     
\end{array}
}
\newcommand{\Rule}[3]
{\displaystyle                  
\frac{#1}{#2}         
\begin{array}{l}
#3     
\end{array}
}
\newcommand{\Barista}{\code{B}}
\newcommand{\Customer}{\code{Cs}}
\newcommand{\Counter}{\code{Cn}}
\newcommand{\Coffee}{\code{Cf}}
\newcommand{\CleanCup}{\code{CC}}
\newcommand{\main}{\code{main}}
\newcommand{\order}{\code{order}}
\newcommand{\f}{\code{f}}
\newcommand{\fO}{\code{f1}}
\newcommand{\ffO}{\code{ff1}}
\newcommand{\fF}{\code{f4}}
\newcommand{\ffF}{\code{ff4}}
\newcommand{\x}{\code{x}}
\newcommand{\cT}{\code{cc}}
\newcommand{\cO}{\code{c}}
\newcommand{\unitS}{\code{u}}
\newcommand{\place}{\code{plc}}
\newcommand{\takeorder}{\code{tkOr}}
\newcommand{\oC}{\code{o}}
\newcommand{\makeCoffee}{\code{mkCf}}
\newcommand{\aEnv}{\actorenvironment}
\newcommand{\yO}{\code{y1}}
\newcommand{\yT}{\code{y2}}
\newcommand{\yTh}{\code{y3}}
\newcommand{\yF}{\code{y4}}
\newcommand{\releaseexpS}[2]{\key{rls} \,#1\,#2}
\newcommand{\letinS}[2]{\key{let} \;#1\,\key=\,#2...}
\newcommand{\retS}[1]{\key{ret} \;#1}
\newcommand{\varfut}{vr-ft}
\newcommand{\varres}{vr-rs}
\newcommand{\varother}{vr-oth}
\newcommand{\veval}{val}
\newcommand{\expawait}{e-awt}
\newcommand{\expcall}{e-cl}
\newcommand{\exprls}{e-rls}
\newcommand{\exphold}{e-hld}
\newcommand{\expop}{e-op}
\newcommand{\expchoiceleft}{e-ch-l}
\newcommand{\expchoiceright}{e-ch-r}
\newcommand{\expletred}{e-let-red}
\newcommand{\explet}{e-let}
\newcommand{\cswap}{$\Pcong$-swap}
\newcommand{\cactivate}{$\Pcong$-act}
\newcommand{\cyield}{$\Pcong$-yld}
\newcommand{\procget}{get}
\newcommand{\procrestart}{restart}
\newcommand{\procsuspend}{suspend}
\newcommand{\prochold}{hold}
\newcommand{\procrls}{rls}
\newcommand{\procreturn}{return}
\newcommand{\procspawn}{spawn}
\newcommand{\proccall}{call}
\newcommand{\procsilent}{silent}
\newcommand{\proclpar}{par-l}
\newcommand{\procrpar}{par-r}
\newcommand{\proccong}{p-cong} 
\newcommand{\tyve}{T-ve}
\newcommand{\tyexp}{T-exp}
\newcommand{\tyemptyenv}{T-env-empty}
\newcommand{\tyenv}{T-env}
\newcommand{\tyconf}{T-conf}
\newcommand{\tyvar}{T-var}
\newcommand{\tyval}{T-val}
\newcommand{\tyvalF}{T-val-f}
\newcommand{\tyvalR}{T-val-r}
\newcommand{\tyvarF}{T-var-f}
\newcommand{\tyvarR}{T-var-r}
\newcommand{\tyawait}{T-awt}
\newcommand{\tycall}{T-call}
\newcommand{\tyrls}{T-rls}
\newcommand{\tyhold}{T-hold}
\newcommand{\tyop}{T-op}
\newcommand{\tychoice}{T-ch}
\newcommand{\tylet}{T-let}
\newcommand{\tyreturn}{T-ret}
\newcommand{\tymsg}{T-msg}
\newcommand{\tythread}{T-thrd}
\newcommand{\typarallel}{T-par}
\newcommand{\tyidle}{T-idle}
\newcommand{\tydone}{T-done}
\newcommand{\tyresult}{T-res}
\newcommand{\tylblresult}{T-lab-res}
\newcommand{\tylblcall}{T-lab-call}
\newcommand{\tylblsilent}{T-lab-silent}
\newcommand{\tylblrls}{T-lab-rls}
\newcommand{\tylblhold}{T-lab-hold}
\newcommand{\NN}{\mathsf{Nat}_\infty}
\newcommand{\NNeq}{\NN^{=}}
\newcommand{\NNleq}{\NN^{\leq}}
\newcommand{\N}{\mathbb{N}}
\newcommand{\Lin}{\mathsf{Lin}}
\newcommand{\LinSet}{|\Lin|}
\newcommand{\Lev}{\mathsf{Lev}}
\newcommand{\LevSet}{|\Lev|}
\newcommand{\Priv}{\mathsf{priv}}
\newcommand{\Pub}{\mathsf{pub}}
\selectfont\color{olive},
\newcommand{\code}[1]{\texttt{#1}}
\newcommand{\onelettername}[1]{#1\aftergroup\@gobble}
\newcommand{\futassoc}[3][]{#2 :\ifblank{#1}{}{_{#1}} #3}
\newcommand{\fmarklab}{\mu}
\newcommand{\fmark}{\mathbf{\chi}}
\newcommand{\emptyctx}{\emptyset}
\newcommand{\ConfS}[2]{#1\,||\,#2}       
\newcommand{\parallelcompS}[2]{#1 {\parP}#2}
\newcommand{\NamedRuleOL}[3]{{\scriptstyle{\refRule{#1}}}\ #2\ #3}
\begin{document}
\title{Fair Termination for Resource-Aware \\Active Objects}
%
%
\author{Francesco Dagnino\inst{1}\orcidlink{0000-0003-3599-3535} \and
Paola Giannini\inst{2}\orcidlink{0000-0003-2239-9529} \and
{Violet Ka I} Pun\inst{3}\orcidlink{0000-0002-8763-5548}\and \\
Ulises Torrella\inst{3}\orcidlink{0009-0009-4430-5774}}
\authorrunning{F. Dagnino et al.}
%
\institute{DIBRIS, Universit\`a di Genova, Italy
 \email{francesco.dagnino@dibris.unige.it}
 \and
DiSSTE, Universit\`a del Piemonte Orientale, Italy\\
\email{paola.giannini@uniupo.it}
\and
Western Norway University of Applied Sciences, Norway\\
\email{\{violet.ka.i.pun,unto\}@hvl.no}}
\maketitle              
\vspace{-10pt}
\begin{abstract}
Active object systems are a model of distributed computation that has been adopted for modelling distributed systems and business process workflows. This field of modelling is, in essence, concurrent and resource-aware, motivating the development of resource-aware formalisations on the active object model. 
The contributions of this work are the development of a core calculus for resource-aware active objects together with a type system ensuring that well-typed programs are fairly terminating, i.e., 
they can always eventually terminate.
To achieve this, we combine techniques from graded semantics and type systems, which are quite well understood for sequential programs, with those for fair termination, which have been developed for synchronous~sessions.

\keywords{
graded types \and
active-objects \and
resource-aware \and
fair termination \and
workflows

}
\end{abstract}
%
%
%

\section{Introduction}

Active object systems~\cite{BoerSHHRDJSKFY17} are the object-oriented instantiation of the \emph{actor model}~\cite{Agha85,BakerH77}. 
They provide a useful abstraction of distributed systems with asynchronous communications, which are represented as collections of active objects (actors)\footnote{We use actors and active objects interchangeably when it is clear from the context.} interacting through asynchronous method calls and futures. 
Active object languages, such as those presented in \cite{SchaferP10,JohnsenHSSS10,BrandauerCCFJPT15},
capture this form of concurrency model by means of 
distributed multithreaded computational entities with cooperative scheduling, where each actor can handle multiple messages at a time,
by explicitly yielding control when awaiting on certain conditions to be fulfilled, for instance, a resolved future.

Active objects have been adopted for 
formalising and analysing workflow models~\cite{AliLP23,EasyRPL} by capturing the behaviour of the internal (resource-sensitive) processes of organisations.
Workflow processes handle business cases (a customer order, a service ticket, etc.) by executing a sequence of tasks, and are primarily demanded to reach case resolution, hence termination~\cite{Aalst97,Aalst98}.
\bfd Active objects directly capture the workflow management notion of \bpa active resources \epa (employees, machines, etc.), 
representing the operational power a system possesses to carry out its tasks.
As such, these are available in a fixed number and cannot be dynamically instantiated. 
However, workflow models often need to take into account also \efd 
\emph{passive/informational resources}~\cite{Muehlen04},\footnote{\bfd  In this paper, ``resources'' always refer to passive/informational resources. \efd \VPComm{another suggestion: In this paper, ``resources'' always refer to passive/informational resources.}}
which, while performing a workflow, 
move through processes, undergo transformations, can be created or destroyed, and crucially, have a \emph{limited availability} according to the specification domain.
The interplay of asynchronous message passing, cooperative multithreading and resource management makes it challenging to ensure that a system can complete its tasks and consequently be terminating.
For instance,  if a thread tries to access a resource that is not available, it remains stuck as it cannot yield control, thus preventing the whole system from successfully terminating.  

To model such systems, we propose a core calculus for resource aware active objects. 
\bfd This calculus models systems with a fixed number of active objects which manipulate dynamic (passive) resources. 
These \efd are represented as values  with a limited availability, 
described by a \emph{grade} annotation constraining their usage. 
For instance, a resource can be indicated for a quantitative use with a numerical grade, or for a mode of use with a \emph{public/private} grade, ensuring that private resources are not made public. 
Resources are local to each object and can only be shared by message passing. 
Moreover, all threads of an actor can hold and release the resources it owns, 
where getting hold of resources is \emph{blocking} if the requested amount is not available.

It is important to note that the introduction of graded resources has an impact on the synchronisation mechanism.
Indeed, typically futures can be accessed an arbitrary number of times \cite{SomayyajulaP22,NiehrenSS06,JohnsenHSSS10}. 
However, this is not the case in our setting because futures may contain graded resources and so, by copying the future, we would copy its content as well, leading to a violation of the constraint on the resource usage expressed by the grade. 
To overcome this issue, we treat futures \emph{linearly}, allowing them to be read only once. 

 We endow our calculus with 
a type system ensuring that well-typed programs are \emph{fairly terminating}~\cite{CicconeDP24}. 
This result does not forbid non-termination, but rather it guarantees that termination is always possible. 
Hence, well-typed programs, unless they systematically avoid going towards termination, which is considered an unfair behaviour, are guaranteed to terminate. 
This result guarantees several desirable properties of concurrent systems, such as {\em livelock freedom} and {\em orphan message freedom}, in addition to 
the main goal of this work, which is that resources are handled correctly, i.e., actors  do not run out of resources they need for executing their tasks. 
In this context livelock freedom means that an object waiting for a future, \bpa the result of an asynchronous method call, \epa will eventually get it and 
orphan message freedom means that a message requiring the execution of a method will eventually be read and the method executed.
To achieve this, we combine techniques from graded semantics and type systems~\cite{BrunelGMZ14,AbelB20,ChoudhuryEEW21,DalLagoG22a,BianchiniDGZ23oopsla,TorczonAAVW24}, which are quite well understood for sequential programs, with those for fair termination~\cite{CicconeP22,CicconeP22concur,CicconeDP24,DagninoP24} that have been developed for synchronous sessions. 
%

The rest of the paper is structured as follows: Section~\ref{sect:grade-monoid} introduces the algebraic preliminaries that are used in the paper and Section~\ref{sec:language} defines our core calculus.
The type system guaranteeing fair termination as well as its soundness are presented in Section~\ref{sec:typesystem}.  Finally, we discuss some related work and conclude the paper with future work in Section~\ref{sec:related.conclusion}.
Omitted rules, results and full proofs of the theorems in  Section~\ref{sec:typesystem} can be found in the Appendix.



\section{Algebraic Preliminaries: Grade Monoids and Subtraction}
\label{sect:grade-monoid}

Resource-aware semantics and type systems are often parameterised over an algebraic structure whose elements, usually dubbed \emph{grades}, model the ``modes'' of availability of resources, e.g., their number of copies, their privacy level or the degree of imprecision affecting them. 
This structure is typically a variant of ordered semirings  
\cite{BrunelGMZ14,GaboardiKOBU16,McBride16,Atkey18,AbelB20,ChoudhuryEEW21,DalLagoG22a,BianchiniDGZ23oopsla,TorczonAAVW24} 
with addition and multiplication operations for combining grades and a compatible order relation enabling their approximation. 
 
In this section, we introduce the algebraic structure of grades we use in this work, which is different from the previously mentioned algebras. 
In particular, multiplication is not needed while \emph{subtraction} is required.
The latter is a novelty of this paper in the current context, thus we describe it in more detail.
%
We start by introducing the notion of grade monoid. 

\begin{definition} \label{def:gr-mon} 
A \emph{grade monoid}  is an ordered commutative monoid 
$\GMon = \ple{\GSet,\gord,\gsum,\gzero}$
such that, for all $\gr\in\GSet$, $\gr\gord\gzero$ implies $\gr = \gzero$. 
\end{definition}

This means that, in a grade monoid $\GMon = \ple{\GSet,\gord,\gsum,\gzero}$, 
$\gsum$ is an associative and commutative binary operation on $\GSet$ with $\gzero$ as a neutral element and 
$\gord$ is a partial order on $\GSet$ making $\gsum$ monotone in both arguments. 
The last requirement ensures that there is no grade below $\gzero$. 
This is reasonable since $\gzero$ intuitively models the absence of a resource. 

\begin{example}\label{ex:gr-mon}
We show some standard examples adapted from the literature, 
e.g.,~\cite{BianchiniDGZ23oopsla}. 
\begin{enumerate}
\item\label{ex:gr-mon:nat}
The ordered monoids 
$\NNeq = \ple{\N+\{\infty\},=,+,0}$ and $\NNleq = \ple{\N+\{\infty\},\leq,+,0}$
of extended natural numbers, ordered either by equality or by the standard ordering on them, with addition and zero, are grade monoids. 
The former tracks \emph{exact usage} of resources, while the latter tracks \emph{bounded usage} of resources. 
\item\label{ex:gr-mon:lin}
The ordered monoid $\Lin = \ple{\LinSet,\le,+,0}$ where 
$\LinSet = \{0,1,\infty\}$, with $0\le \infty $ and $1\le\infty$, and the addition is given by
\vspace{-.75em}
\[ 0 + x = x + 0 = x \qquad 1 + 1 = \infty \qquad \infty + x = x + \infty = \infty \]

\vspace{-.75em}
for all $x\in\LinSet$, is a grade monoid. 
It tracks resources that are either not used ($0$), or used exactly once ($1$) or in an unrestricted way ($\infty$). 
 Adding $0\leq 1$ we get the affinity~grade. 
%
\item\label{ex:gr-mon:lev}
Let $\Lev = \ple{\LevSet,\gord,\lor,0}$ be a join-semilattice. 
Then, $\Lev$ is a grade monoid. 
Intuitively, elements of $\LevSet$ represent the (maximum) mode in which a resource can be used. 
For instance, if $\LevSet = \{0,\Priv,\Pub\}$ is a set of privacy levels with $0\le\Priv\le\Pub$, 
a resource with grade $\Priv$ can only be used in a private mode. 
\end{enumerate}
\end{example}

\noindent
Recall that grades $\gr\in\GSet$ such that $\gzero\gord\gr$ play a special role: 
they represent usages that can be \emph{discarded}, as they can be reduced to $\gzero$ through the approximation relation. 
For instance, in the linearity grade monoid \refItem{ex:gr-mon}{lin}, 
the grade~$\infty$ is discardable, while~$1$ is not, modelling the fact that $1$ represents resources which must be used exactly once.  
 In the affinity grade~$1$ also is discardable as it means usage at most once. 

As mentioned, 
we extend the algebraic structure of grades with a subtraction operation. 
This will be used in the operational semantics to model resource consumption in a deterministic way. 
Hence, we define it to be a partial binary operation, which we expect to be undefined when the current grade is not enough to cover the required consumption.  More precisely, we have the following definition. 

\begin{definition}\label{def:subtract-op}
Let $\GMon = \ple{\GSet,\gord,\gsum,\gzero}$ be a grade monoid. 
A \emph{subtraction} is a partial binary function 
$\pfun{\gsub}{\GSet\times\GSet}{\GSet}$ 
such that, for all $\gr,\ggr,\ggr'\in\GSet$, 
if $\ggr\gsub\gr$ is defined and $\ggr\gord\ggr'$ then 
$\ggr'\gsub\gr$ is defined and $\ggr\gsub\gr\gord\ggr'\gsub\gr$. 
\end{definition}

A subtraction operation is a partial binary function, which is monotonic in the first argument. 
Note that here monotonicity also requires that subtraction is defined on the larger argument. 

\begin{definition} \label{def:subtract-mon}
A \emph{subtractive  grade monoid} is a structure $\GMon = \ple{\GSet,\gord,\gsum,\gzero,\gsub}$, where 
\ple{\GSet,\gord,\gsum,\gzero} is a grade monoid and $\gsub$ is a subtraction operation on it such that, 
for all $\gr,\ggr,\ggr'\in\GSet$, $\gr\gsum\ggr'\gord\ggr$ if and only if $\ggr\gsub\gr$ is defined and $\ggr'\gord\ggr\gsub\gr$. 
\end{definition}

Intuitively, in a subtractive grade monoid the subtraction $\ggr\gsub\gr$ is defined whenever there is a residual grade $\ggr'$ such that $\gr\gsum\ggr'\gord\ggr$. 
When $\ggr\gsub\gr$ is defined, we have 
$\gr\gsum(\ggr\gsub\gr)\gord\ggr$. 
This means that, when it is defined, $\ggr-\gr$ is the largest grade that added to $\gr$ stays below $\ggr$. 
As a consequence, the partial function $\pfun{(\bpa\cdot\epa)\gsub\gr}{\GSet}{\GSet}$, when it is defined, behaves  as a right adjoint of the function $\fun{\gr\gsum(\bpa\cdot\epa)}{\GSet}{\GSet}$. 
Finally, notice that, from \cref{def:subtract-mon}, we get that, if $\gr\gord\ggr$, then 
$\ggr\gsub\gr$ is defined and it is discardable. 
Indeed, we have $\gr \gsum\gzero = \gr \gord \ggr$  and this implies 
$\gzero \gord \ggr\gsub\gr$. 

\begin{example} \label{ex:subtract-mon}
We show how to define subtraction operations on grade monoids of \cref{ex:gr-mon}.  
\begin{enumerate}
\item\label{ex:subtract-mon:nat}
In both grade monoids $\NNeq$ and $\NNleq$ of \refItem{ex:gr-mon}{nat}, 
the subtraction $y - x$ is defined only when $x\le y$  and in this
case  is given by
\vspace{-.75em}
\[ y - x = \sup\{ z \in \N+\{\infty\} \mid x+ z \le y \} \]

\vspace{-.75em}
Note that when $x \le y$ the set of which we take the supremum is not empty as $0$ belongs to it. 
Furthermore, this definition implies that $\infty - x = \infty$ for all $x\in\N+\{\infty\}$.
%
\item\label{ex:subtract-mon:lin}
In the grade monoid $\Lin$ of \refItem{ex:gr-mon}{lin},   the subtraction is defined by the following table
$x - 0 = x$ and $\infty - x = \infty$, for all $x\in\LinSet$, and 
$1 - 1 = 0$. 
 Similarly for the affine grade. 
Note that, even if $1 + 1 = \infty$, we have that $1\ne (1+1)-1 = \infty - 1 = \infty$, that is, even if we have a subtraction, the monoid does not need to have inverses. 
%
\item\label{ex:subtract-mon:lev}
In the grade monoid $\Lev$ of \refItem{ex:gr-mon}{lev}, 
observe that, if $x \le y$, we have $x \lor y = y$ and so $y - x$ must be defined and should be equal to $y$. 
On the other hand, 
if $y-x$ is defined, then $x \lor (y-x) \le y$, which implies that $x \le y$. 
Therefore, we have that $y - x $ is defined if and only if $x \le y$ and in this case we have $y - x = y$. 
\end{enumerate}
\end{example}
Other useful properties of subtractive grade monoids can be found  in \cref{app:background}.
\PAComm{For arxiv: ``Other useful properties of subtractive grade monoids can be found  in \cref{app:background}.''}


\begin{figure}[!b]
\begin{math}
    \begin{array}[h]{l@{\hspace{2ex}}l}
        \begin{array}[t]{rcll}
            \expression & \define &  \callexp{\actorid}{\methid}{\many{\valueexpression}} \mid \awaitexp{\valueexpression} \\
			& \mid & \holdexp{\gradeid}{\resid} \mid \releaseexp{\gradeid}{\valueexpression} \\ 
			& \mid & \opexp{\many{\valueexpression}} \mid \nondetchoic{\expression_1}{\expression_2}  \\
			& \mid & \return{\valueexpression} \mid \letexp{\var}{\expression_1}{\expression_2}\\
            \valueexpression & \define & \gradedvar{\var}{\gradeid} \mid \var \mid \val \\ 
            \val & \define &  \res \mid \futid \mid \unit\\
        \end{array} 
        & 
        \begin{array}[t]{rcll}
            \activationrecord & \define & \many{ \assocval\var\val} \\
            \processid, \pprocessid & \define & \activethread{\activationrecord}{\expression}{\futid}{\actorid} \mid \idleactor{\actorid} \mid \queuedthread{\activationrecord}{\expression}{\futid}{\actorid}  \\ 
                & \mid & \callmsg{\futid}{\actorid}{\methid}{\many {\val}} \mid \placefut{\futid}{\val} \mid \parallelcomp{\processid}{\pprocessid} \mid \pdone\\ 
            \resenv & \define & \many\res \\ 
           \actorcontext, \aactorcontext, \aaactorcontext & \define & \many{\actorid: \resenv} &\\
            \config & \define & \Conf{\actorenvironment}{\processid}  &\\
        \end{array} 
    \end{array} 
  \end{math}
\setlength\belowcaptionskip{-15pt}
\caption{Syntax where \bpa $\gr$ and $\resid$ are grades and resource identifiers, respectively\epa}
\label{fig:syntax} \label{fig:runtimesyntax}
\end{figure}

\section{A Calculus of Resource-Aware Active Objects}
\label{sec:language}
In this section we introduce the syntax and operational semantics of
our core calculus, which are based on active object languages~\cite{BoerSHHRDJSKFY17}.
It models systems~with a fixed number of objects communicating through asynchronous method calls and synchronising over futures. 
Active objects are multithreaded, thus, 
they can handle multiple messages at a time, but with a cooperative scheduling, i.e., only one active thread is allowed and it has to explicitly yield control to the others. 

The distinctive feature of our calculus is that it models \emph{resources} needed by objects for carrying out their tasks. 
Resources are located within objects and have a \emph{limited availability}. 
Each thread of an object can hold and release a certain amount of a resource, and crucially, is not allowed to hold a resource that is not available in the object.
Therefore, we abstractly represent resources as constants decorated by a \emph{grade}, which specifies their availability. 
In the following we fix a subtractive grade monoid $\GMon = \ple{\GSet,\gord,\gsum,\gzero,\gsub}$.
The syntax of the language is shown in \cref{fig:syntax}. 
With $\actorid$ and $\methid$ we denote actor and method names, with $\var, \vvar$ variable names, 
 and $\resid$ and $\gradeid$ are respectively resource names and grades. Overbar
 denotes a list of elements, e.g., $\many{\ve}$ stands for $\ve_1,\dots \ve_n$ for $n\geq 0$.
Actors have associated methods and resources. 
The methods are abstractly modelled by the function
\vspace{-.75em}
\begin{quoting}
  \centering
\begin{math} 
  \mbody{\actorid}{\methid} = \MBody{\many\var}{\e}
\end{math}
\end{quoting}
%
which gives the parameters and body of method~$\methid$ of actor~$\actorid$. 
Similarly, resources owned by each actor are specified by a mapping,  dubbed {\em actor context}, denoted as~$\actorcontext$,~$\aactorcontext$ and~$\aaactorcontext$,  
associating with each actor $\actorid$ its resource environment~$\resenv$, which is a sequence of graded resources with unique names, 
i.e., it is a finite partial function from resources to their grades. 

The syntax of {\em expressions} is fine-grained~\cite{LevyPT03} and it is parameterised over a signature $\Sigop$ of primitive operations, ranged over by $\key{op}$, for creating, transforming and destroying resources. 
{\em Value expressions} $\ve$ are either (graded) variables or values, which can be 
graded resources, futures or $\unit$. 
Expression $ \callexp{\actorid}{\methid}{\many{\valueexpression}}$ is the request of the asynchronous execution of the method $\methid$ 
of actor $\actorid$ on parameters $\many {\ve}$. This expression will return a (new) future, that can be  used to wait for the result of the call through the expression $\awaitexp{\valueexpression}$. 
Resources can be added to or taken from the resource environment of the actor running the method by \key{release} and \key{hold}
expressions, specifying the resource name and the amount (the grade) to be transferred. 
The expression $\opexp{\many\ve}$ denotes a call to the primitive operation $\key{op}$ of $\Sigop$. 
The construct  $\nondetchoic{\e_1}{\e_2}$ 
models a non-deterministic choice between the execution of $\e_1$ and $\e_2$. 
Expression $\return{\valueexpression}$ ends the execution and returns
the value of $\ve$,
and finally the \key{let} construct binds the variable $\var$ to the value of the expression~$\e_1$
in the evaluation of~$\e_2$. 
\begin{example}\label{ex:barista}

\begin{figure}[b]
  \centering
  \begin{subfigure}{0.47\linewidth}
\begin{lstlisting}
(*\label{lst:sig-op-init}*)makeCoffee(x: Order(*$^1$*), y: CleanCup(*$^1$*))
          : Coffee(*$^1$*)
washCup(x: DirtyCup(*$^1$*)): CleanCup(*$^1$*)
(*\label{lst:sig-op-end}*)drink(x: Coffee(*$^1$*)): DirtyCup(*$^1$*)

Customer {
 main(): Unit {
(*\label{lst:sc.c.order}*)  let f1 = Barista!takeOrder(order)in 
(*\label{lst:sc.c.finishOrder}*)  let x = f1? in
(*\label{lst:sc.c.gotCoffee}*)  let f2 = Counter!pickup() in
(*\label{lst:sc.c.coffeeInHand}*)  let x1 = f2? in
(*\label{lst:sc.c.drinkCoffee}*)  let x2 = drink(x1(*$^1$*)) in 
(*\label{lst:sc.c.clean}*)  let f3 = Barista!clean(x2(*$^1$*)) in f3?;
(*\label{lst:sc.c.ndchoice.1}*)  (return unit
   (*$\ch$*)
(*\label{lst:sc.c.ndchoice.2}*)   y=Customer!main();y?;return unit } 
}
\end{lstlisting}
  \end{subfigure}
\hfill
  \begin{subfigure}{0.46\linewidth}
\begin{lstlisting}[firstnumber=18]
Counter {  
 place(c : Coffee): Unit {
(*\label{lst:sc.ct.placeCoffee}*)  release 1 c(*$^1$*);return unit }
 pickup(): Coffee(*$^1$*) {
(*\label{lst:sc.ct.getCoffee}*)  let x=hold 1 Coffee in return x(*$^1$*) }
}

Barista {
 takeOrder (o : Order(*$^1$*)): Unit {
(*\label{lst:sc.b.getCup}*)  let cc = hold 1 CleanCup in 
(*\label{lst:sc.b.makeCoffee}*)  let c = makeCoffee(o(*$^1$*),cc(*$^1$*)) in
(*\label{lst:sc.b.putCoffee}*)  let f4 = Counter!place(c(*$^1$*)) in 
(*\label{lst:sc.b.takeOrder.return}*)  f4?; return unit } 
 clean(dc : DirtyCup(*$^1$*)): Unit {
(*\label{lst:sc.b.washCup}*)  let cc = washCup(dc(*$^1$*)) in
(*\label{lst:sc.t.releaseCup}*)  release 1 cc;return unit } 
}
\end{lstlisting}
\end{subfigure}
\vspace{-10pt}
  \setlength\belowcaptionskip{-15pt}
\caption{\label{fig:ex.simplecafe}A workflow example of a cafe}  
\end{figure}
\cref{fig:ex.simplecafe} shows a simple workflow of a cafe modelled in our calculus, involving three active objects $\code{Customer}$, $\code{Barista}$ and $\code{Counter}$, and three types of resources, namely, clean cups, dirty cups and coffee.
For simplicity, we use~``;'' instead of let-constructs for sequential composition wherever it is obvious.
The customer starts the workflow by (asynchronously) ordering a coffee from the barista (Line~\ref{lst:sc.c.order}) and then waits.
The latter, upon receiving the order, grabs a clean cup (Line~\ref{lst:sc.b.getCup}) to make a coffee (Line~\ref{lst:sc.b.makeCoffee}).
After that, the barista places it 
on the counter (Line~\ref{lst:sc.b.putCoffee}),
and notifies the customer 
(Line~\ref{lst:sc.b.takeOrder.return}). 
The awaiting customer now picks up the coffee from the counter and drinks it (Lines~\mbox{\ref{lst:sc.c.gotCoffee}--\ref{lst:sc.c.drinkCoffee}}), which subsequently produces a dirty cup, 
which s/he then asks the barista to clean (Line~\ref{lst:sc.c.clean}).
Upon request, the barista washes the dirty cup and produces a clean cup which is now available for the next coffee (Line~\ref{lst:sc.t.releaseCup}).
When the cup is cleaned, the customer can non-deterministically choose
to leave the cafe or order another coffee by repeating the workflow
(Lines~\ref{lst:sc.c.ndchoice.1}--\ref{lst:sc.c.ndchoice.2}).

The manipulation of resources is captured by the operations
$\code{makeCoffee}$, $\code{drink}$  and $\code{washCup}$
\bpa (whose signatures are given in Lines~\ref{lst:sig-op-init}--\ref{lst:sig-op-end} and used in  Lines~\ref{lst:sc.c.drinkCoffee}, \ref{lst:sc.b.makeCoffee}~and~\ref{lst:sc.b.washCup}) \epa as well as
the \key{hold} and \key{release} statements on
Lines~\ref{lst:sc.ct.placeCoffee},~\ref{lst:sc.ct.getCoffee},~\ref{lst:sc.b.getCup},
and~\ref{lst:sc.t.releaseCup}.
The barista uses $\code{makeCoffee}$, which takes an order and a clean cup, to produce a coffee, and adopts $\code{washCup}$ to convert a dirty cup into a clean one.
%
Getting hold of resources, \emph{one} coffee and \emph{one} clean cup, is modelled by Lines~\ref{lst:sc.ct.getCoffee} and~\ref{lst:sc.b.getCup}, respectively, while the release of these two resources is captured by Lines~\ref{lst:sc.ct.placeCoffee} and~\ref{lst:sc.t.releaseCup}.
Note that the resource quantity to be held or released is explicitly specified, and 
the action of holding resources is only possible if the number of available resources can fulfil the requested amount.
Observe that the program is \emph{resource safe} 
only if the barista starts with at least one clean cup.
With the non-deterministic choice in Lines~\ref{lst:sc.c.ndchoice.1}--\ref{lst:sc.c.ndchoice.2}, non-termination is possible in this workflow model, and it is fairly terminating if it is resource safe.







\end{example}

{\em System configurations~$\config$} consist of the association between actors and their resources~$\actorenvironment$, and a process~$\processid$, which is 
a parallel composition of threads and messages. 
A \emph{running} actor comprises precisely one {\em active thread} $\activethread{\activationrecord}{\expression}{\futid}{\actorid}$ and
any number of {\em suspended threads} $\queuedthread{\activationrecord}{\expression}{\futid}{\actorid}$,
where~$\activationrecord$ is a local environment mapping variables to values,~$\expression$ an expression to be executed and~$\futid$ a future that hold the final result.  
An actor can also be \emph{idle}, signaled by the presence of $\idleactor{\actorid}$. 

There are two kinds of messages: 
$\callmsg{\futid}{\actorid}{\methid}{\many {\val}}$ denotes a call to method~$\methid$ of actor~$\actorid$ with parameters $\many\val$ and expects the final result on future~$\futid$; 
conversely, $\placefut\futid\val$ represents a completed computation
whose value~$\val$ is stored in future~$\futid$. 
Threads and messages are composed by the
 operator $\parP$ which is assumed to be associative and to have $\pdone$ as neutral element.
However, $\parP$ is \emph{not} 
commutative as futures impose a dependency between thread and messages, preventing us from freely swapping them, which will be clarified in the definition of the structural precongruence (see \cref{fig:processeval}). 

In the following we assume that configurations are well formed, meaning that 
for any actor appearing in a configuration, it either has exactly one active thread or it is idle.
This assumption ensures that each multithreaded actors, i.e., active objects, can only execute one task at a time.
Our operational semantics presented below enforces that one thread can
pass control to another only in specific situations.
See \cref{ex:reduction} for examples of configurations. 
%

We now introduce the  {\em operational semantics} of our active object based calculus  by presenting the most significant rules.
The rest of the rules are given in \cref{app:language}.
We first define the semantics of (value) expressions. 
This is a resource-aware semantics in the style of \cite{ChoudhuryEEW21,BianchiniDGZ23ecoop,BianchiniDGZ23oopsla,TorczonAAVW24}. 
Notably, the evaluation takes place within a local environment and, 
instead of performing substitutions, variables are replaced one at a time when needed, consuming their corresponding value in the local environment. 
This behaviour is realised by the relation 
$\veeval\activationrecord\ve \vestep \veeval{\activationrecord'}{\val}$,
which reduces a value expression to a value, possibly modifying the local environment. 
The main rule is
\begin{center}
$
\NamedRule{\varres}{}
{ 
   \veeval{\activationrecord, \assocval{\var}{\gradedr{\resid}{\ggradeid}}
 }
  {\gradedvar{\var}{\gradeid}}
            \vestep
                \veeval{
                    \activationrecord, \assocval{\var}{\gradedr{\resid}{\ggradeid \gsub \gradeid}}
                }
                {\gradedr{\resid}{\gradeid}}
} 
{}
$
\end{center}
\noindent
which reduces a graded variable, with grade~$\gradeid$, to a graded resource $\gradedr\resid\gradeid$, provided that the variable is associated with $\gradedr\resid\ggradeid$ in the local environment and the subtraction $\ggradeid\gsub\gradeid$ is defined; 
the local environment is then updated, mapping the variable to the resource $\gradedr\resid{\ggradeid\gsub\gradeid}$. 
In this way we model resource consumption: 
the reduction is \textit{stuck} if the current amount~$\ggradeid$ of the resource is not enough to satisfy the required amount~$\gradeid$ (i.e., $\ggradeid\gsub\gradeid$ is undefined); otherwise, the amount~$\gradeid$ is consumed, leaving only  the difference. 
Compared to the literature, here resource consumption is \emph{deterministic} thanks to the use of the subtraction operation.
This allows us to prove a standard subject reduction theorem
(\cref{lem:srve} in \cref{apx:proof}) which in general does not hold for resource-aware semantics.

\begin{figure}[t]
\begin{footnotesize}
\begin{math}
\begin{array}{c}
\NamedRule{\expcall}
                {
                    \veeval{\activationrecord_i}{\valueexpression_i}
                    \vestep
                    \veeval{\activationrecord_{i+1}}{\val_i}\quad \forall i\in 1..n
                }
                {
                    \expeval{\activationrecord_1}{\callexp{\actorid}{\methid}{\many{\valueexpression}}}
                    \expstep{\callmsg{\futid}{\actorid}{\methid}{\many\val}}
                    \expeval{\activationrecord_{n+1}}{\return{\futid}}
                }
                {
                 \fresh(\futid)
                 \ \ 
                  }
 \NamedRule{\expawait}
                        {
                            \veeval{\activationrecord}{\valueexpression}
                            \vestep
                            \veeval{\activationrecord'}{\futid}
                        }
                        {
                            \expeval{\activationrecord}{\awaitexp{\valueexpression}}
                            \expstep[\placefut{\futid}{\val}]{}
                            \expeval{\activationrecord'}{\return{\val}}
                        } {}  
 \\[4ex]
 \NamedRule{\exprls}      
                    {
                        \veeval{\activationrecord}{\valueexpression}
                        \vestep
                        \veeval{\activationrecord'}{\gradedr{\resid}{\ggradeid}}
                    }
                    {
                        \expeval{\activationrecord}{\releaseexp{\gradeid}{\valueexpression}}
                        \expstep{\rlsmsg{\gradedr{\resid}{\gradeid}}}
                        \expeval{\activationrecord'}{\return{\unit}}
                    }

 {\gradeid{\gord}\ggradeid}\ \ 
\NamedRule{\exphold}
                   {}
                    {
                        \expeval{\activationrecord}{\holdexp{\gradeid}{\resid}}
                        \expstep[\holdmsg{{\gradedr{\resid}{\gradeid}}}]{}
                        \expeval{\activationrecord}{\return{\gradedr{\resid}{\gradeid}}}
                    }{}
\\[4ex]
\bpa\NamedRule{\explet}
                        {\veeval{\activationrecord}{\valueexpression}
                            \vestep
                            \veeval{\activationrecord'}{\val}
                        }
                        {
                            \expeval{\activationrecord}{\letexp{\var}{{\return{\ve}}}{\expression_2}}
                            \expstep{\emptylabel}
                            \expeval{\activationrecord',\assocval{\vvar}{\val}}{\Subst{\expression_2}{\var}{\vvar}}
                        }
          { \fresh(\vvar)                }\epa
          \end{array}
\end{math}
\end{footnotesize}
\setlength\belowcaptionskip{-15pt}
\caption{Evaluation of  expressions}    \label{fig:expeval}\label{fig:rulesve}
\end{figure}

\paragraph{Evaluation of expressions.}
The semantics of expressions is described by a labelled reduction relation 
$\expeval\activationrecord\expression \expstep[\labelid_i]{\labelid_o} \expeval{\activationrecord'}{\expression'}$, 
whose most representative rules are given in  \cref{fig:expeval}. 
Labels record how an expression interacts with the external environment, 
and are defined as follows:
\begin{quoting}
  \centering
\begin{math} 
 \labelid  \define  \holdmsg{\gradedr{\resid}{\gradeid}} \mid \rlsmsg{\gradedr{\resid}{\gradeid}} \mid \callmsg{\futid}{\actorid}{\methid}{\many{\val}} \mid 
            \placefut{\futid}{\val} \mid  
            \emptylabel
\end{math}
\end{quoting}
The first two labels state how the resources are taken from or released to the current
actor context: 
by $\holdmsg\res$ the expression requests from the environment the resource $\resid$ with grade $\gradeid$, 
while by $\rlsmsg\res$ it moves to the environment the resource $\resid$ with grade $\gradeid$. 
Communications between actors are enabled by the two following labels,
$\callmsg{\futid}{\actorid}{\methid}{\many{\val}}$ and
$\placefut{\futid}{\val}$, which are interpreted as the two kinds of
messages of the same shape in the configurations described above. 
Finally, the label $\emptylabel$ denotes the absence of interaction. 

Labels are divided into two classes:
$\rlsmsg\res$ and $\callmsg{\futid}{\actorid}{\methid}{\many{\val}} $
refer to outputs that the expression produces  towards the environment, while 
$\holdmsg\res$ and $\placefut\futid\val$ represent inputs that the expression reads from it. 
To highlight this semantic difference, 
we write labels of the former class above the arrow denoting the reduction relation, while those of the latter class will appear below it. 
The label~$\emptylabel$ can appear in both places, but we will often omit it for clarity. 

Rule \refRule\expcall evaluates the parameters, produces the corresponding label and returns a fresh future. 
Note that here we do not use the $\aux{mbody}$ function, i.e., we do not access the method table. 
In this way, this expression is never stuck modelling the fact that calling a method corresponds to sending a message which will be asynchronously read. 
Rule \refRule\expawait evaluates the value expression to a future and returns the corresponding value reported in the label. 
Rule \refRule\exprls evaluates the value expression to a graded resource with grade~$\ggradeid$, 
and checks if~$\ggradeid$ suffices to cover the amount to be released and, in this case, it produces the corresponding label and returns~$\unit$.
Rule \refRule\exphold returns the graded resource reported in the label.
\bpa Rule \refRule\explet adds the local variable $\var$  to the environment, modulo renaming with a fresh variable to avoid clashes. 
No interaction with the external environment is involved. \epa
%

\begin{figure}[!t]
\begin{small}
\begin{math}
\begin{array}{ll}
            \refRule\cswap 
&
            \processid \parP \pprocessid 
   \Cong 
            \pprocessid \parP \processid 
\qquad\qquad
            \text{if}\quad \fp\processid\cap\fr\pprocessid = \emptyset\text{ and } \fp\pprocessid\cap\fr\processid=\emptyset 
            \\[1ex]
            \refRule\cactivate
            &  
            \parallelcomp{\idleactor{\actorid}}{
                \queuedthread{\activationrecord}{\expression}{\futid}{\actorid}
            }
            \Pcong 
            \activethread{\activationrecord}{\expression}{\futid}{\actorid}
            \\[1ex]
            \refRule\cyield
            &  
            \activethread{\activationrecord}{\expression}{\futid}{\actorid}
           \Pcong 
            \parallelcomp{\idleactor{\actorid}}{
                \queuedthread{\activationrecord}{\expression}{\futid}{\actorid}
            } 
\qquad\qquad
            \text{if}\quad \expeval\activationrecord\expression \expstep[\placefut{\futid'}\val]{}\\[2ex]
\end{array} 
\end{math}
\hrule\vspace{1ex} 

\begin{math}
\begin{array}{c}
 \NamedRule{\proccall}
            {
                \expeval{\activationrecord}{\expression}
                \expstep{\callmsg{\futid}{\actorid}{\methid}{\many{\val}}}
                \expeval{\activationrecord'}{\expression'}
            }
            {
                \Conf{\actorenvironment}
                {
                    \activethread{\activationrecord}{\expression}{\futid'}{\aactorid}
                }
                \semanticstep
                \Conf{\actorenvironment}
                {
                    \parallelcomp{\callmsg{\futid}{\actorid}{\methid}{\many{\val}}}{
                        \activethread{\activationrecord'}{\expression'}{\futid'}{\aactorid}
                    }
                }
            }{}
\\[4ex]
\NamedRule{\procspawn}
            {}
            {
                \Conf{
                    {\actorenvironment}
                }{
                  \parallelcomp{\callmsg{\futid}{\actorid}{\methid}{\many{\val}}}{\idleactor{\actorid}}
                }
                
                \semanticstep
               \Conf{\actorenvironment}{\activethread{\many{\var\mapsto\val}}{\expression}{\futid}{\actorid}}
            }
            {
             \quad \mbody{\actorid}{\methid} = \MBody{\many\var}{\expression} 
            }
  \\[2ex]
\NamedRule{\procget}
            {
                \expeval{\activationrecord}{\expression}
                \expstep[\placefut{\futid}{\val}]{}
                \expeval{\activationrecord'}{\expression'}
            }
            {
                \Conf{\actorenvironment}{
                \parallelcomp{
                    \placefut{\futid}{\val}}{
                        \activethread{\activationrecord}{\expression}{\futid'}{\actorid}}}
                \semanticstep 
                \Conf{
                    \actorenvironment
                }{
                    \activethread{\activationrecord'}{\expression'}{\futid'}{\actorid}
                }
            }{}
 \\[3ex]
\NamedRule{\procrls}
            {
                \expeval{\activationrecord}{\expression}
                \expstep{\rlsmsg{\gradedr{\resid}{\gradeid}}}
                \expeval{\activationrecord'}{\expression'}
            }
            {
                \Conf
                {
                    \actorenvironment, a: \resenv, \gradedr{\resid}{\ggradeid}
                }
                {
                    \activethread{\activationrecord}{\expression}{\futid}{\actorid}
                }
                \semanticstep
                \Conf
                {
                    \actorenvironment, a: \resenv, \gradedr{\resid}{\ggradeid + \gradeid}
                }
                {
                    \activethread{\activationrecord'}{\expression'}{\futid}{\actorid}
                }
            }{}
\\[3ex]
\NamedRule{\prochold}
            {
                \expeval{\activationrecord}{\expression}
                \expstep[\holdmsg{\gradedr{\resid}{\gradeid}}]{}
                \expeval{\activationrecord'}{\expression'}
            }
            {
                \Conf
                {
                    \actorenvironment, \actorid: \resenv, \gradedr{\resid}{\ggradeid}
                }
                {
                    \activethread{\activationrecord}{\expression}{\futid}{\actorid}
                }
                \semanticstep
                \Conf
                {
                    \actorenvironment, \actorid: \resenv, \gradedr{\resid}{\ggradeid \gsub \gradeid}
                }
                {
                    \activethread{\activationrecord'}{\expression'}{\futid}{\actorid}
                }
            }{}
\end{array}
\end{math}
\end{small}
\setlength\belowcaptionskip{-10pt}
\caption{The precongruence on processes and the reduction on configurations.}    \label{fig:processeval}
\end{figure}

\paragraph{Precongruence and Reduction on configurations.}
The main semantic relation is a reduction 
$\config\red\config'$ on configurations, whose rules are given in the bottom section of \cref{fig:processeval}. 
As it is customary in process calculi, the definition of the reduction relies on a precongruence relation on processes $\processid\Pcong\pprocessid$, generated by the clauses in the top section of \cref{fig:processeval}. 
We write $\processid\Cong\pprocessid$ when both $\processid\Pcong\pprocessid$ and $\pprocessid\Pcong\processid$ hold. 
This relation accounts for bureaucratic rearrangements of threads and messages within a process. 
As mentioned, 
the $\parP$ operator is assumed to be associative, but it is not commutative.
The precongruence relation specifies in which case we can swap two processes in a parallel composition. 
To this end, we consider, for every process $\processid$, the sets 
$\fp\processid$ and $\fr\processid$ of futures \emph{produced} and \emph{consumed}, respectively, which, intuitively, represent the input and output  ``channels'' through which $\processid$ can be connected to other processes.
For a thread $\fp\processid$  is the future specified as  superscript and for a message is the
target of the arrow. For a thread $\fr\processid$  is the set of futures occurring in the expression in its body plus the ones in the  values of the local environment and
for a message is  the set of futures occurring in values which are parameters or result of the message.
For parallel composition we take the union of the set of the  futures consumed by the process on the left and the ones
of the process on the right from which we first remove the future produced by the process on the left, which are not consumed
by the  parallel composition. The formal definition can be found in \cref{app:language}.

%
Let us now consider the rules for $\Pcong$.  Clause \refRule\cswap states that we can swap two processes only if they are disconnected, i.e., they do not depend on each other. 
The other two clauses instead regulate the (de)activation of threads of a certain actor:
\refRule\cactivate enables the activation of any thread of an actor, provided that the actor is idle, 
while \refRule\cyield states that an actor can suspend the execution
of the current thread to become idle only if it is awaiting on a future. 
This last rule makes the relation $\Pcong$ only a precongruence and not a congruence: 
the clause \refRule\cactivate can be inverted only if the thread is awaiting on a future (by \refRule\cyield),  in all other cases the execution of a thread cannot be suspended. 

%

Rules \refRule\proccall, \refRule\procspawn and \refRule\procget realize the asynchronous communication mechanism of our calculus. 
When an active thread calls a method, Rule \refRule\proccall adds the call to the parallel composition on the \emph{left} of the active thread.
If an idle actor has a pending call to itself, it can start a new thread executing the method body by Rule \refRule\procspawn. Note that if the actor does not have the called method, i.e., $\mbody\actorid\methid$ is undefined, this rule is not applicable and the call will never be executed. 
If an active thread is awaiting on a fulfilled future,  
it can retrieve the value from the future and remove the latter from the parallel composition (Rule \refRule\procget). 
Observe that both the new pending call in Rule \refRule\proccall and the fulfilled future in Rule \refRule\procget are on the \emph{left} of the active thread, so that the structure of produced and consumed futures is respected. 
Resource management is modelled by Rules \refRule\procrls and \refRule\prochold. 
The former allows an active thread to add to the actor's resource environment resources at a certain grade, while the latter allows an active thread to move resources from the actor's resource environment to its local environment at a certain grade. 
Note that Rule \refRule\prochold is applicable only if the actor resource environment has enough resources to satisfy the request, i.e., the subtraction $\ggr\gsub\gr$ is defined. 
%
The precongruence on processes and the reduction are  propagated to parallel composition in the obvious way. 

\begin{example}\label{ex:reduction}
In \cref{fig:red} we show a \textit{possible} reduction for \cref{ex:barista} starting from an initial configuration,~$\config$, containing a message for the $\code{Customer}$ asking to execute
method $\code{main}$, when all the actors are idle. Moreover, the $\code{Barista}$ has a $\code{CleanCup}$,  i.e.,

\begin{quoting}
\begin{math}
\config=\ConfS{\aEnv}{\parallelcompS{\callmsg{\f}{\Customer}{\main}{}}{\parallelcompS{ \idleactor{\Barista}}{\parallelcompS{ \idleactor{\Customer}}{ \idleactor{\Counter}}}}}
\quad\text{  where  }\quad\aEnv={\Barista}:{\gradedr\CleanCup{1}}
\end{math}
\end{quoting}
where names are shortened taking their initials, among other natural abbreviations.
In the reductions we underline the processes that are reduced modulo precongruence.

In Line 1, the $\code{Barista}$, who is idle, answers to the pending request of the execution of  $\code{takeOrder}$ by starting an active thread containing the body of $\code{takeOrder}$.
The configuration in Line 2 is produced by the evaluation of expression $\code{hold}$ of a $\code{CleanCup}$ by the running thread of the $\code{Barista}$, which subtract a 
$\code{CleanCup}$ from his resources and returns it to the expression, whose evaluation, in the next step, places it in the local environment of the thread. 
The reduction in Line 3 shows how the evaluation of a method call consumes the resources on $\activationrecord_1$ required for its arguments, creates a new future $\ffF$ and adds a message with the call 
to the configuration, and finally reduces the call expression to $\ffF$.
The message is added to the left of the thread, so the future $\ffF$,  once fulfilled, can be  consumed by the thread. 
The configuration in Line 4 is produced by the evaluation of expression $\code{release}$ of a $\code{Coffee}$
by the $\code{Counter}$, which puts it in its resource environment  removing it  from its local environment, so the actor context becomes $\actorcontext_1$. 
In this example all the actors were either idle with no queued threads or running a thread. 
However, we could apply Rule \refRule\cyield of the pre-congruence  to
the active thread of $\code{Custumer}$ with  any reduction starting from Line 2 to produce  $\small{\parallelcomp{ {\queuedthread{\assocval\yO\ffO}{\letinS{\x}{\yO?}}{\f}{\Customer}} }{{ \idleactor{\Customer}}}}$ since
 $\small{\expeval{\assocval\yO\ffO}{\letinS{\x}{\yO?}} \expstep[\placefut{\ffO}\val]{}\expeval{\;}{\letinS{\x}{\retS \ffO}}}$.

\begin{figure}[t]
\begin{footnotesize}
\begin{math}
\begin{array}{l}
{\hspace{-1.5ex}}\config \red\Conf{\aEnv}{{\parallelcomp{ 
\underline{ \activethread{\;}{\letinS{\fO}{ \callexp{\Barista}{\takeorder}{\order}}}{\f}{\Customer}}  }{\parallelcomp{ \idleactor{\Barista}}{ \idleactor{\Counter}}}}}
             \\[.7ex]       
\begin{array}{ll}
{\red}&\Conf{\aEnv}{\parallelcomp{\underline{\callmsg{\ffO}{\Barista}{\takeorder}{\order} }} 
{\parallelcomp{ \activethread{\;}{\letinS{\fO}{ \retS\ffO} }{\f}{\Customer} }{\parallelcomp{ \underline{\idleactor{\Barista}}}{ \idleactor{\Counter}}}}}\\[.7ex]
{\hspace{-3ex}1. \red}&\Conf{\aEnv}{\parallelcomp{{\activethread{ \assocval\oC\order}{\letinS{\cT}{ \holdexp{\code{1}}{\Coffee}}}{\ffO}{\Barista} }}}
{\parallelcomp{ \underline{\activethread{\;}{\letinS{\fO}{\retS\ffO}}{\f}{\Customer}} }{{ \idleactor{\Counter}}}}\\[.7ex]
{\red}&\Conf{\aEnv}{\parallelcomp{
\underline{\activethread{ \assocval\oC\order}{\letinS{\cT}{ \holdexp{\code{1}}{\CleanCup}}}{\ffO}{\Barista} }}}
   {\parallelcomp{ \activethread{  \activationrecord}{\letinS{\x}{\yO?}}{\f}{\Customer} }{{ \idleactor{\Counter}}}}\\[.7ex]
{\hspace{-3ex}2. \red}&\Conf{\aEnv_0}
          {\parallelcomp {\underline{\activethread{ \assocval\oC\order}{\letinS{\cT}{ \retS\gradedr{\CleanCup}{1} } }{\ffO}{\Barista} }}
                                 {\parallelcomp{ \activethread
{  \activationrecord}{...}{\f}{\Customer} }{{ \idleactor{\Counter}}}}
                                 }\\[.7ex]
{\red}&\Conf {
                  \aEnv_0
                      }
                   { \parallelcomp 
                                     {\underline{\activethread{ \activationrecord_0 } {\letinS{\cO}{ \makeCoffee(\oC,\yT)  } } {\ffO}{\Barista}}}                          
                              {\parallelcomp{ \activethread
{  \activationrecord}{...}{\f}{\Customer} }{{ \idleactor{\Counter}}}}
                                 }
                    \\[.7ex]
{\red}&\Conf {
                \aEnv_0
                      }
                   { \parallelcomp 
                                     {\underline{\activethread{ \activationrecord_0 } {\letinS{\cO}{\retS{\gradedr\Coffee {1} }} } {\ffO}{\Barista} }}                          
                              {\parallelcomp{ \activethread
{  \activationrecord}{...}{\f}{\Customer} }{{ \idleactor{\Counter}}}}
                                 }
                    \\[.7ex]
{\red}&\Conf {
                  \aEnv_0
                      }
                   { \parallelcomp 
                                     {\underline{\activethread{ \activationrecord_1} {\letinS{\fF}{ \callexp{\Counter}{\place}{\yTh}  } } {\ffO}{\Barista}} }                          
                              {\parallelcomp{ \activethread
{  \activationrecord}{...}{\f}{\Customer} }{{ \idleactor{\Counter}}}}
                                 }
                    \\[.7ex]
{\hspace{-3ex}3. \red}&\Conf {
                  \aEnv_0
                      }
                   { \parallelcomp{\underline{\callmsg{\ffF}{\Counter}{\place}{\gradedr{\Coffee}{1}}}}
                            { \parallelcomp 
                                     {\activethread
{ \activationrecord_2} {\letinS{\fF}{ \retS\ffF } } {\ffO}{\Barista} }                          
                              {\parallelcomp{ \activethread
{  \activationrecord}{...}{\f}{\Customer} }{{ \underline{\idleactor{\Counter} } }}}
                              }
                         }
                    \\[.7ex]
{\red}&\Conf {
                  \aEnv_0
                      }
                   { \parallelcomp  {\underline{\activethread{ \assocval\cO{\gradedr{\Coffee}{1}} } {\releaseexpS{1}{\gradedr\cO{1}};..} {\ffF}{\Counter} }}
                            { \parallelcomp 
                                     {\activethread
{ \activationrecord_2} {\letinS{\fF}{ \retS\ffF } } {\ffO}{\Barista} }                          
                              { \activethread
{  \activationrecord}{...}{\f}{\Customer} }
                              }
                         }
                    \\[.7ex]
{\hspace{-3ex}4.\red}&\Conf {
                  \aEnv_1
                      }
                   { \parallelcomp  {\underline{\activethread{ \assocval\cO{\gradedr{\Coffee}{0}} } {{\retS\unitS}} {\ffF}{\Counter}} }
                            { \parallelcomp 
                                     {\activethread
{ \activationrecord_2} {\letinS{\fF}{ \retS\ffF } } {\ffO}{\Barista} }                          
                              { \activethread
{  \activationrecord}{...}{\f}{\Customer} }
                              }
                         }
                    \\[.7ex]
{\red}&\Conf {  \aEnv_1
                      }
                   { \parallelcomp  { \parallelcomp{\placefut{\ffF}{\unitS}}{ \idleactor{\Counter}} }
                            { \parallelcomp 
                                     {\underline{\activethread{ \activationrecord_2} {\letinS{\fF}{ \retS\ffF } } {\ffO}{\Barista} }}                          
                            { \activethread
{  \activationrecord}{...}{\f}{\Customer} }
                              }
                         }
                    \\[.7ex]
{\red}&\Conf {
                  \aEnv_1
                      }
                   { \parallelcomp  { \parallelcomp{\underline{\placefut{\ffF}{\unitS}}}{ \idleactor{\Counter}} }
                            { \parallelcomp 
                                     {\underline{\activethread{ \activationrecord_3} {\yF?; ..} {\ffO}{\Barista} }  }                        
                             { \activethread
{  \activationrecord}{...}{\f}{\Customer} }
                              }
                         }
                    \\[.7ex]
{\red}&\Conf {
                  \aEnv_1
                      }
                   { \parallelcomp { \idleactor{\Counter}} 
                            { \parallelcomp 
                                     {\activethread{ \activationrecord_3} {\retS\unitS} {\ffO}{\Barista} }                          
                             { \activethread
{  \activationrecord}{...}{\f}{\Customer} }
                              }
                         }
\quad {\red} \qquad ... 
                          \\[1ex]
 \end{array} \\[20ex]
  \end{array}\\
  \begin{array}{l}
\aEnv_0={\Barista}:{\gradedr\CleanCup{0}}\quad \aEnv_1={\Barista}:{\gradedr\CleanCup{0}},\Counter:\gradedr{\Coffee}{1}\qquad\quad
 \activationrecord= \assocval\yO\ffO \quad
\activationrecord_0=\assocval\oC\order,\assocval\yT{\gradedr{\CleanCup}{1} }\\
 \activationrecord_1=\activationrecord_0,\assocval\yTh{\gradedr{\Coffee}{1} }\quad
 \activationrecord_2=\activationrecord_0,\assocval\yTh{\gradedr{\Coffee}{0} }
\quad \activationrecord_3=\activationrecord_2,\assocval\yF{\ffF }
 \end{array}
\end{math}
\end{footnotesize}
\vspace{-.5em}
\setlength\belowcaptionskip{-15pt}
\caption{A reduction sequence for \cref{ex:barista}}\label{fig:red}
\end{figure}
\end{example}

\section{The Type System}
\label{sec:typesystem}

In this section we introduce a type system ensuring that well-typed configurations can always successfully complete their execution. 
To this end, beside the standard concurrency-related errors such as deadlocks, 
we also have to prevent incorrect usages of resources, 
i.e., whenever a thread tries to access a resource, it must be available with the requested amount in the resource environment of the actor running the thread. 
To achieve this, we adapt techniques from graded type systems to track resource requirements in typing judgments of processes and expressions. 
The key challenge here is that the availability of resources crucially depends on when threads are executed because, for instance, one may rely on a resource released by another one. 
Thus, the tracking of resource requirements and the future-based synchronisation mechanism should properly interact to ensure a correct resource usage.
\paragraph{Types and type contexts.} Types are then defined by the grammar:
\vspace{-.5em}
\[
\T \define \Unit \mid \gradedr\resid\gradeid \mid \FutTy{\T}{\actorcontext}
\]

\vspace{-.5em}
\noindent
where $\Unit$ is the type of $\unit$, 
$\gradedr\resid\gr$ is the type of a resource with name $\resid$ that has to be used according to the grade $\gr$, and 
$\FutTy{\T}{\actorcontext}$ is the type of a future. 
This specifies, beside the type $\T$ of the held value, 
an actor context $\actorcontext$ saying which resources will be available after synchronising on a future of this type.
Note that for simplicity resource types are just resource names, i.e., they are singleton types. 


Typing judgments rely,  in addition to the actor contexts introduced in \cref{sec:language},  on  standard variable contexts 
$\typingcontext,\ttypingcontext \define \many{\var:\T}$, 
which are finite partial maps  from variables to types,  and 
on future contexts 
$\futcontext,\ffutcontext \define \many{\futassoc[\fmarklab]\futid{\FutTy{\T}{\actorcontext}}}$, 
which are also finite partial maps from future names to their types and marking~$\fmarklab$ that is either empty or $\fmark$. 
The use of the marking will be clarified later in this section. 

As it is customary in graded type systems, we need operations for combining contexts. 
In particular, we consider the following operations and relations. 
\begin{itemize}
\item We define addition and partial order on actor contexts $\actorcontext\gsum\aactorcontext$ and $\actorcontext\gord\actorcontext$, 
by extending pointwise the addition and the partial order of the grade monoid~$\GMon$ first to resource environments~$\resenv$ 
and then to actor contexts, assuming that resources and actors not appearing in the contexts associated with~$\gzero$ and an empty resource environment, respectively. 
%
\item The sum of typing contexts  $\typingcontext=\typingcontext_1\tysum\typingcontext_2$ is defined
by constructing the union of the variables defined in one context and not in the other and 
\begin{quoting}
\begin{math}
\begin{array}{l}
\typingcontext_1(\var)=\T_1\ \wedge\ \typingcontext_2(\var)=\T_2\implies \\
\quad\quad(\T_1=\T_2=\Unit\ \wedge\ \typingcontext(\var)=\Unit)\ \vee\\
\quad\quad(\T_1=\gradedr{\resid}{\gradeid_1}\ \wedge\ \T_2=\gradedr{\resid}{\gradeid_2}\ \wedge\ \typingcontext(\var)=
\gradedr{\resid}{\gradeid_1\gsum\gradeid_2})
\end{array}
\end{math}
\end{quoting}
As we can see a variable with a future type can be defined either in $\typingcontext_1$ or $\typingcontext_2$ but not in both, i.e., futures are typed linearly.
%
\item for all contexts we will use a comma to denote their disjoint union. 
For instance, $\futcontext,\ffutcontext$ is the union of $\futcontext$ and $\ffutcontext$, which is defined only if they have disjoint domains. 
\end{itemize}
Finally, we say that the typing context {\em $\typingcontext$ is discardable}, dubbed $ \dis{\typingcontext}$, if no variable has a future type and
for variables having a resource type, $\res$,  the grade~$\gr$ must be ``discardable'', i.e., $\gzero\gord\gr$. This reflects the fact that futures are treated linearly and
that our grade monoid is not necessarily affine.


\paragraph{Typing rules for expressions.} The typing of value expressions is given by the judgement,
$\WTvexpr\futcontext{\typingcontext}{\ve}\T$ 
where $\typingcontext$ and $\futcontext$ contains variables and futures used by the value expression. The most significant rule is 
\[
  \NamedRule{\tyvarR}
  {}
  {
     \WTvexpr{\emptycontext}{\var:\gradedr\resid\ggradeid,\typingcontext}{\gradedr\var\gradeid}\res
   }  
  {\quad\gradeid \gord \ggradeid\ \wedge\  \dis{\typingcontext}}
\]
\noindent
that types a graded variable $\gradedr\var\gr$ provided that $\var$
appears in the context with a grade $\ggr$ larger than or equal to $\gr$. 
Note that in this rule (as in the others (except  for
\refRule\tyvalF) in \cref{fig:types4expr:app}) \UTComm{remove in APLAS}the future context is empty  and additional (not used) variables are allowed only
provided that they are discardable. 
The complete set of typing rules
for value expressions can be found in the upper part of
\cref{fig:types4expr:app} in \cref{app:typesystem}.
 \PAComm{For arxiv: ``The complete set of typing rules
for value expressions can be found in the upper part of
\cref{fig:types4expr:app} in \cref{app:typesystem}.''}
%


%
\begin{figure}[t]
 \begin{small}
\begin{math}
\begin{array}{c}
\NamedRule{\tycall}
  {
     \WTvexpr{\futcontext_i}{\typingcontext_i }{\ve_i}{\T_i}\quad i\in 1\dots n
  }
  {
    \WTexpr{\actorcontext\actsum\rescontext}{ \futcontext}{\typingcontext}{k+3}{\actorid}{\callexp{\aactorid}{\methid}{\many{\valueexpression}}}{  \FutTy{\T}{\aactorcontext}}{\rescontext}
  } 
  {
   \futcontext = \futcontext_1 \linearsum \dots \linearsum \futcontext_n\\
  \typingcontext = \typingcontext_1 \tysum \cdots \tysum \typingcontext_n \\
    \mtype{\aactorid}{\methid} =\MType{\actorcontext}{\many\T}{\T}{k}{\aactorcontext}}
 \\[4ex]
\NamedRule{\tyawait}
  { 
    \WTvexpr{\futcontext}{\typingcontext }{\valueexpression}{\FutTy{\T}{\aactorcontext}}
  }
   { 
   \WTexpr{\actorcontext}{ \futcontext}{\typingcontext}{1}{\actorid}{ \awaitexp{\ve}}{\T}{\actorcontext\actsum\aactorcontext}
  }
  {}          
\NamedRule{\tyrls}
    {
      \WTvexpr{\futcontext}{\typingcontext}{\ve}{\gradedr{\resid}{\ggradeid}}
    }
    {
       \WTexprS{\actorcontext}{ \futcontext}{\typingcontext}{1}{\actorid}{\releaseexp{\gradeid}{\ve}}{\Unit}{\actorid{:}\res\actsum\actorcontext}
    }
   {\gradeid {\gord} \ggradeid} 
    \\[3ex]
\NamedRuleOL{\tyhold}
  {
    \WTexprS{\actorid:\rres\actsum\actorcontext}{\emptycontext}{\emptycontext}{1}{\actorid}{ \holdexp{\gradeid}{\resid} }{\res}{\actorcontext}
  }
  {\quad\gradeid {\gord} \ggradeid}
 \\[2ex]
  \NamedRule{\tylet}
     {
         \WTexpr{\actorcontext_1}{ \futcontext_1}{\typingcontext_1}{m}{\actorid}{\e_1}{\T'}{\actorcontext_1'\actsum\actorcontext_2'}\quad\quad
         \WTexpr{\actorcontext_2\actsum\actorcontext_1'}{ \futcontext_2}{\typingcontext_2,\var:\T'}{n}{\actorid}{\e_2}{\T}{\aactorcontext_2}
  }
  {
   \WTexpr{\actorcontext_1\actsum\actorcontext_2}{ \futcontext_1,\futcontext_2}{\typingcontext_1\tysum\typingcontext_2}{1+n+m}{\actorid}
                {\letexp{\var}{\expression_1}{\expression_2}}{\T}{\actorcontext_2'\actsum\aactorcontext_2}
   }
   {}
\end{array}    
\end{math}
\end{small}
\setlength\belowcaptionskip{-15pt}
\caption{Typing rules for expressions}
\label{fig:types4expr}
\end{figure}

Expressions are typed through the judgement 
$\WTexpr{\actorcontext}{\futcontext}{\typingcontext}{n}{\actorid}{\expression}{\T}{\aactorcontext}$,
{partially} defined in \cref{fig:types4expr}. 
Besides the type of the expression and the typing and future contexts containing variables and futures used in it, 
the judgements reports an actor name $\actorid$, a natural number $n$ and two actor contexts $\actorcontext$ and $\aactorcontext$. 
The actor name $\actorid$  records the {\em actor running this expression}. 
The context $\actorcontext$, dubbed {\em required actor context},  keeps track of the resources the expression requires from each actor for being executed, 
while $\aactorcontext$, dubbed {\em produced actor context},  those it releases to the system, which could be used by subsequent expressions or by other threads having access  to the future which this expression writes its result to. 
Finally, the number $n$ provides a \emph{measure} in number of steps of the effort required to terminate the execution of the expression. 

To type method and primitive operation calls, we assume  some additional typing information.
For methods, apart from the information about parameters and body, given by $\mbody{\actorid}{\methid}$,  we assume to have their typing information
\[
\mtype{\actorid}{\methid}=\MType{\actorcontext}{\many\T}{\T}{n}{\aactorcontext}
\]

\noindent
Furthermore, we assume that when this is defined, then
\[
\mbody{\actorid}{\methid} = \MBody{\many\var}{\expression} \text{ and } 
\WTexpr{\actorcontext}{\emptycontext}{\many{\var:\T} }{n}{\actorid}{\expression}{\T}{\aactorcontext}
\]
%
that is, $\mtype\actorid\methid$ and $\mbody\actorid\methid$ are consistent. 

Rule \refRule\tycall checks that the arguments of a method call  have the types specified by $\aux{mtype}$,
and assigns to the call a future type containing the return type of the method and the actor context it promises to make available when its execution terminates. 
The resources required by the method are added to the actor context required by the call, while the produced one is not affected. 
The measure is obtained by adding $3$ to the measure of the method. 
These $3$ additional units account for the three steps needed: one for sending the message, one for starting the thread handling the call, and one for terminating it. 

Rule \refRule\tyawait checks that the value expression has a future type, 
assigns to the await expression the type of the value held by the future and 
adds to the actor environment produced by the await the actor context reported in the future type. 
In this way, we allow resources promised by the future to be used. 
The required actor context instead is not affected. 
Rules \refRule\tyhold and \refRule\tyrls govern the holding and releasing of resources. 
 A \key{hold} expression has a graded resource type with the same name and grade of the requested resource 
and adds it to the required actor context for the actor running the expression, but with a larger grade.   
Conversely, the argument of a \key{release} expression must have a graded resource type with a grade  
larger than the one that should be released. 
The resource is added with the released grade to the produced actor context for the actor running the expression. 
%
%
Rule \refRule\tylet is pretty standard except for the handling of actor contexts. 
The required and produced actor contexts of the \key{let} expression are obtained by summing up those of the two subexpressions, but removing from both the context $\actorcontext_1'$. 
Indeed, this context provides resources which are used \emph{internally} by the \key{let} expression: 
they are produced by $\e_1$ and consumed by $\e_2$. 
Therefore, we do not have to track them in the final typing. 
The measure of the \key{let} is just~$1$ plus the measure of the two subexpressions, so that we account for the step needed for passing the result of $\e_1$ to~$\e_2$. 
%
%

The variable and the future contexts in the conclusion of these rules are obtained by combining those in the premises: 
variable contexts are summed up with $\tysum$, while future contexts
are combined using the comma, 
ensuring that variables of graded resource types have a grade covering the needs of all premises and those with a future types and futures themselves are used only in one of~the premises. 
Finally, all rules except \refRule\tylet
have an actor context added to both the required and the produced
ones in the conclusion, thus allowing an expression to promise additional resources provided that it requires them to the environment. 
This makes the rules more flexible, simplifying the proof of subject reduction. 
The remaining typing rules for expressions can be found
  in the central part of
\cref{fig:types4expr:app} in \cref{app:typesystem}.
 \PAComm{For arxiv: ``The remaining typing rules for expressions can be found
  in the central part of
\cref{fig:types4expr:app} in \cref{app:typesystem}.''}
\begin{figure}[t]
  \hspace{-.5cm}
\begin{math}
  \begin{array}{c}
 \NamedRule{\tythread}     
   { 
        \WTthread\actorcontext{\futcontext}\typingcontext{n}\actorid{\expeval{\activationrecord}{\e}}\T\actorcontext'
  }
  {
      \WTprocS\actorcontext{\futcontext}{n+1}{\anythread[\alpha]{\activationrecord}{\expression}{\futid}{\actorid} }{\futid{:} \FutTy{\T}{\actorcontext'}}
  }{ \alpha \in \{\ath,\qth\} }
\NamedRule{\tyresult}
   { \WTvexpr{\futcontext}{\emptycontext}{\val}{\T}}
   {
      \WTprocS\actorcontext{\futcontext}{0}{\placefut{\futid}{\val} }{\futid{:} \FutTy{\T}{\actorcontext}}
    } {}
  \\[3ex] 
 \NamedRule{\tymsg}
 {
     \WTvexpr{\futcontext_i}{\emptycontext}{\val}{\T_i}\quad i\in 1\dots n
  }
 {
     \WTprocS\actorcontext{\futcontext_1 \linearsum \dots \linearsum \futcontext_n}{n+2}{ \callmsg{\futid}{\actorid}{\methid}{\many\val}}{\futid{:} \FutTy{\T}{\actorcontext'}}
   }
   {
     \mtype{\actorid}{\methid} =\MType{\actorcontext}{\many\T}{\T}{n}{\actorcontext'}
   }
  \\[3ex]
 \NamedRule{\typarallel}
  {
      \WTproc{\actorcontext_1}{\futcontext_1}{n}{\PP}{\futcontext_1'\linearsum\futcontext''_1}\quad
      \WTproc{\actorcontext_2}{\futcontext_2\linearsum\futcontext'_1}{m}{\PQ}{\futcontext_2'}
  }
  {
    \WTproc{\actorcontext_1\actsum \actorcontext_2}{\futcontext_1\linearsum\futcontext_2}{n+m}{\parallelcomp{\PP}{\PQ}}{\futcontext_2'\linearsum\markcontext(\futcontext_1')\linearsum\futcontext_1''} 
  }
  {
    \dom(\futcontext_1'') \cap \dom(\futcontext_2) = \emptycontext
  } \\[3ex]
    \NamedRule{\tyconf}
  {
  \WTproc{\actorcontext}{\futcontext}{n}{\PP}{\futcontext'}
   }
  {
   \WTconf{\actorcontext_{\rescontext}}\futcontext{n}{ \Conf{\aactorcontext}{\PP}}\futcontext'
  }
  {
    \actorcontext \actsum \rescontext \actord \aactorcontext
  }       
\end{array}
\end{math}
\setlength\belowcaptionskip{-15pt}
\caption{Rules for processes}
 \label{fig:types4proc}
\end{figure}

\paragraph{Typing rules for processes and configurations.}  Processes are typed by the judgement 
$\WTproc\actorcontext\futcontext{n}\processid\ffutcontext$, 
defined in \cref{fig:types4proc}. 
The actor context $\actorcontext$ tracks the resources required by the process. 
The future contexts $\futcontext$ and $\ffutcontext$ records the futures consumed and produced by the process, respectively, with their types. 
That is, one can prove that $\fp\processid = \dom(\futcontext)$ and $\fr\processid = \dom(\ffutcontext)$. 
Note that there is no actor context on the right hand side, because resources can be made available to other thread only through futures.
Rule \refRule\tythread states that a thread (active or suspended) is well-typed
with the actor context required by the running expression. 
The required futures are those of the expression in the local environment, while it produces only the future $\futid$ assigned to it. 
Note that the type of the future $\futid$ states that 
it will hold a value of the same type as the running expression and that it will provide resources of the produced actor context of the running expression. 
The measure is $1$ plus the measure of the expression, where $1$ accounts for the step needed for terminating the thread. 
This rule relies on an auxiliary typing judgment for expressions in a local environment
$\WTthread\actorcontext\futcontext\typingcontext{n}\actorid{\expeval{\activationrecord}{\e}}\T\ffutcontext$,  which 
 extracts a variable context from the local environment $\activationrecord$ through the judgment 
 $\WTenv\futcontext\activationrecord{\typingcontext}$ 
and uses it to typecheck the expression~$\e$.   
The resulting actor contexts are those of $\e$, while the future context is the union of those typing $\activationrecord$ and $\e$. 
The formal definition can be found in \cref{app:typesystem}.
\VPComm{For arxiv: ``The formal definition can be found in \cref{app:typesystem}.''}%
Rules \refRule\tyresult and \refRule\tymsg are similar except that they have to check a value and a method call, respectively, instead of an expression in a local environment. 
They only differ for the measure: 
the former has measure~$0$ as it is terminated, while the latter~$2$ plus the measure of the method, where the~$2$ accounts for the steps for starting and terminating the execution of this method. 
%

The parallel composition of two threads is handled by Rule \refRule\typarallel, 
which has to enforce the correct usage of futures, that is, they are  consumed from left to right and used linearly. 
The key fact is that the rule allows the right process $\pprocessid$ to consume futures produced by the left process $\processid$, recorded in the context $\futcontext_1'$.
The other futures used by the two processes, which must be distinct, are propagated in the required future context of the parallel composition. 
The produced future context instead is the union of all the futures produced by the two processes, which again must be distinct. 
Note that the futures in $\futcontext_1'$, i.e., those produced by $\processid$ and consumed by $\pprocessid$, 
are not erased from the final context of produced futures, but rather they are marked by $\fmark$. 
This is needed for remembering that these futures have already been used in the process, and so  they cannot appear elsewhere as they must be used linearly. 
Indeed, it is easy to see that marked futures cannot appear on the left-hand side of a typing judgment, hence they cannot be consumed by other processes. 
Last but not least, the side condition ensures that 
$\futcontext_1'$ contains all futures consumed by $\pprocessid$ matching a future produced by $\processid$.
Also, the actor context and the measure are the sum of those of $\processid$ and $\pprocessid$. 
%

 Lastly, configurations are type-checked  by the judgement 
$\WTconf\actorcontext\futcontext{\actorid}\config\ffutcontext$ and
defined by the Rule \refRule\tyconf. 
This typechecks the process and requires that the actor context of the configuration suffices to cover the one required by the process plus a residual that annotates the judgement. 
The latter takes into account that the process may be part of a larger parallel composition and so the actor context in the configuration  must provide resources also for the rest of the composition, modelled by the residual context.



\paragraph{Soundness.}
%
The main property of the type system is that well-typed configurations are \emph{fairly terminating}~\cite{CicconeDP24}. 
This property  does not forbid infinite computations, but  guarantees that  termination is always possible. 
Moreover, it entails several good properties of concurrent systems with asynchronous communication, such as livelock and orphan message freedom. 

In our context, a configuration is \emph{terminated} if 
it consists only of threads of the form $\idleactor{\actorid}$ or $\placefut{\futid}{\val}$. 
Furthermore, a configuration~$\config$ is said to be \emph{weakly terminating} if 
it can reach a terminated configuration. 
Then, a configuration~$\config$ is \emph{fairly terminating} if, 
whenever $\config\semanticstep\config'$, $\config'$ is weakly terminating, i.e.,~$\config$ can only reach weakly terminating configurations. 


Following the proof strategy of~\cite{CicconeDP24}, fair termination is an immediate consequence of  two key properties of a type system: subject reduction and weak termination. 
In our resource-aware setting, the former is stated as follows: 

\begin{theorem}[Subject reduction]\label{thm:subred}
  If 
  $\WTconf{\actorcontext}{\futcontext}{n}{\config}{\futcontext'} $
  and 
  $ \config \semanticstep \config'$,
  then 
  it exists $\aactorcontext, m, \ffutcontext'$ such that $\WTconf{\aactorcontext}{\futcontext}{m}{\config'}{\ffutcontext'}$,
  where $\ffutcontext'=\futcontext',\markcontext(\ffutcontext'')$ with $\ffutcontext''$ fresh. 
\end{theorem}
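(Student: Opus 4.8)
The plan is to prove the statement by induction on the derivation of $\config \red \config'$, with a case analysis on the last rule applied. The argument rests on three auxiliary results that isolate the layers of the semantics. First, subject reduction for value expressions (\cref{lem:srve}), describing how the variable and future contexts evolve as a value expression consumes entries of the local environment. Second, a subject reduction lemma for the labelled expression semantics, phrased at the level of the auxiliary judgment $\WTthread\actorcontext{\futcontext}\typingcontext{n}\actorid{\expeval{\activationrecord}{\e}}\T\actorcontext'$, recording for each label how the required and produced actor contexts, the future contexts, and the measure change: a $\holdmsg{\gradedr\resid\gradeid}$ step enlarges the required actor context by $\actorid{:}\res$; a $\rlsmsg{\gradedr\resid\gradeid}$ step removes $\actorid{:}\res$ from the produced context; a $\callmsg\futid\aactorid\methid{\many\val}$ step discharges the callee's requirements and introduces a fresh consumed future $\futid$ of the promised type; and a $\placefut\futid\val$ step consumes $\futid$, making available the produced actor context recorded in its type. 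Third, type preservation under the precongruence: if $\WTproc\actorcontext\futcontext{n}\processid\ffutcontext$ and $\processid \Pcong \pprocessid$, then $\pprocessid$ admits the same required actor context and the same consumed future context, with its produced future context differing from $\ffutcontext$ only by marking.

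For the base cases I would invert the typing derivation of $\config$, which necessarily ends in \refRule\tyconf over a process typing, and reconstruct a derivation for $\config'$. The resource rules \refRule\procrls and \refRule\prochold are where the algebra of \cref{sect:grade-monoid} enters: I invoke the expression lemma to learn how the thread's actor contexts change, then verify that the side condition $\actorcontext \actsum \rescontext \actord \aactorcontext$ of \refRule\tyconf is restored after updating the actor environment, using monotonicity of subtraction and the adjunction $\gr \gsum (\ggr \gsub \gr) \gord \ggr$. For \refRule\procspawn I use consistency of $\mtype$ and $\mbody$ to turn the typing of the call message (via \refRule\tymsg) into the typing of the freshly started thread (via \refRule\tythread). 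The delicate cases are \refRule\proccall and \refRule\procget. In \refRule\proccall the fresh future $\futid$ is produced by the newly added message and consumed by the continuation thread; re-typing the resulting parallel composition with \refRule\typarallel links these two occurrences, which is exactly what marks $\futid$ and accounts for the $\markcontext(\ffutcontext'')$ with $\ffutcontext''$ fresh in the conclusion. In \refRule\procget the $\placefut\futid\val$ message and the thread are merged; here I must check that the produced actor context recorded in the future's type (delivered by the message through \refRule\tyresult, and made available by the expression lemma through \refRule\tyawait) coincides, so that the residual of \refRule\tyconf is preserved. The remaining base rules (silent steps and non-deterministic choice) are routine applications of the expression lemma.

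The inductive cases are comparatively light. For the propagation rules \refRule\proclpar and \refRule\procrpar the result follows from the induction hypothesis by re-applying \refRule\typarallel, the only care being that the fresh marked context $\markcontext(\ffutcontext'')$ produced by the reducing sub-process is absorbed into the marked part of the conclusion, which is precisely the shape demanded by the statement. The \refRule\proccong case is discharged by composing the precongruence-preservation lemma with the induction hypothesis.

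I expect the main obstacle to be the bookkeeping of the future contexts together with the marking discipline. Marked futures must never reappear on the left of a judgment, so throughout the induction I must maintain the invariant that the consumed future context $\futcontext$ is unchanged and that every newly linked future is pushed into the marked, fresh part of the produced context; keeping this stable under \refRule\typarallel, where one side's produced futures feed the other, is where the argument is most error-prone. Intertwined with this is the resource-availability invariant: since resources travel across synchronisations, I must guarantee that the actor context promised in each future type is exactly what becomes available upon \refRule\procget, which is the crux linking the future layer and the grade layer of the system.
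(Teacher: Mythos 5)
Your proposal follows essentially the same route as the paper: the same three-lemma decomposition (subject reduction for value expressions, a label-indexed subject reduction lemma for the expression semantics --- which the paper formalises via a typing judgment on labels --- and subject precongruence), followed by induction on the reduction with exactly the case analysis and delicate points you identify (the grade-monoid adjunction for \refRule\prochold/\refRule\procrls, $\mtype$/$\mbody$ consistency for \refRule\procspawn, and the marking of the fresh future under \refRule\typarallel for \refRule\proccall). The per-label context bookkeeping you describe matches the paper's conditions relating required/produced actor contexts and the splitting of the consumed future context, so the plan is sound and faithful to the paper's argument.
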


This guarantees that, for a well-typed configuration, reduction steps do not affect its ``interface'', i.e., 
the futures consumed and produced by the configuration remains essentially the same, except that the latter ones can grow, but only by fresh and marked ones. 
Furthermore, the residual actor context is also preserved, ensuring that resources required by other processes, that will be composed with the current configuration, are not affected. 

On the other hand, the weak termination result states that well-typed and closed configurations are weakly terminating. 

\begin{theorem}[Weak termination] \label{thm:weak}
  If $
  \WTconf[]{\actorcontext}{\emptycontext}{n}{\config}{\futcontext'}$ then 
  there exists a reduction $\config \semanticstep^{*} \config'$ such that~$\config'$ is terminated.
\end{theorem}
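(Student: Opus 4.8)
The plan is to argue by well-founded induction on the measure $n$ annotating the typing derivation of $\config$, exploiting the fact that this measure is designed to over-approximate the number of reduction steps separating $\config$ from termination along a best-case schedule. The base case is a characterisation result: if $n = 0$ then $\config$ is already terminated. This follows by inspecting the typing rules of \cref{fig:types4proc} (together with the omitted rule \refRule{\tyidle} for idle actors), since rule \refRule{\tythread} assigns a thread measure at least $1$ and rule \refRule{\tymsg} assigns a message measure at least $2$, whereas only $\idleactor{\actorid}$ and $\placefut{\futid}{\val}$ (rule \refRule{\tyresult}) carry measure $0$; as rule \refRule{\typarallel} sums the measures of its components, a configuration of measure $0$ can contain neither threads nor messages, hence consists solely of idle actors and fulfilled futures, i.e.\ is terminated.

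For the inductive step I would establish a progress lemma: if $\config$ is well typed with empty consumed future context and is not terminated, then there is a reduction $\config \semanticstep \config'$ such that $\config'$ is again well typed with empty consumed context and with a strictly smaller measure. Weak termination then follows immediately, since subject reduction (\cref{thm:subred}) keeps the consumed context equal to $\emptycontext$, so the hypothesis of the lemma is reproduced at $\config'$ and the induction on the measure applies. The delicate point is that, unlike strong termination (which fails because of the recursive branch of $\nondetchoic{\e_1}{\e_2}$, cf.\ \cref{ex:barista}), I only need to exhibit \emph{one} measure-decreasing step, so I am free to schedule reductions favourably: in particular, for a non-deterministic choice I would select the branch of smaller measure, which is what the omitted rule \refRule{\tychoice} is meant to charge.

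To produce such a step I would first locate a component of $\config$ that is \emph{ready}, i.e.\ able to reduce. Here the key structural invariant, enforced by rule \refRule{\typarallel} together with the future-flow sets $\fp{\cdot}$ and $\fr{\cdot}$ governing the precongruence, is that each future is produced by exactly one component and consumed by at most one, and always by a component to its right; hence the ``consumes-a-future-of'' relation on components is a finite strict partial order. Because the configuration is closed, every future awaited inside it is also produced inside it, so following this order from any thread or message leads to a source component that is not blocked on an unfulfilled future. Two further facts, both consequences of the typing invariant, rule out the remaining forms of blocking: resources flow between threads only through futures (the produced actor context sits inside the future type in rules \refRule{\tythread} and \refRule{\tyawait}), and the side condition $\actorcontext \actsum \rescontext \actord \aactorcontext$ of rule \refRule{\tyconf}, preserved by subject reduction, guarantees that the actor environment always dominates the summed resource requirements of the active expressions, so that a \refRule{\prochold} performed by a ready thread can never get stuck. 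An active thread that makes no local progress must therefore be awaiting a future, in which case it can yield (\refRule{\cyield}) and free its actor, ensuring that a busy actor never prevents a source component from advancing.

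A case analysis on the ready source component then yields the required step: a pending call whose target is idle fires \refRule{\procspawn}; an active thread awaiting an already fulfilled future fires \refRule{\procget}; and otherwise the active thread performs its next expression step (\refRule{\proccall}, \refRule{\procrls}, \refRule{\prochold}, or an internal step for a primitive operation, a choice, or a \key{let}). In every case a direct inspection of the measure annotations shows that the measure strictly decreases — for instance \refRule{\procspawn} replaces a message of measure $k+2$ and an idle actor of measure $0$ by a thread of measure $k+1$, using the consistency of $\aux{mtype}$ and $\aux{mbody}$. I expect the main obstacle to be precisely the interplay of these three dependency phenomena — future synchronisation, actor occupancy and resource holds — inside the progress lemma: making the acyclicity argument watertight, and proving that advancing a source component both preserves typing (with empty consumed context) and strictly decreases the measure, will require refining \cref{thm:subred} so that it tracks the measure rather than merely asserting the existence of some re-typing.
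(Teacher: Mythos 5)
Your plan is correct and follows essentially the same route as the paper: induction on the typing measure, with the base case characterising measure-$0$ configurations as terminated, and the inductive step given by a ``helpful direction'' lemma that combines deadlock freedom (obtained from the left-to-right future-dependency structure plus closedness of the consumed future context) with a branch-selection argument for $\nchoicesym$ to exhibit one measure-decreasing step. The paper packages the same ingredients into its Lemmas on terminated configurations, progress of expressions, stuck threads, progress-or-await, silent choice and helpful direction, handling the measure decrease in those progress lemmas rather than by refining \cref{thm:subred} itself.
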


This result relies on a key lemma, dubbed ``helpful direction'', showing that for a well-typed and closed configuration with a positive measure, it is always possible to find a reduction step making the measure decrease. 
Moreover, we have that well-typed and closed configurations with  measure~$0$ are terminated. 
Notice that this auxiliary result encapsulates deadlock freedom. 

Then, since well-typed and closed configurations are weakly terminating and, by subject reduction, they can only reduce to other well-typed and closed configurations, we have that these are fairly terminating. 

%
\bfd Finally, we observe that fair termination implies that well-typed and closed configurations enjoy the following properties: 
\begin{itemize}
\item {\em input-lock freedom}: every future is eventually fulfilled, 
\item {\em orphan message freedom}: every method call is eventually executed, 
\item {\em resource safety}: the execution of \key{hold} expressions is never blocked. 
\end{itemize}
These hold because a well-typed configuration can always eventually terminate, i.e., reach a configuration with no pending threads  and no pending messages, hence 
all futures have been fulfilled,
all method calls have been executed and 
all \key{hold} expressions have succeeded.  \efd
The full proof of fair termination can be found in \cref{apx:proof}.
\PAComm{For arxiv: ``The full proof of fair termination can be found in \cref{apx:proof}.''}
\FDComm{here we should put a reference to arxiv} \PAComm{It is said at the end of the introduction}
\VPComm{added a line to refer to arxiv, can be removed if it is not needed.}


\section{Related Work and Conclusions}
\label{sec:related.conclusion}

We introduced a core calculus for resource-aware active objects, 
which is parametric over a subtractive grade monoid, representing the availability of resources. 
On top of the calculus, we defined a (graded) type system ensuring that well-typed configurations are fairly terminating. 
This property does not forbid non-termination, but ensures that termination is always possible. 
In this way, we can model processes with a potentially infinite execution, 
e.g., a barista taking an arbitrary amount of orders, 
but still retaining good properties such as deadlock, livelock and orphan message freedom. 
In addition, fair termination entails that well-typed configurations are resource-safe, 
i.e., threads always succeed in getting hold of resources. 

\smallskip
\noindent


\paragraph{Related Work}.
Our work is inspired by the modelling of workflow systems with active objects proposed in~\cite{AliLP23}, for the looking at workflows in the active object environment; to 
the resource-aware calculus of~\cite{BianchiniDGZ23oopsla}, \bpa for the association of resources to variables; and to~\cite{CicconeDP24} for 
the use of indexed type system to ensure fair termination.  \epa
%
Resource sensitive active objects have been adopted to model distributed systems and concurrent business process workflows~\cite{JohnsenST15,AliLP23}, by extension of 
core ABS~\cite{JohnsenHSSS10} with explicit notion of time advancement and resources to capture resource consumption. 
In contrast to our core calculus, resources in these two languages are directly accessible by (groups of) active objects, and cannot be transformed.
In addition, resources in~\cite{JohnsenST15} are recovered for each interval when time advances, i.e., they are unlimited.
Graded type systems, also dubbed coeffect systems, 
were introduced in~\cite{PetricekOM13,BrunelGMZ14,PetricekOM14} as a tool for statically controlling how programs use their context, i.e., their free variables, which abstractly represent external resources needed by programs. 
In order to properly state a resource-aware variant of type soundness  for graded type systems, 
graded operational semantics were also considered \cite{ChoudhuryEEW21,BianchiniDGZ23ecoop,BianchiniDGZ23oopsla,TorczonAAVW24}, where variables are annotated by grades and, whenever a variable is used, its grade  is consumed. 
This mechanism is usually non-deterministic, while our calculus supports a deterministic resource consumption thanks to the use of a subtraction operation. 

Finally, in~\cite{Marshall22,MarshallO24} graded types were introduced into the concurrent scenario for a channel-based communication language with in a session-type system. 
There, grades (re)introduce a safe non-linear usage of channels, which are the resources considered in these works. 
We focus on future-based communication of multithreaded active objects, where resources are limitedly available values.

Type systems ensuring fair termination have been studied in the context of synchronous session types \cite{CicconeP22,CicconeDP24,DagninoP24}. 
The key idea to enforce this property, which is the same we have used in our context, is to assign a measure to each process quantifying the effort, i.e., the number of steps, needed to reach termination. 
Indeed, this approach resembles techniques used in the complexity analysis of concurrent systems. 
Notably, \emph{resource-aware session types}\footnote{Note that these types are resource aware in a different sense with respect to our system: in a sense the resource they control is time while we focus on values with a limited availability.}  \cite{DasHP18,DasEtAl21,DasP22}, augment standard session types with measure annotation  designed to perform ammortized cost analysis of concurrent system, where the cost is determined by the number of exchanged messages. 
This approach is closely related to fair termination and a more precise comparison can be found in \cite{DagninoP24}. 
Finally, we mention that in the literature there are other techniques to guarantee termination of concurrent system. 
For instance, Deng and Sangiorgi \cite{DengS04} annotate types with numbers representing priority levels used 
to constrain the order in which channels can be used. 
Note that this technique is designed to  enforce strong normalization, 
i.e., no infinite execution is allowed, which is a much more restrictive property than fair termination, which is the target of our type system.


\smallskip
\noindent
\paragraph{Future work.}
\bfd This work lays a theoretical foundation for the formalisation of complex workflow systems with passive resources. 
To test the effectiveness of the proposed approach, it will be important to develop case studies comparing our system with other solutions and identifying the different subtractive grade monoids required by the concrete application domains. 

On the technical side, we are working on improving some limitations of the system. 
First, \efd 
the linear treatment of futures is a sound but quite restrictive  solution as, for instance, 
it forces to await on futures even if their value is discardable. 
A natural possibility to make the system more flexible is to grade futures as well. 
This poses non-trivial challenges for both the semantics and the type system, and it may require to further extend the algebraic structure of grades. 
Another limitation of the current type system is the way it guarantees fair termination. 
Essentially, it allows programs where termination only depends on the exchanged messages (method calls) and not on the exchanged values (parameters). 
Taking values into account as well would require the adoption of rather sophisticated techniques such as dependent types or sized types with arithmetic refinements \cite{DasP20,DasP22,SomayyajulaP22,BaillotG22}.  
Finally, another interesting extension of this work is to consider the full object-oriented power of active objects, by considering also object instantiation, thus enabling the modelling of systems  with a variable \bfd and dynamically adjusted  \efd number of actors. 
As for the implementation of the type checking algorithm, a promising direction is along the lines of the one in~\cite{OrchardLE19}.


\paragraph*{\textbf{Acknowledgement}.}
{
We would like to thank the anonymous reviewers
for their valuable feedback and constructive suggestions.
This work is part of the \textsc{CroFlow} project:
Enabling Highly Automated Cross-Organisational Workflow Planning,
funded by the Research Council of Norway (grant no. 326249);
and is partially funded by 
\bpa  the MUR project: \textsc{T-LADIES} (PRIN 2020TL3X8X); the COST action:  \textsc{EuroProofNet} (CA20111); 
and has the financial support of the Universit\`{a}  del Piemonte Orientale. \epa
\bfd The first author has been funded by the European Union - NextGenerationEU and
by the Ministry of University and Research (MUR), National Recovery and
Resilience Plan (NRRP), Mission 4, Component 2, Investment 1.5, project ``RAISE
- Robotics and AI for Socio-economic Empowerment'' (ECS00000035). \efd 
}

\bibliographystyle{splncs04}
\bibliography{bib}

\clearpage
\appendix

\section{Definitions and Results Omitted from \cref{sect:grade-monoid,sec:language,sec:typesystem}}
\label{app:sections}


\subsection{\cref{sect:grade-monoid}}
\label{app:background}

The following proposition shows two equivalent conditions that  will be the key property for defining subtractive grade monoids. 

\begin{proposition}\label{prop:subtract-op}
Let $\GMon = \ple{\GSet,\gord,\gsum,\gzero}$ be a grade monoid and $\pfun{\gsub}{\GSet\times\GSet}{\GSet}$ a subtraction operation on $\GMon$. 
Then, the following are equivalent:
\begin{enumerate}
\item\label{prop:subtract-op:1}
for all $\gr,\ggr\in\GSet$, $(\gr\gsum\ggr)\gsub\gr$ is defined and $\ggr\gord(\gr\gsum\ggr)\gsub\gr$ and, 
if $\ggr\gsub\gr$ is defined, then $\gr\gsum (\ggr\gsub\gr)\gord\ggr$; 
\item\label{prop:subtract-op:2}
for all $\gr,\ggr,\ggr' \in \GSet$, 
$\gr\gsum\ggr'\gord\ggr$ if and only if $\ggr\gsub\gr$ is defined and $\ggr'\gord\ggr\gsub\gr$. 
\end{enumerate}
\end{proposition}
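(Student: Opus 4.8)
The plan is to prove the two implications $(\ref{prop:subtract-op:1})\Rightarrow(\ref{prop:subtract-op:2})$ and $(\ref{prop:subtract-op:2})\Rightarrow(\ref{prop:subtract-op:1})$ separately; both are adjunction-style manipulations, so the entire argument will rest on just four ingredients: reflexivity and transitivity of $\gord$, monotonicity of $\gsum$ (from the monoid structure), and monotonicity of $\gsub$ in its first argument, which by \cref{def:subtract-op} also carries definedness to the larger argument.

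For $(\ref{prop:subtract-op:2})\Rightarrow(\ref{prop:subtract-op:1})$, I would instantiate the biconditional at well-chosen points. To obtain the first conjunct of $(\ref{prop:subtract-op:1})$, I apply $(\ref{prop:subtract-op:2})$ with $\ggr$ in the role of the bound $\ggr$ replaced by $\gr\gsum\ggr$ and the residual $\ggr'$ taken to be $\ggr$: the left-hand side $\gr\gsum\ggr\gord\gr\gsum\ggr$ holds by reflexivity, so the right-hand side yields that $(\gr\gsum\ggr)\gsub\gr$ is defined and $\ggr\gord(\gr\gsum\ggr)\gsub\gr$. For the second conjunct, assuming $\ggr\gsub\gr$ is defined, I apply $(\ref{prop:subtract-op:2})$ with $\ggr'$ taken to be $\ggr\gsub\gr$; its right-hand side holds by reflexivity $\ggr\gsub\gr\gord\ggr\gsub\gr$, so the left-hand side gives $\gr\gsum(\ggr\gsub\gr)\gord\ggr$, as required.

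For $(\ref{prop:subtract-op:1})\Rightarrow(\ref{prop:subtract-op:2})$, I would establish the two directions of the biconditional. In the forward direction, assuming $\gr\gsum\ggr'\gord\ggr$, the first part of $(\ref{prop:subtract-op:1})$ applied to $\gr$ and $\ggr'$ gives that $(\gr\gsum\ggr')\gsub\gr$ is defined and $\ggr'\gord(\gr\gsum\ggr')\gsub\gr$; then monotonicity of $\gsub$ in its first argument, applied to $\gr\gsum\ggr'\gord\ggr$, tells me that $\ggr\gsub\gr$ is defined with $(\gr\gsum\ggr')\gsub\gr\gord\ggr\gsub\gr$, and transitivity chains these into $\ggr'\gord\ggr\gsub\gr$. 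In the backward direction, assuming $\ggr\gsub\gr$ defined and $\ggr'\gord\ggr\gsub\gr$, the second part of $(\ref{prop:subtract-op:1})$ gives $\gr\gsum(\ggr\gsub\gr)\gord\ggr$, and monotonicity of $\gsum$ together with $\ggr'\gord\ggr\gsub\gr$ yields $\gr\gsum\ggr'\gord\gr\gsum(\ggr\gsub\gr)\gord\ggr$.

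The only genuinely delicate point is the forward direction of $(\ref{prop:subtract-op:1})\Rightarrow(\ref{prop:subtract-op:2})$: there I must not merely compare grades but actually \emph{derive definedness} of $\ggr\gsub\gr$ from definedness of $(\gr\gsum\ggr')\gsub\gr$, and this is precisely what the monotonicity clause of \cref{def:subtract-op} delivers, since it guarantees definedness propagates upward along $\gord$. Every other step is symmetric bookkeeping with reflexivity, transitivity, and monotonicity of $\gsum$, so I expect no further obstacles.
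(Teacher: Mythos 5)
Your proof is correct and follows essentially the same route as the paper's: both directions are obtained by the same instantiations (reflexivity at $\gr\gsum\ggr$ and at $\ggr\gsub\gr$ for $(\ref{prop:subtract-op:2})\Rightarrow(\ref{prop:subtract-op:1})$, and the chain $\ggr'\gord(\gr\gsum\ggr')\gsub\gr\gord\ggr\gsub\gr$ via the monotonicity clause of \cref{def:subtract-op} for the forward direction of $(\ref{prop:subtract-op:1})\Rightarrow(\ref{prop:subtract-op:2})$). You are in fact slightly more explicit than the paper about the definedness-propagation step, which is a welcome clarification rather than a deviation.
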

\begin{proof}
Let us first prove that \cref{prop:subtract-op:1} implies \cref{prop:subtract-op:2}. 
Suppose that $\gr\gsum\ggr'\gord\ggr$. 
Then, by \cref{prop:subtract-op:1} and \cref{def:subtract-op}, we derive 
$\ggr' \gord (\gr\gsum\ggr')\gsub\gr \gord \ggr\gsub\gr$, as needed. 
Suppose now that $\ggr\gsub\gr$ is defined and $\ggr'\gord\ggr\gsub\gr$. 
Then, by \cref{prop:subtract-op:1}, we derive 
$\gr\gsum\ggr' \gord \gr\gsum(\ggr\gsub\gr) \gord\ggr$, as needed. 

We now show that \cref{prop:subtract-op:2} implies \cref{prop:subtract-op:1}. 
By \cref{prop:subtract-op:2}, we have that 
$\gr\gsum\ggr\gord\gr\gsum\ggr$ implies $\ggr\gord(\gr\gsum\ggr)\gsub\gr$ and , 
when $\ggr\gsub\gr$ is defined, $\ggr\gsub\gr\gord\ggr\gsub\gr$ implies $\gr\gsum(\ggr\gsub\gr)\gord\ggr$, 
as required. 
\end{proof}

Some simple and useful properties of subtractive grade monoids are give by the following proposition.
\begin{proposition} \label{prop:subtract-mon}
Let $\GMon = \ple{\GSet,\gord,\gsum,\gzero,\gsub}$ be a subtractive grade monoid. 
The following facts hold:
\begin{enumerate}
\item\label{prop:subtract-mon:1}
for all $\gr\in\GSet$, $\gzero\gord\gr\gsub\gr$ and $\gr\gsub\gzero = \gr$; 
\item\label{prop:subtract-mon:2}
for all $\gr,\gr'\ggr\in\GSet$, 
if $\ggr\gsub\gr$ is defined and $\gr'\gord\gr$ then 
$\ggr\gsub\gr'$ is defined and $\ggr\gsub\gr\gord\ggr\gsub\gr'$; 
\item\label{prop:subtract-mon:3}
for all $\gr_1,\gr_2,\ggr\in\GSet$, 
$\ggr\gsub(\gr_1\gsum\gr_2) = (\ggr\gsub\gr_1)\gsub\gr_2$;
\item\label{prop:subtract-mon:4}
for all $\gr,\ggr\in\GSet$, 
$\gr\gsum\ggr = \gzero$ if and only if $\gzero\gsub\gr = \ggr$. 
\end{enumerate}
\end{proposition}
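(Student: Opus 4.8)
The plan is to prove all four items by systematically exploiting the adjoint characterization of subtraction from \cref{def:subtract-mon} (equivalently \cref{prop:subtract-op}): namely that $\ggr\gsub\gr$ is defined exactly when there is a residual $\ggr''$ with $\gr\gsum\ggr''\gord\ggr$, and that in this case $z\gord\ggr\gsub\gr$ holds precisely when $\gr\gsum z\gord\ggr$. The second ingredient I would use throughout is the defining axiom of a grade monoid from \cref{def:gr-mon}, that nothing lies strictly below $\gzero$, i.e. $\gr\gord\gzero$ implies $\gr=\gzero$. Every item then becomes a short manipulation of these two facts together with monotonicity, associativity and commutativity of $\gsum$.

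First I would dispatch the two easy items. For \cref{prop:subtract-mon:1}, the inequality $\gzero\gord\gr\gsub\gr$ is immediate from the characterization applied to $\gr\gsum\gzero=\gr\gord\gr$, which simultaneously yields definedness of $\gr\gsub\gr$; and $\gr\gsub\gzero=\gr$ follows by instantiating the adjoint biconditional with first argument $\gzero$, obtaining $z\gord\gr\iff z\gord\gr\gsub\gzero$ for all $z$, and then invoking antisymmetry on $z:=\gr$ and $z:=\gr\gsub\gzero$. Item \cref{prop:subtract-mon:2} is pure monotonicity: from $\ggr\gsub\gr$ defined I get $\gr\gsum(\ggr\gsub\gr)\gord\ggr$, and since $\gr'\gord\gr$ makes $\gr'\gsum(\ggr\gsub\gr)\gord\gr\gsum(\ggr\gsub\gr)\gord\ggr$, exhibiting this residual shows $\ggr\gsub\gr'$ is defined and, by the characterization, forces $\ggr\gsub\gr\gord\ggr\gsub\gr'$.

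Item \cref{prop:subtract-mon:3} carries the heaviest bookkeeping, but reduces cleanly to the adjoint once read as a Kleene equality. I would split it into a definedness stage and an equality stage. For definedness, using ``defined iff a residual exists'' together with associativity and commutativity of $\gsum$, I would show $\ggr\gsub(\gr_1\gsum\gr_2)$ is defined iff $\ggr\gsub\gr_1$ is defined and then $(\ggr\gsub\gr_1)\gsub\gr_2$ is defined. When both are defined, I would chain the adjoint biconditionals
\[
z\gord\ggr\gsub(\gr_1\gsum\gr_2)\iff(\gr_1\gsum\gr_2)\gsum z\gord\ggr\iff\gr_2\gsum z\gord\ggr\gsub\gr_1\iff z\gord(\ggr\gsub\gr_1)\gsub\gr_2
\]
for every $z$, so that the two grades share the same lower set and hence coincide by antisymmetry.

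Item \cref{prop:subtract-mon:4} is the one I expect to be the genuine obstacle, because the naive use of the characterization delivers only one of the two inequalities. Given $\gr\gsum\ggr=\gzero$, applying the characterization to $\gr\gsum\ggr\gord\gzero$ shows $\gzero\gsub\gr$ is defined with $\ggr\gord\gzero\gsub\gr$, and the adjoint also gives $\gr\gsum(\gzero\gsub\gr)\gord\gzero$, which the grade-monoid axiom upgrades to the equality $\gr\gsum(\gzero\gsub\gr)=\gzero$. The key observation that closes the gap is that $\ggr$ and $\gzero\gsub\gr$ are then \emph{both} inverses of $\gr$ in a commutative monoid, and inverses are unique: $\ggr=\ggr\gsum\gzero=\ggr\gsum(\gr\gsum(\gzero\gsub\gr))=(\ggr\gsum\gr)\gsum(\gzero\gsub\gr)=\gzero\gsum(\gzero\gsub\gr)=\gzero\gsub\gr$. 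The converse is routine: from $\gzero\gsub\gr=\ggr$ the adjoint yields $\gr\gsum\ggr\gord\gzero$, hence $\gr\gsum\ggr=\gzero$ by the grade-monoid axiom. I would stress that it is exactly this uniqueness-of-inverses step, rather than any further property of $\gsub$, that forces the reverse inequality $\gzero\gsub\gr\gord\ggr$.
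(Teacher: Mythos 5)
Your proposal is correct and follows essentially the same route as the paper: every item is derived from the adjoint characterization of $\gsub$ in \cref{def:subtract-mon} together with the no-grade-below-$\gzero$ axiom, with item \ref{prop:subtract-mon:2} and the forward direction of item \ref{prop:subtract-mon:4} matching the paper's computations almost verbatim. The only cosmetic differences are that you phrase item \ref{prop:subtract-mon:3} as an equality of lower sets closed by antisymmetry (the paper proves the two inequalities directly, which amounts to the same adjunction steps) and that you package item \ref{prop:subtract-mon:4} as uniqueness of inverses after upgrading $\gr\gsum(\gzero\gsub\gr)\gord\gzero$ to an equality, whereas the paper runs the same monoid computation as a single chain of inequalities.
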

\begin{proof}
\cref{prop:subtract-mon:1}. 
From $\gr\gsum\gzero\gord\gr$ we derive 
$\gzero\gord\gr\gsub\gr$ and $\gr\gord\gr\gsub\gzero$. 
Finally, we have $\gr\gsub\gzero = (\gr\gsub\gzero)\gsum\gzero \gord \gr$, which proves the thesis. 

\cref{prop:subtract-mon:2}. 
We have that 
$(\ggr\gsub\gr)\gsum\gr' \gord (\ggr\gsub\gr)\gsum\gr\gord \ggr$ and this implies that 
$\ggr\gsub\gr \gord \ggr\gsub\gr'$, as needed. 

\cref{prop:subtract-mon:3}. 
Suppose that $\ggr\gsub(\gr_1\gsum\gr_2)$ is defined, the proof assuming that $(\ggr\gsub\gr_1)\gsub\gr_2$ is defined is similar. 
From $(\ggr\gsub(\gr_1\gsum\gr_2))\gsum \gr_1\gsum\gr_2\gord\gr$ we deduce that 
$ (\ggr\gsub\gr_1)\gsub\gr_2 $ is defined  and 
$\ggr\gsub(\gr_1\gsum\gr_2) \gord (\ggr\gsub\gr_1)\gsub\gr_2$. 
In addition, we have 
$\gr_1\gsum\gr_2\gsum((\ggr\gsub\gr_1)\gsub\gr_2) \gord \gr_1\gsum(\ggr\gsub\gr_1) \gord \ggr$ 
and this implies 
$(\ggr\gsub\gr_1)\gsub\gr_2\gord\ggr\gsub(\gr_1\gsum\gr_2)$. 

\cref{prop:subtract-mon:4}. 
If $\gzero\gsub\gr = \ggr$, then we have 
$\gr\gsum\ggr\gord\gzero$ and this implies 
$\gr\gsum\ggr = \gzero$ by \cref{def:gr-mon}. 
On the other hand, 
if $\gr\gsum\ggr = \gzero$, then we have 
$\ggr \gord \gzero\gsub\gr 
      = (\gzero\gsub\gr)\gsum\gzero 
      = (\gzero\gsub\gr)\gsum\gr\gsum\ggr 
      \gord \gzero\gsum\ggr 
      = \ggr
$, as needed. 
\end{proof}


\subsection{\cref{sec:language}}
\label{app:language}

\begin{figure}[t]
\begin{small}
\begin{math}
\begin{array}{c}
\begin{array}{cc}
\NamedRule{\varfut}
            {}
            {
                \veeval{\activationrecord, \assocval{\var}{\futid}}{\var}
                \vestep
                \veeval{\activationrecord}{\futid}
            }
            {}
\\[2ex] 
\NamedRule {\varother}          
            {}
            {
                \veeval{\activationrecord, \assocval{\var}{\val}}{\var}
                \vestep
                \veeval{\activationrecord, \assocval{\var}{\val}}{\val}
            }
            {\val \neq \gradedr{\resid}{\ggradeid}, \futid}
& 
\NamedRule{\veval}{}
{ \veeval{\activationrecord}{\val} \vestep \veeval{\activationrecord}{\val} } 
{} 
\\[3ex]
\hline
\\
\end{array}
\\
\NamedRule {\expop}
                {
                    \veeval{\activationrecord_i}{\valueexpression_i}
                    \vestep
                    \veeval{\activationrecord_{i+1}}{\val_i}\quad \forall i\in 1..n 
                }
                {
                    \expeval{\activationrecord_1}{ \opexp{\many{\valueexpression}} }
                    \expstep{\emptylabel}
                    \expeval{\activationrecord_{n+1}}{\return{\val}}
                }
                { \many\ve = \ve_1,\ldots,\ve_n \\ 
                  \defin{\key{op}}(\many\val)= \val  }
\\[4ex]
\begin{array}{cc}
  \NamedRule{\expchoiceleft}
                    { }
                    {
                        \expeval{\activationrecord}{ \nondetchoic{\e_1}{\e_2}}
                        \expstep{\emptylabel}
                        \expeval{\activationrecord}{ {\e_1}}
                    }{}
&
\NamedRule{\expchoiceright}
                    {}
                    {
                       \expeval{\activationrecord}{ \nondetchoic{\e_1}{\e_2}}
                        \expstep{\emptylabel}
                        \expeval{\activationrecord}{ {\e_2}}
                    }{}
\end{array}
\\[2ex]
\NamedRule{\expletred}{
                            \expeval{\activationrecord}{\expression_1}
                            \expstep[\labelid_i]{\labelid_o}
                            \expeval{\activationrecord'}{\expression_1'}
                        }
                        {
                            \expeval{\activationrecord}{\letexp{\var}{\expression_1}{\expression_2}}
                            \expstep[\labelid_i]{\labelid_o}
                            \expeval{\activationrecord'}{\letexp{\var}{\expression_1'}{\expression_2}}
                        }
              {}

\end{array}
\end{math}
\end{small}
\caption{Missing evaluation rules for  value expressions and expressions}    \label{fig:expeval:app}\label{fig:rulesve:app}
\end{figure}

\begin{figure}[!th]
\begin{small}
\begin{math}
\begin{array}{c}
\NamedRule{\procsilent}
               {
                \expeval{\activationrecord}{\expression}
                \expstep{\emptylabel}
                \expeval{\activationrecord'}{\expression'}
            }
            {
                \Conf{\actorenvironment}{\activethread{\activationrecord}{\expression}{\futid}{\actorid}}
                \semanticstep
                \Conf{\actorenvironment}{\activethread{\activationrecord'}{\expression'}{\futid}{\actorid}}
            }
            {}
 \\[3ex]
\NamedRule{\procreturn}
            {
                \veeval{\activationrecord}{\valueexpression}
                \vestep
                \veeval{\activationrecord'}{\val}
            }
            {
                \Conf{
                    \actorenvironment
                }{
                    \activethread{\activationrecord}{\return{\valueexpression}}{\futid}{\actorid}
                }
                \semanticstep
                \Conf{
                    \parallelcomp{
                    \actorenvironment
                }{
                    \placefut{\futid}{\val}
                }
                }{\idleactor{\actorid}}
                
            }
            {}
\\[3ex]
\hline
\\
\begin{array}{cc}
\NamedRule{\proclpar}
            {
                \Conf{\actorenvironment}{\processid}
                \semanticstep
                \Conf{\aactorenvironment}{\processid'}
            }
            {
                \Conf{\actorenvironment}{
                    \parallelcomp{\processid}{\pprocessid}
                }
                \semanticstep
                \Conf{\aactorenvironment}{
                    \parallelcomp{\processid'}{\pprocessid}
                }
            } {}
&
\NamedRule{\procrpar}
            {
                \parallelcomp{\actorenvironment}{\pprocessid}
                \semanticstep
                \parallelcomp{\aactorenvironment}{\pprocessid'}
            }
            {
                \parallelcomp{\actorenvironment}{
                    \parallelcomp{\processid}{\pprocessid}
                }
                \semanticstep
                \parallelcomp{\aactorenvironment}{
                    \parallelcomp{\processid}{\pprocessid'}
                }
            }
            {}       
\end{array}
\\[3ex]
\NamedRule{\proccong}
            {
                \Conf{\actorenvironment}{\PQ}
                \semanticstep
                \Conf{\aactorenvironment'}{\PP'}
            }
            {
                \Conf{\actorenvironment}{\PP} 
                \semanticstep
                \Conf{\aactorenvironment'}{\PP'}
            } {\PP\Pcong\PQ }

\end{array}
\end{math}
\end{small}
\caption{Missing evaluation rules for processes and configurations.}    \label{fig:processeval:app}
\end{figure}

For every process $\processid$, the sets 
$\fp\processid$ and $\fr\processid$ of futures \emph{produced} and \emph{consumed}, respectively, is
defined by:
\begin{align*} 
\fp{\PP} &= \begin{cases}
\{\futid\} & \text{if } \PP = \activethread{\_}{\_}{\futid}{\actorid}\text{ or }
                        \PP = \queuedthread{\_}{\_}{\futid}{\actorid}\text{ or } 
                        \PP= \placefut{\futid}{\_}  \\ 
\emptyset & \text{if }\PP = \pdone \\ 
\fp{\PP_1}\cup\fp{\PP_2} &  \text{if }\PP=\parallelcomp{\PP_1}{\PP_2}
\end{cases}
\\ 
\fr{\PP} &= \begin{cases}
\fr\activationrecord\cup\fr\e 
  & \text{if }\PP = \activethread\activationrecord{\e}{\futid}{\actorid}\text{ or } \PP = \queuedthread\activationrecord{\e}{\futid}{\actorid}\\
\bigcup_{i\in 1..n}\fr{\val_i}& \text{if } \PP= \callmsg{\futid}{\actorid}{\methid}{\val_1,\dots,\val_n} \\
\fr{\val}&  \text{if }\PP=\placefut{\futid}{\val}  \\
\emptyset &\text{if }\PP = \pdone \\ 
\fr{\PP_1}\cup(\fr{\PP_2}\setminus\fr{\PP_1}) & \text{if } \PP=\parallelcomp{\PP_1}{\PP_2}
\end{cases}
\end{align*}
where 
\begin{itemize}
\item $\fr{\assocval{\var_1}{\val_1},\ldots,\assocval{\var_n}{\val_n}} = \bigcup_{i\in 1..n} \fr{\val_i}$ and 
\item $\fr\e$ is the union of all $\fr\ve$ for all value expression $\ve$ occurring in $\e$\ and 
\item $\fr{\val} =\{\futid\}$ if $\val = \futid$, otherwise $\fr{\val} =\emptyset$.
\end{itemize}
In \cref{fig:expeval:app} are the evaluation rules for value expression (top part) and 
expressions missing from the main text. \\
Rule \refRule\varfut states that a variable containing a future
reduces to the latter and is removed from the local environment,
ensuring that futures are used \emph{linearly},
which is another distinctive feature of our calculus. 
Indeed, futures can typically  be read an arbitrary number of times;
however, since in our context they may contain graded resources, 
using futures in an unrestricted way 
could violate the constraint on the usage of the resource expressed by its grade.
Rule \refRule\varother reduces a variable to its value when it is neither a future nor a resource,  and 
Rule \refRule\veval reduces a value to itself and in both cases the resource context remains unchanged. 
Note that a variable containing a resource can reduce only if it is graded, i.e., if the needed amount of the resource is specified. 
This is needed to make resource consumption deterministic, and these annotations 
can be generated by the type checking process. \\
Rule \refRule\expop evaluates a primitive operation to its corresponding value according to its predefined semantics. The value returned cannot be a future. 
%
Rules \refRule\expchoiceleft and \refRule\expchoiceright reduce a non-deterministic choice to either one of the two branches. 
Finally, rule \refRule\expletred is the standard propagation rule for this construct. 

\noindent
In \cref{fig:processeval:app} are the evaluation rules for processes and configurations missing from the main text.\\
Rule \refRule\procsilent propagates silent steps of expressions to configurations, while 
Rule \refRule\procreturn terminates the execution of a thread transforming it into a fulfilled future with the resulting value. 
Rules \refRule{Lpar} and \refRule{Rpar} propagate reductions in a parallel composition and 
Rule \refRule\proccong closes the reduction under the precongruence on processes.

\subsection{\cref{sec:typesystem}}
\label{app:typesystem}
\begin{figure}[th]
 \begin{small}
\begin{math}
\begin{array}{c}
  \NamedRule{\tyval}
  {}
  {
    \WTvexpr\emptycontext{\typingcontext}{\unit}\Unit
  }  
  {\dis{\typingcontext} }
\quad
\NamedRule {\tyvalR}
  {}
  {
    \WTvexpr\emptycontext{\typingcontext}{\gradedr{\resid}{\ggradeid}}\res
    }  
    {\gradeid \gord \ggradeid\\ \dis{\typingcontext} }
\\[2ex]
\NamedRule{\tyvar}
  {} 
  {
     \WTvexpr{\emptycontext}{\var:\T,\typingcontext}{\var}\T
   } 
   { \T\neq\res \\
    \dis{\typingcontext}
   }
\quad
\NamedRule{\tyvalF}
  { }
  {
    \WTvexpr{\futid:\T}{\typingcontext}{\futid}\T
  }  
  {\dis{\typingcontext} }
\\[4ex]
\hline
\\[-2ex]
\NamedRule{\tyop}
  { 
    \WTvexpr{\futcontext_i}{\typingcontext_i }{\ve_i}{\T_i}\quad i\in 1..n
  }
  {
   \WTexpr{\actorcontext}{ \futcontext}{\typingcontext}{1}{\actorid}{\opexp{\many\ve}}{\T}{\actorcontext}
   }
    {  
     \futcontext = \futcontext_1 \linearsum \dots \linearsum \futcontext_n\\
      \typingcontext = \typingcontext_1 \tysum \cdots \tysum \typingcontext_n \\
       \OpTy{\key{op}}{\many\T}{\T} 
  }
  \\[4ex]
 \NamedRule{\tychoice}
  {
    \WTexpr{\actorcontext}{ \futcontext}{\typingcontext}{n_i}{\actorid}{\e_i}{\T}{\aactorcontext}\quad i=1,2 
   }
    {
        \WTexpr{\actorcontext}{ \futcontext}{\typingcontext}{1+p}{\actorid} {\nondetchoic{\e_1}{\e_2}}{\T}{\aactorcontext}
    }
   {p=n_1\ \vee\ p=n_2}         
 \\[4ex]   
   \NamedRule {\tyreturn}
   {
     \WTvexpr{\futcontext}{\typingcontext}{\ve}{\T}
   }
   {
        \WTexpr{\actorcontext}{ \futcontext}{\typingcontext}{0}{\actorid}{\return{\valueexpression}}{\T}{\actorcontext}
    }
    {}
   \\[4ex]
\hline
\\[-2ex]
 \NamedRule{\tyidle}
      {}
      {
       \WTproc\emptycontext{\emptycontext}{0}{\idleactor{\actorid} }{\emptycontext}
      }
       {}
\quad
 \NamedRule{\tydone}
      {}
      {
       \WTproc\emptycontext{\emptycontext}{0}{\pdone}{\emptycontext}
      }
       {}
\end{array}    
\end{math}
\end{small}
\caption{Missing typing rules for value expressions, expressions and processes}
\label{fig:types4expr:app}
\end{figure}
In \cref{fig:types4expr:app} are the typing rules for value expressions, expressions and processes missing from the main text. \\
Rules for values are straightforward, noting that in Rule \refRule\tyvalF the future is required to be the only one appearing in the future context with its type. 
Rule \refRule\tyvar extracts a variable of a non-resource type from the typing context. 
Rule \refRule\tyvarR types a graded variable $\gradedr\var\gr$ provided that $\var$ appears in the context with a grade $\ggr$ larger than $\gr$. 
Note that in all rules the future context is empty (except  for \refRule\tyvalF) and additional (not used) variables are allowed only provided that they are discardable. 
\\
Rule \refRule\tyop just checks that the arguments have the expected types and do not affect either of the actor contexts. 
Regarding this rule, we assume that every primitive operation $\key{op}$ in $\Sigop$ comes with a typing 
$\OpTy{\key{op}}{\T_1,\ldots,\T_n}{\T}$ 
where types cannot be futures and are such that, 
whenever $\WTvexpr\emptyctx\emptyctx{\val_i}{\T_i}$ for all $i \in 1..n$, then 
$\defin{\key{op}}({\val_1,\ldots,\val_n}) = \val$ 
and $\WTvexpr\emptyctx\emptyctx\val\T$. 
Again this guarantees that the typing and the dynamic semantics of primitive operations are consistent. 
Rule \refRule\tychoice checks that the two branches  of a non-deterministic choice are typed with the same type and contexts, which are also assigned to the whole expression. 
The measure is~$1$ plus the measure of one of the two branches. 
This allows not completely forbidding non-termination, as we only ensure that the measure decreases  in one of the two branches, while in the other can become arbitrarily large. 
Finally, Rule \refRule\tyreturn checks that the return value expression is well-typed and does not affect the actor contexts. 
The measure of the \key{return} is $0$ as it cannot reduce. 
\\
Rules \refRule\tyidle and \refRule\tydone state that an idle actor and $\pdone$ are well-typed in the empty contexts and with measure $0$.



\section{Proof of Fair Termination}
\label{apx:proof}

Firstly, we prove type preservation. 
This result ensures that any step taken in a well-typed configuration will reduce to a well-typed configuration. 
\par 
To denote correct steps we introduce typing of labels, that will only be used for the subject reduction proof.
The typing rules for expressions in local environments and labels used in the proofs are presented in \cref{fig:types4conf} and \cref{fig:types4labels} respectively.

\begin{figure}[th]
\begin{math}
\begin{array}{c}
 \NamedRule{\tyemptyenv}
   { }
   {
        \WTenv{\futcontext}{\emptycontext}{\emptycontext}
   }
   {}
   \quad   \quad
\NamedRule{\tyenv}
  { \WTenv{\futcontext}{\activationrecord}{\typingcontext}\quad \WTvexpr{\futcontext'}{\emptycontext}{\val}{\T}}
   {
        \WTenv{\futcontext,\futcontext'}{\assocval\var\val,\activationrecord}{\var:\T,\typingcontext}
   }
   {}
 \\[4ex]
 \NamedRule{\tyve}
   {
      \WTenv{\futcontext}{\activationrecord}{\typingcontext\gsum\ttypingcontext}
       \quad
      \WTvexpr{\futcontext'}{\typingcontext}{\ve}{\T}
   }
   {
      \WTvectx{\ttypingcontext}{\futcontext, \futcontext'}{\typingcontext}
      {\veeval{\activationrecord}{\ve}}{\T}
   }
   {}
   \quad
\NamedRule{\tyexp}
  { 
    \WTenv\futcontext\activationrecord{\typingcontext\gsum\ttypingcontext}\quad
    \WTexpr{\actorcontext}{ \futcontext'}{\typingcontext}{n}{\actorid}{\e}{\T}{\actorcontext'}
  }
  {
     \WTexpctx{\ttypingcontext}\actorcontext{\futcontext,\futcontext'}{\typingcontext}{n}\actorid{\expeval{\activationrecord}{\e}}\T\actorcontext'
  }
  {}
\end{array}
\end{math}
\caption{Rules for expressions in local environment}
\label{fig:types4conf}
\end{figure}

\begin{figure}[th]
    \begin{math}
        \begin{array}{c}
         \NamedRule{\tylblsilent}
           {}
          {
            \WTlab\emptycontext\emptycontext{\actorid}\emptylabel\emptycontext\emptycontext
          }
         {}
         \\[4ex] 
        \NamedRule{\tylblrls}
          {}
          {
          \WTlab\emptycontext\emptycontext{\actorid}{\rlsmsg{\gradedr{\resid}{\gradeid}}}\emptycontext {\actorid:{\res}}
          }
          {}
          \quad\quad
        \NamedRule{\tylblhold}
          {}
          {
          \WTlab{\actorid:{\res}}\emptycontext{\actorid}{\holdmsg{\gradedr{\resid}{\gradeid}}}\emptycontext \emptycontext
          }
          {}
         \\[4ex] 
        \NamedRule{\tylblcall}
           {
                       \WTvexpr{\futcontext_i}{\emptycontext}{\val}{\T_i}\quad i\in 1\dots n
            }
         {
           \WTlab{ \actorcontext}{\futcontext_1 \linearsum \dots \linearsum \futcontext_n}{\actorid}{ \callmsg{\futid}{\actorid}{\methid}{\many{\val}}}  {\futid:  \FutTy{\T}{\aactorcontext}} 
           \emptycontext
          }
          { \mtype{\actorid}{\methid} =\MType{\actorcontext}{\many\T}{\T}{n}{\aactorcontext}}
          \\[4ex] 
        \NamedRule{\tylblresult}
          {
            \WTvexpr{\futcontext}{\emptycontext}{\val}{\T}
          }
          {
             \WTlab\actorcontext{\futcontext} {\actorid}{\placefut{\futid}{\val}}{\futid:  \FutTy{\T}{\actorcontext}} \emptycontext
        }
        {}
         \end{array}
        \end{math}
        \caption{Typing rules for labels}
         \label{fig:types4labels}         
\end{figure}

\begin{lemma}[Subject reduction for value expressions] 
  \label{lem:srve}
    If $\WTvectx{\typingcontext'}{\futcontext}{\typingcontext}{\veeval{\activationrecord}{\valueexpression}}{\T}$,
    and $\veeval{\activationrecord}{\valueexpression} \vestep \veeval{\activationrecord'}{\valueexpression'}$. 
    Then: $\WTvectx{\typingcontext'}{\ffutcontext}{\ttypingcontext}{\veeval{\activationrecord'}{\valueexpression'}}{\T}$ 
\end{lemma}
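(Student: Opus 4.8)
The plan is to proceed by a case analysis on the last rule used in the derivation of the step $\veeval{\activationrecord}{\valueexpression} \vestep \veeval{\activationrecord'}{\valueexpression'}$, i.e.\ on one of \refRule\veval, \refRule\varother, \refRule\varfut and \refRule\varres. In every case I first \emph{invert} the typing hypothesis: since \refRule\tyve is the only rule deriving a judgment of the shape $\WTvectx{\typingcontext'}{\futcontext}{\typingcontext}{\veeval{\activationrecord}{\ve}}{\T}$, inversion splits it into an environment typing $\WTenv{\futcontext_\lambda}{\activationrecord}{\typingcontext\gsum\typingcontext'}$ and a value-expression typing $\WTvexpr{\futcontext_e}{\typingcontext}{\ve}{\T}$ with $\futcontext = \futcontext_\lambda,\futcontext_e$. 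I then inspect the shape of $\ve$ to see which value-expression rule produced the second premise, and I rebuild a \refRule\tyve derivation for $\veeval{\activationrecord'}{\ve'}$ that preserves both the type $\T$ and the leftover context $\typingcontext'$ recorded in the subscript, while leaving the future context and the ``used'' context free to change.

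Three cases are routine. For \refRule\veval nothing changes and the conclusion is the hypothesis itself. For \refRule\varother the value $\val$ is necessarily $\unit$ (the only value that is neither a resource nor a future), so the premise was obtained by \refRule\tyvar with $\T=\Unit$; after the step I retype $\unit$ by \refRule\tyval and move $\var:\Unit$ into the ``used'' context, which is legitimate because a $\Unit$-typed variable is discardable and $\Unit\gsum\Unit=\Unit$ keeps the environment typing intact. For \refRule\varfut the variable holds a future, so the premise came from \refRule\tyvar with a future type $\FutTy{\T_0}{\actorcontext}$; here I exploit the linear treatment of futures, i.e.\ that $\var:\FutTy{\T_0}{\actorcontext}$ cannot occur in $\typingcontext'$, so removing $\var$ from $\activationrecord$ leaves $\typingcontext'$ untouched, and the entry $\futid:\FutTy{\T_0}{\actorcontext}$ simply migrates from $\futcontext_\lambda$ into the value-expression's future context, retyping $\futid$ by \refRule\tyvalF.

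The substantial case is \refRule\varres, where $\ve=\gradedvar\var\gradeid$ reduces to $\gradedr\resid\gradeid$ and the environment grade of $\var$ drops from $\ggradeid$ to $\ggradeid\gsub\gradeid$. Here the second premise is derived by \refRule\tyvarR, giving $\typingcontext=\var:\gradedr\resid{g_2},\typingcontext_d$ with $\gradeid\gord g_2$, $\typingcontext_d$ discardable and $\T=\gradedr\resid\gradeid$; inverting \refRule\tyenv on the binding $\assocval\var{\gradedr\resid\ggradeid}$ shows that $\typingcontext\gsum\typingcontext'$ assigns $\var$ the type $\gradedr\resid{g_1}$ with $g_1\gord\ggradeid$, where $g_1=g_2\gsum g_3$ and $g_3$ is the grade $\typingcontext'$ contributes to $\var$ (taken to be $\gzero$ if $\var\notin\typingcontext'$). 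The new value $\gradedr\resid\gradeid$ is retyped by \refRule\tyvalR, which requires only $\gradeid\gord\gradeid$ and a fully discardable used context, so I may discard $\var$ from the used context; it then remains to re-derive the environment typing for $\activationrecord,\assocval\var{\gradedr\resid{\ggradeid\gsub\gradeid}}$ against $\typingcontext'$, which reduces to checking $g_3\gord\ggradeid\gsub\gradeid$.

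This last inequality is precisely where the algebra of subtractive grade monoids enters, and it is the main obstacle of the proof. From $\gradeid\gord g_2$ and monotonicity of $\gsum$ I get $\gradeid\gsum g_3\gord g_2\gsum g_3=g_1\gord\ggradeid$, and then \cref{def:subtract-mon}, applied to $\gradeid\gsum g_3\gord\ggradeid$, yields both that $\ggradeid\gsub\gradeid$ is defined and that $g_3\gord\ggradeid\gsub\gradeid$, which is exactly what is needed to type the updated binding; moreover \refItem{prop:subtract-mon}{1} confirms that $\ggradeid\gsub\gradeid$ is discardable, so $\typingcontext'$ stays admissible. Reassembling these pieces through \refRule\tyenv and \refRule\tyve produces the required $\WTvectx{\typingcontext'}{\ffutcontext}{\ttypingcontext}{\veeval{\activationrecord'}{\gradedr\resid\gradeid}}{\T}$, completing the case and hence the lemma.
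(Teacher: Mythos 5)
Your proof is correct and follows essentially the same route as the paper's: a case analysis on the reduction rule, inverting \refRule\tyve into an environment typing plus a value-expression typing, and rebuilding the derivation in each of the four cases. Your treatment of \refRule\varres is in fact more explicit than the paper's about the grade bookkeeping --- splitting the environment grade into the used part $g_2$ and the residual part $g_3$ and invoking the adjunction property of \cref{def:subtract-mon} to obtain that $\ggradeid\gsub\gradeid$ is defined and $g_3\gord\ggradeid\gsub\gradeid$ --- which is exactly the point the paper's own argument glosses over.
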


\begin{proof}
    In order to apply (\tyve) $\ffutcontext$ is split into $\ffutcontext_1,\ffutcontext_2$ and by inversion of (\tyval) on hypothesis we know H1: $\WTenv{\futcontext_1}{\activationrecord,\assocval{\var}{\futid}}{\typingcontext_{\typingcontext'}}$.
    We proceed by induction on the reduction. 
    \begin{itemize}
        \item (\varres)
            $ \veeval{
                                \activationrecord, \assocval{\var}{\gradedr{\resid}{\ggradeid}}
                            }
                            {\gradedvar{\var}{\gradeid}}
                        \vestep
                            \veeval{
                                \activationrecord, \assocval{\var}{\gradedr{\resid}{\ggradeid \gsub \gradeid}}
                            }
                            {\gradedr{\resid}{\gradeid}}
                        $ 
            By well-typeness hypothesis (\tyve) holds for a certain $\futcontext=\futcontext_1,\futcontext_2$. 
            By (\tyvarR) $\futcontext_2=\emptycontext$ and $\typingcontext=\typingcontext'',\assoctype{\var}{\gradedr\resid\ggradeid}$, with $\dis{\typingcontext'}$.
            Therefore, we define $\ffutcontext_2=\emptycontext$ and $\ffutcontext_1=\futcontext_1$. 
            By (\tyvalR) with $\ffutcontext_2$ and $\ttypingcontext=\typingcontext'',\assoctype{\var}{\ggradeid\gsub\gradeid}$. 
            Where $\ttypingcontext$ is discardable due to all of its elements being discardable.
            Finally, $\WTenv{\ffutcontext_1}{\activationrecord,\assocval{\var}{\gradedr{\resid}{\ggradeid\gsub\gradeid}}}{\ttypingcontext\gsum\typingcontext'}$ by (\tyenv) using $\typingcontext''\gsum\ttypingcontext'$ for $\activationrecord$ and $\assoctype{\var}{\gradedr\resid\ggradeid}$ for $\var$. 
        \item (\varfut) $
        \veeval{\activationrecord, \assocval{\var}{\futid}}{\var}
        \vestep
        \veeval{\activationrecord}{\futid}
        $
            Applying (\tyvalF) on $\futid$ requires $\ffutcontext_2=\assoctype{\futid}{\T}$ and $\dis{\ttypingcontext}$.
            By inversion of (\tyenv) on H1: $\WTenv{\futcontext_1'}{\activationrecord}{\typingcontext''_{\typingcontext'}}$, where $\futcontext_1 = \futcontext_1',\assoctype{\futid}{\T}$ and $\typingcontext=\typingcontext'',\assoctype{\var}{\T}$.
            By (\tyvar) we know $\typingcontext''$ and therefore we can
            define $\ffutcontext_1=\futcontext_1'$ and 
            $\ttypingcontext=\typingcontext''$ to apply (\tyve) on:
            $\WTvectx{\typingcontext'}{\ffutcontext}{\ttypingcontext}{\veeval{\activationrecord}{\futid}}{\T}$.
        \item (\varother) $\veeval{\activationrecord, \assocval{\var}{\val}}{\var}
        \vestep
        \veeval{\activationrecord, \assocval{\var}{\val}}{\val}$
            By sidecondition of (\tyve) $\val=\unit$, therefore we can apply (\tyve) by the same premise as the inversion of (\tyve) on the hypothesis, plus (\tyval) for the value expression.
        \item (\veval) $\veeval{\activationrecord}{\val} \vestep \veeval{\activationrecord}{\val}$ trivial.
    \end{itemize}
\end{proof}

\begin{lemma}[Subject reduction of expressions] \label{lem:sre}	
    If (*)$\WTexpctx{\typingcontext'}{\actorcontext}{\futcontext}{\typingcontext}{n}{\actorid}{\expeval{\activationrecord}{\expression}}{\T}{\actorcontext'}$
    and 
    (**)$\expeval{\activationrecord}{\expression} \expstep[\labelid']{\labelid} \expeval{\activationrecord'}{\expression'}$ 
    and (***)$\WTlab{\actorcontext''}{\futcontext'}{\actorid}{\labelid'}{\futcontext''}{\aactorcontext''}$
    $\implies$
    \begin{itemize}
        \item $\WTexpctx{\typingcontext'}{\aactorcontext}{\ffutcontext,\ffutcontext_1'}{\ttypingcontext}{m}{\actorid}{\expeval{\activationrecord'}{\expression'}}{\T}{\actorcontext'}$
        \item $\WTlab{\aactorcontext_1}{\ffutcontext_1}{\actorid}{\labelid}{\ffutcontext_1'}{\aactorcontext_1'}$.
    \end{itemize}
    Where: 
    \begin{itemize}
        \item (a) $\aactorcontext \gsum \aactorcontext_1 \gsum \actorcontext'' \gord \actorcontext \gsum \aactorcontext_1' \gsum \aactorcontext''$ 
        \item (b) $\futcontext = \ffutcontext, \ffutcontext_1, \futcontext'$
    \end{itemize}
\end{lemma}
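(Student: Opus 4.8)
The plan is to argue by induction on the derivation of the reduction \((\ast\ast)\), which amounts to a case analysis on the last applied evaluation rule for expressions (\cref{fig:expeval,fig:expeval:app}), with the genuine induction used only for the propagation rule \refRule\expletred. In every case I first invert the wrapping rule \refRule\tyexp on the hypothesis \((\ast)\): this separates the local environment typing \(\WTenv\futcontext\activationrecord{\typingcontext\gsum\ttypingcontext}\) (derived by \refRule\tyenv\ or \refRule\tyemptyenv) from the typing of the bare expression \(\e\). I then invert the expression-typing rule dictated by the syntactic shape of \(\e\), and, whenever the step evaluates value expressions (as in \refRule\expcall, \refRule\expawait, \refRule\exprls), I appeal to \cref{lem:srve} to track how the environment typing and its future context evolve as each argument is consumed. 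The reduced judgment of conclusion~1 is then rebuilt with \refRule\tyexp by re-assembling the updated environment typing with the typing of the residual expression \(\e'\), keeping the type \(\T\) and the produced context \(\actorcontext'\) fixed as required.

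The informative cases are those carrying a non-trivial label. For the output-producing steps \refRule\expcall and \refRule\exprls the input label is \(\emptylabel\), so \((\ast\ast\ast)\) is \refRule\tylblsilent\ and contributes only empty contexts, and I must exhibit the output-label typing of conclusion~2: inverting \refRule\tycall (resp. \refRule\tyrls) yields exactly the premises of \refRule\tylblcall (resp. \refRule\tylblrls), so the label is typed with the method's required context and the freshly created future \(\futid\) as produced future (resp. with the released resource as produced actor context), while the residual \(\return\futid\) (resp. \(\return\unit\)) is typed by \refRule\tyreturn. Dually, for the input-consuming steps \refRule\expawait and \refRule\exphold the output label is \(\emptylabel\) and the input-label typing \((\ast\ast\ast)\) is given; here inverting \refRule\tyawait/\refRule\tyhold tells us which futures/resources the step removes, and I feed the value carried by \refRule\tylblresult (resp. the resource demanded by \refRule\tylblhold) into the typing of the residual \(\return{\cdots}\) expression. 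In all these cases the grade side condition \(\gradeid\gord\ggradeid\) comes for free from the premise of the inverted typing rule.

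The purely silent rules \refRule\expop, \refRule\expchoiceleft, \refRule\expchoiceright leave both actor contexts unchanged and are immediate from \refRule\tyop, \refRule\tychoice, \refRule\tyreturn. For \refRule\explet I use \cref{lem:srve} to type the value \(\val\) extracted from \(\return\ve\) and then move it into the environment via \refRule\tyenv under the fresh name, re-typing \(\Subst{\e_2}{\var}{\vvar}\); the internal context \(\actorcontext_1'\) that \refRule\tylet threads from \(\e_1\) to \(\e_2\) is preserved because the produced context of \(\e_1\) does not change. The propagation rule \refRule\expletred is handled by the induction hypothesis applied to the sub-reduction of \(\e_1\) with the same labels, after which \refRule\tylet is re-applied using the updated required context of \(\e_1'\) and its unchanged produced context.

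The crux of the argument is discharging the two global bookkeeping conditions \emph{simultaneously}. For the future-splitting condition~(b) I must partition the original consumed-future context \(\futcontext\) into the part \(\futcontext'\) absorbed by the input label, the part \(\ffutcontext_1\) claimed by the output label, and the remainder \(\ffutcontext\) inherited by the continuation; this is most delicate in the future-carrying cases \refRule\expcall and \refRule\expawait, where a future is created or consumed and its type transports an actor context. For the resource-balance condition~(a), \(\aactorcontext\gsum\aactorcontext_1\gsum\actorcontext''\gord\actorcontext\gsum\aactorcontext_1'\gsum\aactorcontext''\), I exploit the flexibility built into the expression rules—every rule except \refRule\tylet may add an arbitrary actor context to both the required and produced sides—to choose the new required context \(\aactorcontext\) so that the resources taken by the input label and released by the output label exactly account for the difference between the required contexts before and after the step. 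I expect this combined balance, rather than any individual reduction case, to be the main obstacle, since it forces one coherent choice of all the existentially quantified contexts across the resource- and future-flows at once.
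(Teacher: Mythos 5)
Your proposal is correct and follows essentially the same route as the paper's proof: induction on the reduction rule with inversion of \refRule\tyexp and the relevant expression-typing rule, repeated appeal to \cref{lem:srve} for the value-expression premises, typing of the emitted/consumed label via the rules of \cref{fig:types4labels}, and explicit discharge of the balance conditions (a) and (b) in each case. The only notable difference is cosmetic: you treat \refRule\explet and the choice rules as base cases built directly from \refRule\tyenv and \refRule\tychoice, which is slightly cleaner than the paper's appeal to the induction hypothesis there, but the argument is the same.
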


\begin{proof}
    By (\tyexp) $\futcontext = \futcontext_1,\futcontext_2$.
    By induction on the reduction rules of expressions.
    The base cases are the following:
    \begin{itemize}
        \item \textbf{(exp-await)}
            $\expeval{\activationrecord}{\awaitexp{\valueexpression}}
            \expstep[\placefut{\futid}{\val}]{}
            \expeval{\activationrecord'}{\return{\val}}$. 

            By hypothesis we know that (\tyawait) holds then: 
            $
            \actorcontext \joinctx \futcontext \joinctx \typingcontext 
            \vdash_{\actorid}^{n} 
            \assoctype{\awaitexp{\valueexpression}}
            {\T \joinctx \actorcontext \gsum \actorcontext''} 
            $, given that H1:
            $\futcontext \joinctx \typingcontext \vdash_{\actorid} 
            \assoctype{\valueexpression}{\FutTy{\T}{\actorcontext''}}$, 
            so 
            $\actorcontext' = \actorcontext \gsum \actorcontext''$,
            and by (\tyvarF) 
            $\futcontext = \assoctype{\var}{\FutTy{\T}{\actorcontext''}}$ and 
            $\typingcontext = \assoctype{\var}{\FutTy{\T}{\actorcontext''}}$.
            The first goal is to type $\expression'$: 
            $\actorcontext' \joinctx \ffutcontext \joinctx \ttypingcontext 
            \vdash_{\actorid}^m \assoctype{\return{\val}}
            {\T \joinctx \actorcontext'}$.  
            Due to (***) and (b): $\ffutcontext$ must be $\emptycontext$ 
            and due to $\labelid$ being silent we know that $\ffutcontext_1' = \emptycontext$.
            Then to apply the (\tyresult) rule we need to show that:
            $\emptycontext \joinctx \ttypingcontext \vdash \assoctype{\val}{\T}$.
            We do this using the premise of (\tylblresult) with empty context tells us that: $\vdash \assoctype{\val}{\T}$.
            We can apply (\tyval) showing that $\ttypingcontext$ is 
            discardable.
            This follows from Subject Reduction of value expressions applied to 
            $\veeval{\activationrecord}{\var} \vestep \veeval{\activationrecord'}{\futid}$ with residual environment $\typingcontext'$ and H1:
            $
            \WTvectx{\typingcontext'}{\assoctype{\futid}{\FutTy{\T}{\actorcontext''}}}{\ttypingcontext}{\veeval{\activationrecord'}{\futid}}{\T}$ and 
            $\emptycontext \vdash \assocctx{\activationrecord'}{\ttypingcontext'}$ where 
            $\activationrecord' = \activationrecord, \assocval{\var}{\futid}$. 
            By sidecondition of 
            $\assoctype{\futid}{\FutTy{\T}{\actorcontext''}} \joinctx\ttypingcontext' 
            \vdash \assoctype{\futid}{\T}$ we know that $\ttypingcontext'$ is discardable, and by premise 
            of (\tyve) we can define $\ttypingcontext= \ttypingcontext'$.
            Finally by (***) $\aactorcontext'' = \actorcontext''$ and substitution proves (a).
        \item \textbf{(exp-hold)}
        $\expeval{\activationrecord}{\holdexp{\gradeid}{\resid}}
         \expstep{\holdexp{\gradeid}{\resid}}
         \expeval{\activationrecord}{\return{\gradedr{\resid}{\gradeid}}}$.
        By (\tyhold) $\WTexpr{\assocresenv{\actorid}{\gradedr{\resid}{\gradeid}} \gsum \actorcontext'}{\emptycontext}{\emptycontext}{n}{\actorid}{\holdexp{\gradeid}{\resid}}{\gradedr{\resid}{\gradeid}}{\actorcontext'}$, and by (\tyenv) 
        $\WTenv{\futcontext}{\activationrecord}{\typingcontext \gsum \typingcontext'}$.
        Then we apply (\tyvalR) $\WTvexpr{\emptycontext}{\emptycontext}{\gradedr{\resid}{\gradeid}}{\gradedr{\resid}{\gradeid}}$ and (\tyreturn): $\WTexpr{\actorcontext'}{\futcontext}{\typingcontext \gsum\typingcontext'}{n}{\actorid}{\return{\gradedr{\resid}{\gradeid}}}{\gradedr{\resid}{\gradeid}}
        {\actorcontext'}$. 
        The typing of the label is: 
        $
        \WTlab{\actorid:{\res}}\emptycontext{\actorid}{\holdexp{\gradedr{\resid}{\gradeid}}}\emptycontext \emptycontext
        $ and $\labelid$ is $\emptylabel$ so (a) is proven by substitution and reflexivity. 
        Finally (b) holds by $\futcontext'=\emptycontext$ and $\ffutcontext_1'=\emptycontext$.
        
        \item \textbf{(exp-release)}
            $
            \expeval{\activationrecord}{\releaseexp{\gradeid}{\valueexpression}}
            \expstep{\rlsmsg{\gradedr{\resid}{\gradeid}}}
            \expeval{\activationrecord'}{\return{\unit}}
            $.
            Applying \cref{lem:srve} on the premise of (\exprls) with residual context $\typingcontext'$, we get H1:
            $\WTvectx{\typingcontext'}{\ffutcontext}{\ttypingcontext}{\veeval{\activationrecord'}{\gradedr{\resid}{\ggradeid}}}{\gradedr{\resid}{\gradeid}}$ with $\gradeid\gord\ggradeid$,
            and by inverting (\tyvalR) on $\WTvexpr{\emptycontext}{\ttypingcontext}{\gradedr{\resid}{\ggradeid}}{\gradedr{\resid}{\gradeid}}$ we know $\dis\ttypingcontext$.
            We use $\ffutcontext$ and $\ttypingcontext$ to type the new expression environment with residual $\typingcontext'$:
            $\WTexpctx{\typingcontext'}{\actorcontext'}{\ffutcontext}{\ttypingcontext}{0}{\actorid}{\return{\unit}}{\Unit}{\actorcontext'}$, by using the environment premise of H1 plus (\tyresult) with $\dis\ttypingcontext$.
            The label is typed by (\tylblrls) and (a) holds by substitution: $\actorcontext' \gord \actorcontext\gsum\assocresenv{\actorid}{\gradedr{\resid}{\gradeid}}$, where $=$ holds by (\tyrls) on hypothesis.
            Finally (b) hold by $\futcontext=\ffutcontext$.
        \item \textbf{\expcall}
            $
            \expeval{\activationrecord_1}{\callexp{\actorid}{\methid}{\many{\valueexpression}}}
            \expstep{\callmsg{\futid}{\actorid}{\methid}{\many\val}}
            \expeval{\activationrecord_{k+1}}{\return{\futid}}
            $.
            Applying \cref{lem:srve} iteratively on the arguments with residual typing context $\typingcontext'$ we obtain: $
            \WTvectx{\typingcontext'}{\ffutcontext_k}{\ttypingcontext_{k}}{\veeval{\activationrecord_{k}'}{\val_{k}}}{\T_k}$.
            According to the method table: $\MType{\aaactorcontext}{\many{T'}}{\T}{p}{\aaactorcontext'}$.
            Then we type using $\ffutcontext=\ffutcontext_k$ and $\ttypingcontext=\ttypingcontext_{k}$: 
            $\WTexpctx{\typingcontext'}{\actorcontext'}{\ffutcontext, \assoctype{\futid}{\FutTy{\T}{\aaactorcontext'}}}{\ttypingcontext}{0}{\actorid}{\expeval{\actorcontext_k'}{\return{\futid}}}{\FutTy{\T}{\aaactorcontext'}}{\actorcontext'}$
            by using (\tyreturn) with $\assoctype{\futid}{\FutTy{\T}{\aaactorcontext'}}$ and $\ttypingcontext_k$ (by \cref{lem:srve} we know it is discardable), 
            and (\tyenv) with $\ffutcontext$.
            The label is typed by (\tylblcall) making the goal (a):
            $\actorcontext' \gsum\aaactorcontext'\gord\actorcontext$,
            which holds with $(=)$ by (\tycall); 
            and (b):
            $\futcontext=\ffutcontext,\ffutcontext_1$, which holds by inversion of (\tyexp) and (\tycall) on well-typedness hypothesis.
        \item \textbf{\expop} $
            \expeval{\activationrecord_1}{ \opexp{\many{\valueexpression}} }
            \expstep{\emptylabel}
            \expeval{\activationrecord_{n+1}}{\return{\val}}
            $.
            Similarly to (\tycall), \cref{lem:srve} is applied iteratively on $\veeval{\activationrecord_i}{\valueexpression_i}$ with residual typing context $\typingcontext'$, obtaining H1: $\WTvexpr{\ffutcontext_k}{\ttypingcontext_{k,\typingcontext'}}{\veeval{\activationrecord_{k+1}}{\return{\val}}}{\T'}$.
            The label is typed with (\tylblsilent).
            To apply (\tyexp) we define $\ffutcontext=\ffutcontext_k$ and $\ttypingcontext_{\typingcontext'}=\ttypingcontext_{k,\typingcontext'}$, and by inversion of (\tyenv) on H1 plus (\tyresult):
            $\WTexpr{\actorcontext'}{\ffutcontext}{\ttypingcontext}{1}{\actorid}{\return{\val}}{\T}{\actorcontext'}$. 
            Where well-typedness of $\val$ is given either by (\tyvalR) or (\tyval).
            Finally, (a): $\actorcontext'\gord\actorcontext$ is given by (\tyop), and (b) $\futcontext=\ffutcontext=\emptycontext$. 
            This is the case due to well-formedness assumption of $\opexp{}$. 
    \end{itemize}
    The inductive cases:
    \begin{itemize}
        \item \textbf{(exp-let)}
            $\expeval{\activationrecord}{\letexp{\var}{\expression_1}{\expression_2}}
            \expstep[\labelid']{\labelid}
            \expeval{\activationrecord'}{\letexp{\var}{\expression_1'}{\expression_2}}$.

            By hypothesis there are context splittings: 
            $\actorcontext = \actorcontext_1 \gsum \actorcontext_2$, 
            $\futcontext = \futcontext_1, \futcontext_2$,
            $\typingcontext = \typingcontext_1 \gsum \typingcontext_2$,
            $\actorcontext' = \actorcontext_1'' \gsum \actorcontext_2'$. 
            Applying I.H. on well-typeness of $\expression_1$ with residual typing context: $\typingcontext' \gsum \typingcontext_2$: 
            $IH_1$:
                $ \aactorcontext_I \joinctx \ffutcontext_I, \futcontext_{1I}'
                \joinctx \ttypingcontext_I \gsum \typingcontext' \gsum \typingcontext_2
                    \vdash_{\actorid}^m \assoctype{\expeval{\activationrecord'}{\expression_1'}}{\T' \joinctx \actorcontext_1'}
                $. 
            With a splitting $\ffutcontext_I, \futcontext_{1I}' = \ffutcontext_I', \ffutcontext_I''$, such that: $\ffutcontext_I' \vdash \assocctx{\activationrecord}{\ttypingcontext_I \gsum \typingcontext' \gsum \typingcontext_2}$,
            $IH_2$:
                $\aactorcontext_{1H} \joinctx \futcontext_{1H} \vdash_{\actorid} \assocctx{\labelid}{\futcontext_{1I}' \joinctx \aactorcontext_{1I}'}$. 
            Defining $\aactorcontext_{1H} = \aactorcontext_{1}, \ffutcontext_{1H} =\ffutcontext_1, \ffutcontext_{1I}' = \ffutcontext_{1}'$ and $\ttypingcontext_{1I}' = \ttypingcontext_{1}'$ we prove typing of the label.
            $IH_a$:
                $\aactorcontext_I \gsum \aactorcontext_1 \gord \actorcontext_1 \gsum \aactorcontext_1' \gsum \aactorcontext''$.
            $IH_b$:
                $\futcontext_1 = \ffutcontext_I, \ffutcontext_1', \futcontext'$.
            Using the premise of (*) and $IH_1$ and $IH_3$: 
                $\aactorcontext_I \gsum \actorcontext_2 \joinctx \ffutcontext_I, \ffutcontext_1' \joinctx \ttypingcontext_I \gsum \typingcontext'
                \vdash_{\actorid}^{m+n} \assoctype{\letexp{\var}{\expression_1'}{\expression_2}}{\T \joinctx \actorcontext'}$.
            By this result and $\ffutcontext'', \futcontext_2 \vdash \assocctx{\activationrecord'}{\typingcontext_I \gsum \typingcontext_2}$, we apply (\tyexp).
            (a): 
            $\aactorcontext_I \gsum \actorcontext_2 \gsum \aactorcontext_1 
            \gord 
            \actorcontext_1 \gsum \actorcontext_2 \gsum \aactorcontext_1' \gsum \aactorcontext''$, by $IH_a$ we have: $\actorcontext_2 \gord \actorcontext_2$.
            (b): $\futcontext = \futcontext_1, \futcontext_2 = \ffutcontext_I, \ffutcontext_1, \futcontext', \futcontext_2 = \ffutcontext, \ffutcontext_1, \futcontext'$ by $IH_b$.

        \item \textbf{(exp-let2)}
            $\expeval{\activationrecord}{\letexp{\var}{\return{\valueexpression}}{\expression_2}}
            \expstep{\emptylabel}
            \expeval{\activationrecord', \assocval{\vvar}{\val}}{\varbind{\expression_2}{\var}{\vvar}}$.
            By (T-let) we know that $\valueexpression$ types to $\T$ and that: 
            $
            \actorcontext_2 \gsum \actorcontext_1'' 
                    \joinctx 
                    \futcontext_2 
                    \joinctx 
                    \typingcontext_2, \assoctype{\var}{\T}
                    \vdash_{\actorid}^{n}
                    \assoctype{
                        \expression_2
                        }
                        {
                            \T' \joinctx \actorcontext_2'
                        }
            $.
            Then SR holds by I.H. on $\varbind{\expression_2}{\var}{\vvar}$.
        \item \textbf{(exp-ndet) and (exp-ndet2)}
            By I.H. on the subterm.
        
    \end{itemize}
\end{proof}

The precongruence of configurations ($\Pcong$) is used to make bureaucratic progress in a configuration,
these are computationally irrelevant steps such as moving a $\idleactor{\actorid}$ process throughout the configuration, swapping independent processes, idling awaiting threads and restarting idle threads.
These last two characterize the cooperative schedule paradigm. 
It is important to prove that this relation does not affect type nor measure of configurations, due to its computational irrelevance.

\begin{definition}[Precongruence of configurations]
  Given $\config = \Conf{\actorenvironment}{\processid}$ and $\config' = \Conf{\actorenvironment}{\processid'}$, if $\processid \Pcong \processid'$ then $\config \Pcong \config'$.
\end{definition}

\begin{lemma}[Subject precongruence]\label{lem:spre}
    Given configurations $\config$ and $\config'$ such as $\config \Pcong \config'$, 
    if $\actorcontext \joinctx \futcontext \vdash^{p} \assocctx{\config}{\futcontext}$ then 
    $\actorcontext \joinctx \futcontext \vdash^{p} \assocctx{\config'}{\futcontext}$
\end{lemma}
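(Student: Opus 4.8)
The plan is to push the statement down to the process level and then induct on the derivation of the precongruence. By the definition of $\Pcong$ on configurations, $\config = \Conf\aactorenvironment\processid$ and $\config' = \Conf\aactorenvironment{\processid'}$ share the same actor environment $\aactorenvironment$ and satisfy $\processid \Pcong \processid'$. Inverting Rule \refRule\tyconf on the hypothesis yields a process typing $\WTproc\actorcontext\futcontext{p}\processid\futcontext$ and leaves the side condition $\actorcontext \actsum \rescontext \actord \aactorenvironment$ untouched. Hence it suffices to establish the process-level statement: whenever $\processid \Pcong \processid'$ and $\WTproc\actorcontext\futcontext{p}\processid\ffutcontext$, also $\WTproc\actorcontext\futcontext{p}{\processid'}\ffutcontext$ holds with the very same required actor context, consumed and produced future contexts, and measure. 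Re-applying \refRule\tyconf (with the unchanged $\aactorenvironment$, $\actorcontext$, and $\rescontext$) then recovers the typing of $\config'$.

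I would prove the process-level statement by induction on the generation of $\Pcong$, taking the three axioms of \cref{fig:processeval} as base cases and the closure under parallel composition, reflexivity, and transitivity as inductive steps. The clauses \refRule\cactivate and \refRule\cyield are the cleanest: inverting \refRule\typarallel on $\parallelcomp{\idleactor\actorid}{\queuedthread\activationrecord\expression\futid\actorid}$, the idle operand is typed by \refRule\tyidle with empty contexts and measure $0$, so the whole composition inherits precisely the judgment that \refRule\tythread assigns to the suspended thread; since \refRule\tythread types active and suspended threads by a single rule (with $\alpha \in \{\ath, \qth\}$), the active thread $\activethread\activationrecord\expression\futid\actorid$ receives the identical typing, and symmetrically for \refRule\cyield. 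The dynamic awaiting side condition of \refRule\cyield is irrelevant, as it does not enter any typing rule. The context-closure cases follow by inverting \refRule\typarallel, applying the induction hypothesis to the rewritten operand — which returns the same actor and future contexts and the same measure — and re-applying \refRule\typarallel with the identical splitting; reflexivity is immediate and transitivity composes two uses of the hypothesis.

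The main obstacle is the swap axiom \refRule\cswap, $\parallelcomp\processid\pprocessid \Cong \parallelcomp\pprocessid\processid$, because \refRule\typarallel is intentionally asymmetric: it lets the right operand consume futures produced by the left, recorded in a shared context $\futcontext_1'$. To handle it I would invert \refRule\typarallel and use the correspondence between the produced/consumed future sets and the domains of the future contexts in the derivation: together with the side condition $\dom(\futcontext_1'') \cap \dom(\futcontext_2) = \emptycontext$, this identifies $\futcontext_1'$ with exactly the futures produced by $\processid$ and consumed by $\pprocessid$. The disconnectedness hypotheses $\fp\processid \cap \fr\pprocessid = \emptyset$ and $\fp\pprocessid \cap \fr\processid = \emptyset$ then force $\futcontext_1' = \emptycontext$ and, symmetrically, that no future produced by $\pprocessid$ is consumed by $\processid$, so the two premises become independent. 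I can then re-apply \refRule\typarallel with the operands exchanged; commutativity of actor-context addition $\actsum$ and of the disjoint future-context union $\linearsum$, together with $\markcontext(\emptycontext) = \emptycontext$, make the resulting required context $\actorcontext_2 \actsum \actorcontext_1$, consumed context $\futcontext_2 \linearsum \futcontext_1$, produced context $\futcontext_1'' \linearsum \futcontext_2'$, and measure $m + n$ coincide with the originals. Both directions of $\Cong$ follow from the same computation, which closes the induction.
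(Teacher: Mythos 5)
Your proposal is correct and follows essentially the same route as the paper's proof: induction on the derivation of $\Pcong$, inverting \refRule\typarallel, using the disconnectedness side condition of \refRule\cswap to force the shared context $\futcontext_1'$ to be empty, and exploiting that \refRule\tyidle contributes empty contexts and measure $0$ while \refRule\tythread types active and queued threads uniformly. You are in fact slightly more thorough than the paper, which silently omits the closure cases (parallel-composition contexts, reflexivity, transitivity) and the lifting through \refRule\tyconf that you spell out.
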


\begin{proof}
    By induction on the congruence rule.
    \begin{itemize}
        \item (\cswap) $\processid \parP \pprocessid \Cong \pprocessid \parP \processid$. 
        ($\implies$) By side condition $\futcontext_1'=\emptycontext$, $\futcontext=\futcontext_p,\futcontext_q$.
        So it is trivial to see that (\typarallel) is applicable for $\parallelcomp{\pprocessid}{\processid}$,
        with $\futcontext_1=\emptycontext$, the side condition of (\typarallel) holds by side condition of (\cswap),
        and the measure $p=n+m=m+n$.
        The ($\impliedby$) is analogous.
        \item (\cactivate) $\parallelcomp{\idleactor{\actorid}}{
            \queuedthread{\activationrecord}{\expression}{\futid}{\actorid}
        }
        \Pcong
        \activethread{\activationrecord}{\expression}{\futid}{\actorid}$
        By (\tyidle) $\futcontext_1',\futcontext_1''=\emptycontext$, $\futcontext_1=\emptycontext$ and $\actorcontext=\emptycontext$ and $n=0$.
        So by (\typarallel) $\futcontext=\futcontext_2$, $\actorcontext=\actorcontext_1$, $\futcontext'=\futcontext_2'$ and $m=p$.
        So (\tythread) applies on $\WTproc{\actorcontext}{\futcontext}{p}{\activethread{\actorcontext}{\e}{\futid}{\actorid}}{\futcontext'}$.
        \item (\cyield) 
        $\activethread{\activationrecord}{\expression}{\futid}{\actorid}
        \Pcong
        \parallelcomp{\idleactor{\actorid}}{
            \queuedthread{\activationrecord}{\expression}{\futid}{\actorid}
        }$
        Application of (\typarallel) with $\actorcontext_1=\emptycontext$, $\futcontext_1=\emptycontext$ and 
        $\futcontext_1',\futcontext_1''=\emptycontext$ with measure $n=0$ holds with hypothesis,
        and the measure is $m=p$.
    \end{itemize}
\end{proof}

Now we have the tools to prove subject reduction of configurations as it was stated in \cref{thm:subred}. 

\begin{proof}
	We proof subject reduction as standard by induction over the reduction rules.
    We know that $\config$ is of the shape $\parallelcomp{\actorenvironment}{\processid}$
    and by (T-conf) $\processid$ has to be typed under a process typing rule.
	\begin{itemize}
		\item \textbf{\procget}:
            $\Conf{\actorenvironment}{
            \parallelcomp{
                \placefut{\futid}{\val}}{
                    \activethread{\activationrecord}{\expression}{\futid'}{\actorid}}}
            \semanticstep 
            \Conf{
                \actorenvironment
            }{
                \activethread{\activationrecord'}{\expression'}{\futid'}{\actorid}
            }$.
            \\ 
            By inversion of (\typarallel) and then 
            inversion on (\tyresult) 
            H1: $\WTproc{\actorcontext_1}{\futcontext_1}{0}{\placefut{\futid}{\val}}{\assoctype{\futid}{\FutTy{\T}{\actorcontext_1}}}$ and (\tythread)
            H2: $\WTproc{\actorcontext_2}{\futcontext_2}{n'}{\activethread{\activationrecord}{\expression}{\futid'}{\actorid}}{\actorcontext_2'}$. 
            Where $\assoctype{\futid}{\FutTy{\T}{\actorcontext_1}} = \ffutcontext_1, \ffutcontext_2$, such that: 
            $\futcontext_2 = \futcontext_2', \ffutcontext_1$ for some $\futcontext_2'$.
            By (\procget), $\expeval{\activationrecord}{\expression} \expstep[\placefut{\futid}{\val}]{} \expeval{\activationrecord'}{\expression'}$ 
            and $\veeval{\activationrecord}{\valueexpression} \vestep \veeval{\activationrecord'}{\futid}$, 
            implying: $\activationrecord = \activationrecord', \assocval{\var}{\futid}$, 
            hence inverting (\tyenv) on the inversion of (\tythread) on the premise of H2, 
            we know that $\futid$ must appear on $\activationrecord$, 
            so $\ffutcontext_2 = \emptycontext$.
            Given that result, we can apply SRE on the premise of (\tythread) of H2, using $\ffutcontext_1$ in $\futcontext_2$ to type: $\WTlab{\emptycontext}{\ffutcontext_1}{\actorid}{\placefut{\futid}{\val}}{\emptycontext}{\emptycontext}$. 
            Finally $\config'$ is well-typed applying (\tythread) on the results of SRE and (\tyconf) on the composition with the environment, and $\ffutcontext''=\emptycontext$.
        \item \textbf{\procrestart}
            $\Conf{\actorenvironment}{
                \parallelcomp{
                    \placefut{\futid}{\val}
                }{
                    \parallelcomp{
                        \queuedthread{\activationrecord}{\expression}{\futid'}{\actorid}    
                    }{
                        \idleactor{\actorid}
                    }
                }
            }
            \semanticstep
            \Conf{\actorenvironment}{
                \parallelcomp{
                    \placefut{\futid}{\val}
                }{
                    \activethread{\activationrecord}{\expression}{\futid'}{\actorid}    
                }
            }$.
            $\config'$ is well-typed by the premises of (\tythread) on $\config$.
		\item \textbf{\procsuspend}
            $\Conf{
                \actorenvironment
            }{
                \activethread{\activationrecord}{\expression}{\futid'}{\actorid}
            }
            \semanticstep 
            \Conf{\actorenvironment}{
            \parallelcomp{
                \idleactor{\actorid}
                }
                {
                    \queuedthread{\activationrecord}{\expression}{\futid'}{\actorid}
                }
            }$.
            $\config'$ is well-typed applying (\typarallel) with $\pprocessid=\idleactor{\actorid}$ and $\processid$ using the the same premises as $\config$. 
		\item \textbf{\prochold}
            $\Conf
            {
                \actorenvironment, \actorid: \resenv, \gradedr{\resid}{\ggradeid}
            }
            {
                \activethread{\activationrecord}{\expression}{\futid}{\actorid}
            }
            \semanticstep
            \Conf
            {
                \actorenvironment, \actorid: \resenv, \gradedr{\resid}{\ggradeid \gsub \gradeid}
            }
            {
                \activethread{\activationrecord'}{\expression'}{\futid}{\actorid}
            }$.
            By SRE on the premise of (\tythread) $\WTproc{\aactorcontext}{\ffutcontext}{m}{\activethread{\activationrecord'}{\expression'}{\futid}{\actorid}}{\aactorcontext'}$, by (\prochold) the label is $\holdexp{\gradeid}{\resid}$ so by (a) $\aactorcontext\gsum \assocresenv{\actorid}{\gradedr{\resid}{\gradeid}}\gord\actorcontext\gsum\emptycontext\gsum\emptycontext$, given $\ggradeid'$ the grade of resource $\resid$ on actor $\actorid$ on the context $\aactorcontext$ and $\ggradeid''$ the one on $\actorcontext$, (a) implies $\ggradeid' \gsum \gradeid \gord \ggradeid''$. 
            By (\tyconf) on $\config$: $\actorcontext\gsum\rescontext\gord\actorenvironment, \actorid: \resenv, \gradedr{\resid}{\ggradeid}$, so $\ggradeid'' \gord \ggradeid$, and by definition of graded substractive monoid $\ggradeid' \gord \ggradeid \gsub \gradeid$. Then $\aactorcontext \gsum \rescontext \gord \actorenvironment, \assocresenv{\actorid}{\resenv, \gradedr{\resid}{\ggradeid\gsub\gradeid}}$, and we can apply (\tyconf).
		\item \textbf{\procrls}
            $\Conf
            {
                \actorenvironment, a: \resenv, \gradedr{\resid}{\ggradeid}
            }
            {
                \activethread{\activationrecord}{\expression}{\futid}{\actorid}
            }
            \semanticstep
            \Conf
            {
                \actorenvironment, a: \resenv, \gradedr{\resid}{\ggradeid + \gradeid}
            }
            {
                \activethread{\activationrecord'}{\expression'}{\futid}{\actorid}
            }$. We apply (\tythread) by \cref{lem:sre}, and the side condition of (\tyconf) holds by 
            (a) of \cref{lem:sre}.
		\item \textbf{\procreturn}
            $\Conf{
                \actorenvironment
            }{
                \activethread{\activationrecord}{\return{\valueexpression}}{\futid}{\actorid}
            }
            \semanticstep
            \Conf{
                \parallelcomp{
                \actorenvironment
            }{
                \placefut{\futid}{\val}
            }
            }{\idleactor{\actorid}}$. Holds by trivially by (\tyconf) (\typarallel) and (\tyresult) (\tyidle) respectively.
		\item \textbf{\procspawn}
            $\Conf{
                {\actorenvironment}
            }{
              \parallelcomp{\callmsg{\futid}{\actorid}{\methid}{\many{\val}}}{\idleactor{\actorid}}
            }
            \semanticstep
           \Conf{\actorenvironment}{\activethread{\many{\var\mapsto\val}}{\expression}{\futid}{\actorid}}$.
            We apply (\tyconf) with the same contexts and use the premise of (\typarallel) for (\tymsg) to apply (\tythread) which applies by well formedness of the method table.
		\item \textbf{\procsilent} $\Conf{\actorenvironment}{\activethread{\activationrecord}{\expression}{\futid}{\actorid}}
        \semanticstep
        \Conf{\actorenvironment}{\activethread{\activationrecord'}{\expression'}{\futid}{\actorid}}$. We apply (\tythread) by \cref{lem:sre}, and the side condition of (\tyconf) holds by 
        (a) of \cref{lem:sre}.
		\item \textbf{\proccall}
            $\Conf{\actorenvironment}
            {
                \activethread{\activationrecord}{\expression}{\futid'}{\aactorid}
            }
            \semanticstep
            \Conf{\actorenvironment}
            {
                \parallelcomp{\callmsg{\futid}{\actorid}{\methid}{\many{\val}}}{
                    \activethread{\activationrecord'}{\expression'}{\futid'}{\aactorid}
                }
            }$.
        By (\tythread) and \cref{lem:sre} $\WTthread{\aactorcontext}{\ffutcontext,\ffutcontext_1'}{\ttypingcontext}{\aactorid}{m}{\expeval{\activationrecord'}{\e'}}{\ffutcontext''}{\aactorcontext'}$,
        and $\WTlab{\aactorcontext_1}{\ffutcontext_1}{\aactorid}{\callmsg{\futid}{\actorid}{\methid}{\many{\val}}}{\ffutcontext_1'}{\emptycontext}$,
        where $\aactorcontext \gsum \aactorcontext_1 \gord \actorcontext$,
        $\mtype{\actorid}{\methid}=\MType{\aactorcontext_1}{\many{\T'}}{\T}{n'}{\aactorcontext''}$,
        and $\futcontext= \ffutcontext, \ffutcontext_1$.
        We can then apply (\typarallel) instantiating 
        $\actorcontext_2 = \aactorcontext$, 
        $\actorcontext_1 = \aactorcontext_1$,
        $\futcontext_1' = \ffutcontext_1'$,
        $\futcontext_1''=\emptycontext$,
        $\futcontext_2=\ffutcontext$ and 
        $\futcontext_1=\ffutcontext_1$. 
        Then for the full judgment $\ffutcontext'=\markcontext(\ffutcontext_1'),\ffutcontext_1'$ where $\ffutcontext_1'$ is fresh by side condition of (\expcall).
        The side condition of (\tyconf) holds by (a) of \cref{lem:sre}.
    \end{itemize}
    The inductive cases:
    \begin{itemize}
		\item \textbf{\proclpar}
        $\parallelcomp{\actorenvironment}{
            \parallelcomp{\processid}{\pprocessid}
        }
        \semanticstep
        \parallelcomp{\actorenvironment'}{
            \parallelcomp{\processid'}{\pprocessid}
        }$
        By (\typarallel): (*) $
        \WTproc{\actorcontext_1}{\futcontext}{n}{\processid}{\futcontext_1' \linearsum \futcontext_1'' }$ and (**) 
        $\WTproc{\actorcontext_2}{\futcontext_2\linearsum \futcontext_1''}{m}{\pprocessid}{\futcontext_2' }$. 
        By (\tyconf) on (*), using $\actorcontext_2\gsum\rescontext$ as residual context: 
        $
        \WTconf[\actorcontext_2\gsum\rescontext]{\actorcontext_1 }{\futcontext}{n}{\parallelcomp{\actorenvironment}{\processid}}{\futcontext_1' \linearsum \futcontext_1''}
        $,
         $\actorcontext_1 \gsum \actorcontext_2\gsum\rescontext \gord \actorenvironment$.
        by I.H.: 
        $\WTconf[\actorcontext_2\gsum\rescontext]{\actorcontext_1'}{\futcontext}{n'}{\parallelcomp{\actorenvironment'}{\processid'}}{\futcontext'',\markcontext(\ffutcontext'')}$,
         $\actorcontext_1' \gsum \actorcontext_2\gsum\rescontext \gord \actorenvironment'$.
        We apply (\typarallel) on 
        $
        \WTconf[\rescontext]{\actorcontext_1'+\actorcontext_2}
        {\futcontext \linearsum \futcontext_2 }
        {n'+m}{\config'}{\futcontext'',\markcontext(\ffutcontext''), \futcontext_2'}$ 
        with premises (**) and I.H., instantiating $\futcontext_1''=\markcontext(\ffutcontext'')$ such that the side condition of (\typarallel) holds by freshness of $\ffutcontext''$.
        And (\tyconf) with the satisfied side condition $\actorcontext_1'+\actorcontext_2\gsum\rescontext  \gord \actorenvironment'$.
		\item \textbf{\procrpar}
        Analogous to \proclpar.
        \item \textbf{\proccong}. Holds trivially by \cref{lem:spre} and I.H.
	\end{itemize}
\end{proof}

The typing measurement establishes a bound to the possible steps to a terminated configuration. 
We prove that for every configuration there is a possible reduction that reduces the measurement.

\begin{lemma}[Progress of expressions]\label{lem:progexp}
    If $\WTthread\actorcontext\futcontext\typingcontext{n}\actorid{\expeval{\activationrecord}{\e}}\T\actorcontext'$ and 0<n,
    then there exists a step
    $\expeval{\activationrecord}{\e} \expstep[\labelid']{\labelid} \expeval{\activationrecord'}{\e'}$
    such as $\WTthread\aactorcontext\ffutcontext\ttypingcontext{m}\actorid{\expeval{\activationrecord'}{\e'}}\T\aactorcontext'$, and 
    m<n.
\end{lemma}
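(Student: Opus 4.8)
The plan is to argue by induction on the derivation of the expression typing. First I invert rule \refRule\tyexp on the thread judgment, exposing an environment typing $\WTenv\futcontext\activationrecord{\typingcontext}$ together with an expression typing $\WTexpr{\actorcontext}{\futcontext'}{\typingcontext}{n}{\actorid}{\e}{\T}{\actorcontext'}$ whose measure is exactly $n$, and then do a case analysis on the last rule of the latter. I keep the two obligations separate: (i) producing an \emph{enabled} reduction step whose reduct has a strictly smaller measure, read off directly from the typing rules; and (ii) typing the reduct, which I delegate to subject reduction (\cref{lem:sre}), supplying in each case the matching label typing from \cref{fig:types4labels} (for instance \refRule\tylblsilent for $\emptylabel$-steps and \refRule\tylblrls for releases).

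\textbf{Non-stuckness and base cases.} A preliminary observation is that measure $0$ is produced \emph{only} by \refRule\tyreturn; hence $n>0$ forces $\e$ not to be a bare $\return\ve$, so a redex is present. Each primitive step is enabled because the environment typing guarantees the relevant value expressions evaluate: in particular a graded resource variable reduces by \refRule\varres since \refRule\tyvarR yields $\gradeid\gord\ggradeid$ and, in a subtractive grade monoid, this makes $\ggradeid\gsub\gradeid$ defined (the remark after \cref{def:subtract-mon}), while a release fires because the side condition $\gradeid\gord\ggradeid$ of \refRule\exprls coincides with that of \refRule\tyrls. The steps \refRule\expcall, \refRule\expawait, \refRule\exphold, \refRule\exprls and \refRule\expop all reduce to some $\return{\cdot}$, whose measure is $0$ and therefore strictly below $n$ (note $n=k+3>0$ for a call and $n=1$ for the others). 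For a choice, \refRule\tychoice assigns measure $1+p$ with $p\in\{n_1,n_2\}$; I select the branch realising $p$ through \refRule\expchoiceleft or \refRule\expchoiceright, obtaining a reduct of measure $p<1+p$.

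\textbf{The \key{let} case (main obstacle).} Rule \refRule\tylet assigns measure $1+n+m$, with $m$ the measure of $\e_1$ and $n$ that of $\e_2$. If $\e_1=\return\ve$ then $m=0$ and \refRule\explet fires (its value expression evaluates thanks to the environment typing), yielding $\e_2$ up to a fresh renaming, of measure $n<1+n$. Otherwise $\e_1$ is not a return, hence $m>0$, and I invoke the induction hypothesis on the sub-thread $\expeval{\activationrecord}{\e_1}$ to obtain a step $\expeval{\activationrecord}{\e_1}\expstep[]{}\expeval{\activationrecord'}{\e_1'}$ decreasing $m$ to some $m'<m$; rule \refRule\expletred lifts this to $\expeval{\activationrecord}{\letexp{\var}{\e_1}{\e_2}}$, and recomposing the premises via \refRule\tylet gives a reduct of measure $1+n+m'<1+n+m$.

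\textbf{Where the difficulty lies.} I expect the real work to be in the inductive \key{let} case, namely in manufacturing a well-typed thread judgment for $\expeval{\activationrecord}{\e_1}$ to which the induction hypothesis applies. The contexts of \refRule\tylet must be redistributed: the internally-consumed actor context $\actorcontext_1'$ and the futures used only by $\e_2$ have to be parked as residual, so that $\e_1$ is charged exactly the resources and futures it consumes, while the remainder is recovered when recomposing via \refRule\tylet after the step. Once this bookkeeping is set up, the measure arithmetic above and the base cases are routine, and well-typedness of every reduct follows from \cref{lem:sre} (together with \cref{lem:srve} for the updates to the local environment).
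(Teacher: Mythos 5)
Your proposal is correct and follows essentially the same route as the paper's proof: induction on the (typing derivation of the) expression, with every construct except \key{let} reducing to a $\return{\cdot}$ of measure $0$, the choice case selecting the branch whose measure realises the $p$ in \refRule\tychoice, and the \key{let} case splitting on whether $\e_1$ is a return (apply \refRule\explet) or not (apply the induction hypothesis and lift via \refRule\expletred), with well-typedness of each reduct delegated to subject reduction. If anything, you are more explicit than the paper about why the steps are \emph{enabled} (the chain $\gradeid\gord\ggradeid$ making $\ggradeid\gsub\gradeid$ defined, and the matching side conditions of \refRule\exprls and \refRule\tyrls) and about the context redistribution needed to apply the induction hypothesis to $\e_1$, both of which the paper's proof leaves implicit.
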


\begin{proof}
    By induction on $\expression$. 
    For $\expression=\callexp{\actorid}{\methid}{\many{\ve}}$ holds by applying (\expcall), by subject reduction $\return{\futid}$ is well-typed with measure 0. For $\expression=\awaitexp{\ve}$ by (\expawait) and subject reduction is well-typed with measure 0. For $\expression=\holdexp{\gradeid}{\resid}$ reduces with (\exphold) and by subject reduction is well-typed with measure 0.
    For $\expression=\releaseexp{\gradeid}{\ve}$ reduces by (\tyrls) and by subject reduction is well-typed with measure 0. 
    For $\expression=\opexp{\many\ve}$ it reduces with (\tyop) and by subject reduction is well-typed with measure 0.
    The non-deterministic step $\expression=\nondetchoic{\e_1}{\e_2}$ is the only case with multiple reductions, by (\tychoice) $n=1+p$ where $p$ is the measure that types some $\e_i$, we perform that step, then by subject reduction $\e'$ is well-typed with measurement $p$.
    The first inductive step is $\expression=\letexp{\var}{\e_1}{\e_2}$ that by (\tylet) has measure $n=1+t+s$, if $\e_1 = \return{\ve}$ it reduces by (\explet) and is well-typed by subject reduction with measure $s$, if $\e_1 \neq \return{\ve}$ then it reduces by (\expletred) and is well-typed by subject reduction with measure $n = 1+t'+s$ where $t'<t$ by I.H. 
\end{proof}

The strategy to prove deadlock freedom is by showing that any possible stuck configuration can only be stuck waiting for a future. 
Then the configuration is either precongruent to a configuration that can take the (\procget) step, or to one that can make progress to fulfil that future.
Therefore, full configuration, typed on an empty future context, cannot be stuck.
The following lemmas show that a stuck thread has a possible reduction with a $\placefut{\futid}{\val}$ label, and that in a stuck configuration there is a precongruence that enables progress.

\begin{lemma}[Stuck thread]\label{lem:stuckthread}
    Given $\WTthread\actorcontext\futcontext\typingcontext{n}\actorid{\expeval{\activationrecord}{\e}}\T\actorcontext'$ and $\Conf{\actorcontext}{\activethread{\activationrecord}{\e}{\futid'}{\actorid}}\not\red$, then $\expeval{\activationrecord}{\expression} \expstep[\placefut{\futid}{\val}]{} \expeval{\activationrecord'}{\expression'}$ for some $\futid \in \dom(\futcontext)$
\end{lemma}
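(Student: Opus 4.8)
The plan is to run a two-level progress argument. At the level of \emph{expressions}, a well-typed thread can always make a move: by \cref{lem:progexp} the body reduces, emitting some label. At the level of \emph{configurations}, the single-thread configuration $\Conf{\actorcontext}{\activethread{\activationrecord}{\e}{\futid'}{\actorid}}$ can fail to reduce only if the emitted label cannot be matched by its context, and the sole such label is an input on a future. Thus the whole proof reduces to showing that every \emph{other} label \emph{can} be consumed here, leaving $\placefut\futid\val$ as the only possible cause of stuckness. Throughout I would read the lemma's thread judgment $\WTthread{\actorcontext}{\futcontext}{\typingcontext}{n}{\actorid}{\expeval{\activationrecord}{\e}}{\T}{\actorcontext'}$ as the expression-in-environment judgment used by \cref{lem:sre} and \cref{lem:progexp} with empty residual typing context.

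Concretely, first I would dispose of the terminal shape: if $\e=\return\ve$, then (by the routine progress property for value expressions, applicable since the thread judgment types $\ve$ in a local environment $\activationrecord$ typed consistently with $\typingcontext$) $\ve$ evaluates to a value and Rule \refRule\procreturn fires, contradicting stuckness; hence $\e\neq\return\ve$ and $n>0$. Then I apply \cref{lem:progexp} to get a step $\expeval\activationrecord\e \expstep[\labelid']{\labelid} \expeval{\activationrecord'}{\e'}$ and case on its (at most one) non-$\emptylabel$ label. If it is $\emptylabel$, a call $\callmsg{\futid}{\actorid}{\methid}{\many\val}$, or a release $\rlsmsg{\gradedr\resid\gradeid}$, then Rule \refRule\procsilent, \refRule\proccall, or \refRule\procrls respectively lifts the step (these rules only append a message or enlarge the actor environment, so they apply unconditionally), contradicting stuckness. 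If it is a hold $\holdmsg{\gradedr\resid\gradeid}$, I invoke subject reduction for expressions (\cref{lem:sre}) with the label typed by \refRule\tylblhold, whose required context is $\actorid{:}\res$; condition~(a) of that lemma, combined with the defining property of subtractive grade monoids (\cref{def:subtract-mon}), then guarantees that the available grade $\ggradeid=\actorcontext(\actorid)(\resid)$ admits subtraction, i.e.\ $\ggradeid\gsub\gradeid$ is defined. Since the configuration's actor environment \emph{is} $\actorcontext$, Rule \refRule\prochold applies, again contradicting stuckness. The only surviving case is the input $\placefut\futid\val$, which is exactly the step claimed in the conclusion.

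It remains to justify the side condition $\futid\in\dom(\futcontext)$. Such an input label can be produced only by an \refRule\expawait redex (possibly nested under \refRule\expletred), whose value expression evaluates to the future $\futid$; hence $\futid$ occurs free in $\activationrecord$ or $\e$, and by the invariant that the domain of the consumed-future context $\futcontext$ coincides with the futures read by the thread, $\futid\in\dom(\futcontext)$. The step I expect to be the main obstacle is the hold case: it is the one point where the \emph{static} resource bookkeeping of the type system must be reconciled with the \emph{dynamic} availability of resources, and it crucially exploits both the hypothesis that the configuration's actor environment equals the thread's \emph{required} context $\actorcontext$ and the adjunction-style property of subtraction. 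Everything else is routine once expression-level progress and subject reduction are in place.
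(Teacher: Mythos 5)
Your proposal is correct and follows essentially the same route as the paper's proof: invoke \cref{lem:progexp}, case-split on the emitted label, lift every label other than $\placefut\futid\val$ to a configuration step via \refRule\procsilent, \refRule\proccall, \refRule\procrls or \refRule\prochold to contradict stuckness, and read off $\futid\in\dom(\futcontext)$ from the typing of the await redex. Your treatment is in fact slightly more careful than the paper's in two places: you dispose of $\e=\return\ve$ explicitly (needed since \cref{lem:progexp} assumes $0<n$, which the lemma statement does not), and you justify the applicability of \refRule\prochold via condition~(a) of \cref{lem:sre} together with \cref{def:subtract-mon} --- precisely the point the paper's own proof leaves as a bare assertion.
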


\begin{proof}
    By \cref{lem:progexp} there exists $\expeval{\activationrecord}{\e} \expstep[\labelid']{\labelid} \expeval{\activationrecord'}{\e'}$, by cases on $\expstep[\labelid']{\labelid}$, if $\expstep[]{\rlsmsg{\gradedr{\resid}{\gradeid}}}$: applying (\procrls) contradicts the hypothesis, if $\expstep[]{\callmsg{\futid'}{\actorid}{\methid}{\many\val}}$: applying (\proccall) contradicts the hypothesis. if $\expstep[\holdmsg{\gradedr{\resid}{\gradeid}}]{}$: applying (\prochold) is possible due to well-typeness hypothesis (\tyexp) and (\tyhold) \UTComm{(CAN I DO THIS? I NEED TO SHOW THAT IT ALSO HOLDS ON NESTED HOLDS)}, which contradicts the hypothesis, if $\expstep[]{}$ then (\procsilent) is possible which contradicts the hypothesis.
    Finally, if it is $\expstep[\placefut{\futid}{\val}]{}$ there is no $\semanticstep$ possible, and by inversion of (\tyexp) on the well-typeness hypothesis and (\tyenv) and (\tyvar) $\futid \in \dom(\futcontext)$.
\end{proof}

In order to manipulate parallel compositions of processes and be consistent with the associativity assumption, the typing of a configuration must be unaltered by associativity. 

\begin{lemma}[Associativity of the parallel composition]
    For all $\processid, \pprocessid, \ppprocessid$:
    $\WTproc{\actorcontext}{\futcontext}{n}{\parallelcomp{\processid}{(\parallelcomp{\pprocessid}{\ppprocessid})}}{\futcontext'} \implies \WTproc{\actorcontext}{\futcontext}{n}{\parallelcomp{(\parallelcomp{\processid}{\pprocessid})}{\ppprocessid}}{\futcontext'}$.
\end{lemma}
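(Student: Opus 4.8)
The plan is to prove the statement by inverting rule \refRule\typarallel twice and then reassembling a derivation with the opposite bracketing; since a parallel composition admits only this typing rule, each inversion is forced. Inverting on the outer composition gives a derivation $\WTproc{\actorcontext_1}{\futcontext_1}{n_1}{\processid}{\futcontext_1'\linearsum\futcontext_1''}$ and a derivation $\WTproc{\actorcontext_2}{\futcontext_2\linearsum\futcontext_1'}{n_2}{\parallelcomp{\pprocessid}{\ppprocessid}}{\futcontext_2'}$, together with the identifications $\actorcontext=\actorcontext_1\gsum\actorcontext_2$, $\futcontext=\futcontext_1\linearsum\futcontext_2$, $n=n_1+n_2$, $\futcontext'=\futcontext_2'\linearsum\markcontext(\futcontext_1')\linearsum\futcontext_1''$, and the side condition $\dom(\futcontext_1'')\cap\dom(\futcontext_2)=\emptycontext$. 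Inverting again on $\parallelcomp{\pprocessid}{\ppprocessid}$ yields $\WTproc{\actorcontext_3}{\futcontext_4}{n_3}{\pprocessid}{\futcontext_3\linearsum\futcontext_5}$ and $\WTproc{\actorcontext_4}{\futcontext_6\linearsum\futcontext_3}{n_4}{\ppprocessid}{\futcontext_7}$, with $\actorcontext_2=\actorcontext_3\gsum\actorcontext_4$, $\futcontext_2\linearsum\futcontext_1'=\futcontext_4\linearsum\futcontext_6$, $n_2=n_3+n_4$, $\futcontext_2'=\futcontext_7\linearsum\markcontext(\futcontext_3)\linearsum\futcontext_5$, and $\dom(\futcontext_5)\cap\dom(\futcontext_6)=\emptycontext$.

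The crux, and the step I expect to be the main obstacle, is the redistribution of the shared contexts $\futcontext_1'$ and $\futcontext_2$ across the new bracketing. Because the comma denotes \emph{disjoint} union of finite partial maps, the equality $\futcontext_4\linearsum\futcontext_6=\futcontext_2\linearsum\futcontext_1'$ lets me restrict each of $\futcontext_1'$ and $\futcontext_2$ to $\dom(\futcontext_4)$ and to $\dom(\futcontext_6)$, obtaining well-defined partitions $\futcontext_1'=\futcontext_{1a}\linearsum\futcontext_{1b}$ and $\futcontext_2=\futcontext_{2a}\linearsum\futcontext_{2b}$ with $\futcontext_4=\futcontext_{2a}\linearsum\futcontext_{1a}$ and $\futcontext_6=\futcontext_{2b}\linearsum\futcontext_{1b}$. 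Intuitively $\futcontext_{1a}$ are the futures of $\processid$ read by $\pprocessid$ and $\futcontext_{1b}$ those read by $\ppprocessid$, while $\futcontext_{2a},\futcontext_{2b}$ split the externally consumed futures accordingly. Well-definedness of these restrictions is exactly what makes the lemma nontrivial, and rests on the disjointness guaranteed by the comma together with the two side conditions obtained from the inversions.

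With this partition I would rebuild the derivation bottom-up. First I apply \refRule\typarallel to $\parallelcomp{\processid}{\pprocessid}$, taking $\futcontext_{1a}$ as the internal (consumed-from-the-left) context and pushing $\futcontext_{1b}\linearsum\futcontext_1''$ into the propagated component; this is legitimate because $\futcontext_{1a}$ is a fragment of $\processid$'s produced context and hence unmarked. Its side condition $\dom(\futcontext_{1b}\linearsum\futcontext_1'')\cap\dom(\futcontext_{2a})=\emptycontext$ follows from disjointness of $\futcontext_1'$ and $\futcontext_2$ and from the outer side condition. Then I apply \refRule\typarallel to $\parallelcomp{(\parallelcomp{\processid}{\pprocessid})}{\ppprocessid}$, taking $\futcontext_3\linearsum\futcontext_{1b}$ as the internal context (again unmarked, being the join of futures produced by $\pprocessid$ and by $\processid$), so that $\ppprocessid$'s consumed context $\futcontext_6\linearsum\futcontext_3=\futcontext_{2b}\linearsum\futcontext_{1b}\linearsum\futcontext_3$ matches; its side condition follows from $\dom(\futcontext_5)\cap\dom(\futcontext_6)=\emptycontext$ and disjointness.

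Finally I would check that the resulting judgment is literally the target. The required context $(\actorcontext_1\gsum\actorcontext_3)\gsum\actorcontext_4$ and measure $(n_1+n_3)+n_4$ reduce to $\actorcontext$ and $n$ by associativity and commutativity of $\gsum$ and of $+$; the consumed context reassembles to $\futcontext_1\linearsum\futcontext_2=\futcontext$ by associativity of disjoint union; and the produced context $\futcontext_7\linearsum\markcontext(\futcontext_3\linearsum\futcontext_{1b})\linearsum\futcontext_5\linearsum\markcontext(\futcontext_{1a})\linearsum\futcontext_1''$ reassembles to $\futcontext'$ once I use that marking distributes over disjoint union, so that $\markcontext(\futcontext_{1a})\linearsum\markcontext(\futcontext_{1b})=\markcontext(\futcontext_1')$, and that disjoint union is commutative, which lets me reshuffle the fragments into the shape $\futcontext_2'\linearsum\markcontext(\futcontext_1')\linearsum\futcontext_1''$ using $\futcontext_2'=\futcontext_7\linearsum\markcontext(\futcontext_3)\linearsum\futcontext_5$.
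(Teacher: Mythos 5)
Your proposal is correct and follows essentially the same route as the paper: invert both instances of \refRule\typarallel, split the interface contexts according to which futures of $\processid$ are consumed by $\pprocessid$ versus $\ppprocessid$ (your $\futcontext_{1a},\futcontext_{1b}$ are the paper's $\futcontext_{\parallelcomp\processid\pprocessid},\futcontext_{\parallelcomp\processid\ppprocessid}$), and reassemble under the other bracketing, checking the side conditions and that marking distributes over the disjoint union. If anything, you justify the well-definedness of the partition more explicitly than the paper does.
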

\begin{proof}
    $n=h+j+k$.
    We define $\futcontext=\futcontext_p,\futcontext_q,\futcontext_u$ and $\actorcontext=\actorcontext_p\gsum\actorcontext_q\gsum\actorcontext_u$.
    Where \\
    $\WTproc{\actorcontext_p}{\futcontext_p}{h}{\processid}
    {\futcontext_{\parallelcomp\processid\pprocessid},\futcontext_{\parallelcomp\processid\ppprocessid}, \futcontext_p''}$. \\
    $\WTproc{\actorcontext_q}{\futcontext_q,\futcontext_{\parallelcomp{\processid}{\pprocessid}}}{j}
    {\pprocessid}{\futcontext_{\parallelcomp{\pprocessid}{\ppprocessid}},\futcontext_q''}$.\\
    $\WTproc{\actorcontext_u}{\futcontext_u,\futcontext_{\parallelcomp{\processid}{\ppprocessid}},\futcontext_{\parallelcomp{\pprocessid}{\ppprocessid}}}
    {k}{\ppprocessid}{\futcontext_u'}$.\\
    By inversion of (\typarallel) on hypothesis: 
    $\futcontext_1' = \futcontext_{\parallelcomp\processid\pprocessid},\futcontext_{\parallelcomp\processid\ppprocessid}$ and 
    $\futcontext_1'' = \futcontext_p''$. 
    The side condition asserts H1: $\dom{(\futcontext_p'') \cap \dom{(\futcontext_q,\futcontext_u)}} = \emptycontext$;
    and the side condition on (\typarallel) of $(\parallelcomp{\pprocessid}{\ppprocessid})$ asserts H2: 
    $\dom{(\futcontext''_q)\cap\dom{(\futcontext_u)}}=\emptycontext$.
    We apply (\typarallel) on $\parallelcomp{\processid}{\pprocessid}$ the goal with: $\futcontext_1'=\futcontext_{\parallelcomp\processid\pprocessid}$ and $\futcontext_1''=\futcontext_p'',\futcontext_{\parallelcomp\processid\ppprocessid}$. 
    The side condition $\dom\futcontext_1''\cap\dom\futcontext_q=\emptycontext$ holds by linearity of $\futcontext_2=\futcontext_q,\futcontext_u,\futcontext_{\parallelcomp\processid\pprocessid}$ by inversion of (\typarallel) on the hypothesis.
    Therefore, $\WTproc{\actorcontext_p\gsum\actorcontext_q}{\futcontext_p,\futcontext_q}{h+j}{\parallelcomp{\processid}{\pprocessid}}{\futcontext'}$
    where
    $\futcontext'=\markcontext(\futcontext_{\parallelcomp\processid\pprocessid}),
    \futcontext''_p,\futcontext_{\parallelcomp\processid\ppprocessid},
    \futcontext_{\parallelcomp\pprocessid\ppprocessid},
    \futcontext_q''
    $.
    We apply (\typarallel) on $\parallelcomp{(\parallelcomp{\processid}{\pprocessid})}{\ppprocessid}$ with $\futcontext_1'=\futcontext_{\parallelcomp\processid\ppprocessid},\futcontext_{\parallelcomp{\pprocessid}{\ppprocessid}}$ and $\futcontext_1''=\futcontext_p'',\futcontext_q'',\markcontext(\futcontext_{\parallelcomp\processid\pprocessid})$.
    The side condition $\dom{(\futcontext_p'',\futcontext_q'',\markcontext(\futcontext_{\parallelcomp\processid\pprocessid}))} \cap \dom{(\futcontext_u)} = \emptycontext$ holds by $\futcontext''_p \cap \futcontext_u = \emptycontext$ from H1, $\futcontext''_q\cap\futcontext_u=\emptycontext$ from H2, 
    and $\dom{(\markcontext(\futcontext_{\parallelcomp\processid\pprocessid}))}\cap\futcontext_u=\emptycontext$ holds by linearity on the left-hand future context of ($\parallelcomp{\pprocessid}{\ppprocessid}$).
    The measure is $h+j+k$ which is $n$.
    Finally, we need to show that $\typingcontext=\futcontext_p'',\futcontext_q'',\markcontext(\futcontext_{\parallelcomp\processid\pprocessid}),\markcontext(\futcontext_{\parallelcomp\processid\ppprocessid},\futcontext_{\parallelcomp{\pprocessid}{\ppprocessid}}),\futcontext'_u$.
    By inversion on (\typarallel) of hypothesis $\futcontext=\markcontext(\futcontext_{\parallelcomp{\processid}{\pprocessid}},\futcontext_{\parallelcomp{\processid}{\ppprocessid}}), \futcontext''_p, \markcontext(\futcontext_{\parallelcomp{\pprocessid}{\ppprocessid}}), \futcontext''_q,\futcontext'_u$, which is trivially equal.
\end{proof}

\begin{definition}\label{def:rdtg}
    \textbf{A ready to go process} is a process $\processid$ with the shape: 
    \begin{itemize}
        \item $\processid = \activethread{\activationrecord}{\expression}{\futid}{\actorid}$
        \item $\processid = \queuedthread{\activationrecord}{\expression}{\futid}{\actorid}$
        \item $\processid = \callmsg{\futid}{\actorid}{\methid}{\many{\val}}$
    \end{itemize}
\end{definition}

\begin{lemma}[Progress or await]\label{lem:await}
    If $\config=\Conf\aactorenvironment\processid$ and $\WTconf{\actorcontext}{\futcontext}{n}{\config}{\ffutcontext}$ and $\config\not\red$, then 
    $\processid \Pcong \pprocessid_1 \parP \activethread\activationrecord\expression{\futid'}{\actorid}  \parP \pprocessid_2$ and $\expeval\activationrecord\expression \expstep[\placefut{\futid}{\val}]{} $ for some $\futid \in \dom(\futcontext)$. 
\end{lemma}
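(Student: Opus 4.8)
The plan is to combine the operational rules with the \emph{linear} discipline on futures to pinpoint an active thread that is genuinely blocked, and then to argue that the future it is waiting for cannot be produced inside $\processid$, so it must lie in $\dom(\futcontext)$. A basic observation used throughout is that $\config\not\red$ is inherited by $\Pcong$-rewriting: by Rule \refRule{\proccong}, if $\processid\Pcong\pprocessid$ and $\Conf{\aactorenvironment}{\pprocessid}\red$ then also $\Conf{\aactorenvironment}{\processid}\red$; hence every $\Pcong$-successor of $\config$ is again stuck, and by \cref{lem:spre} it carries the same typing, in particular the same $\dom(\futcontext)$. Inverting Rule \refRule{\tyconf} gives a process typing $\WTproc{\actorcontext}{\futcontext}{n}{\processid}{\ffutcontext}$ with $\actorcontext\actsum\rescontext\actord\aactorenvironment$, and the associativity lemma lets me decompose $\processid$, up to $\Pcong$, into a parallel list of atomic components --- active and suspended threads, messages, fulfilled futures $\placefut{\futid}{\val}$, idle actors and $\pdone$ --- in which, by Rules \refRule{\tythread}, \refRule{\tymsg}, \refRule{\tyresult} and \refRule{\typarallel}, the \emph{producer} of each future occurs strictly to the left of its \emph{consumer}. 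I assume, as in the setting where the lemma is applied, that $\config$ is not terminated (equivalently, $n>0$), so that at least one component is a thread or a message.

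First I show that every active thread is awaiting. Inverting the process typing, each active thread $U$ is well typed, and the singleton $\Conf{\actorcontext_U}{U}$ built from $U$'s required actor context is stuck: any reduction of it would lift, through Rules \refRule{\proclpar}/\refRule{\procrpar}, to a reduction of $\config$, the only non-trivial case being a \textsf{hold}, which stays enabled because $\actorcontext_U\actord\aactorenvironment$ keeps the subtraction in Rule \refRule{\prochold} defined. \cref{lem:stuckthread} then yields that $U$'s only move is an await $\expeval{\activationrecord}{\e}\expstep[\placefut{\futid}{\val}]{}$. From this I rule out two kinds of redex. No message can target a \emph{present} actor $\actorid$: if $\actorid$ is idle then Rule \refRule{\procspawn} fires after a \refRule{\cswap} (its side condition holds since $\mtype{\actorid}{\methid}$, hence $\mbody{\actorid}{\methid}$, is defined by \refRule{\tymsg} and the consistency assumption), while if $\actorid$ is busy I first \refRule{\cyield} its awaiting thread and then spawn; either way $\config\red$. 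Likewise, no thread may await a future for which a $\placefut{\futid}{\val}$ is present, since \refRule{\cyield}/\refRule{\cactivate} followed by Rule \refRule{\procget} would reduce $\config$.

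It remains to select the right thread. On the threads of $\processid$ define $T\to T'$ whenever $T$ awaits the result future of $T'$; by the left-to-right structure of Rule \refRule{\typarallel} the producer $T'$ lies strictly to the left of $T$, so $\to$ is acyclic and, the list being finite, has a terminal thread $Z$. The future $\futid$ awaited by $Z$ has no internal producer --- a thread producer would violate terminality, and message or fulfilled-future producers were excluded above --- hence, by inversion on \refRule{\typarallel}, an un-matched awaited future stays in the consumed context and $\futid\in\dom(\futcontext)$. If $Z$ is active we are done; otherwise $Z$ is suspended, and a \refRule{\cyield} on the (awaiting) active thread of $Z$'s actor followed by a \refRule{\cactivate} turns $Z$ active while preserving the typing by \cref{lem:spre}. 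This delivers $\processid\Pcong\pprocessid_1\parP\activethread{\activationrecord}{\e}{\futid'}{\actorid}\parP\pprocessid_2$ with $\expeval{\activationrecord}{\e}\expstep[\placefut{\futid}{\val}]{}$ and $\futid\in\dom(\futcontext)$.

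The crux is the third step: that $\to$ is acyclic and terminates at an \emph{external} future rather than closing into an internal cycle. This is precisely deadlock-freedom, and it rests on the invariant enforced by the linear future discipline of Rule \refRule{\typarallel} --- matched futures are marked by $\markcontext(\cdot)$ and can never reappear on the left of a judgment --- that every internally matched future has a unique producer strictly to the left of its consumer. The only remaining subtlety is bookkeeping: justifying the repeated use of \refRule{\cswap}, \refRule{\cyield} and \refRule{\cactivate} to bring the relevant redexes into the exact syntactic shapes required by Rules \refRule{\procspawn} and \refRule{\procget}, for which the associativity lemma and \cref{lem:spre} are the main tools.
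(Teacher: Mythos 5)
Your proof is correct in substance but reaches the blocked thread by a genuinely different route than the paper. The paper exploits the left-to-right discipline of \refRule{\typarallel} \emph{locally}: it splits $\processid$, up to $\Pcong$, as $\pprocessid_1\parP\processid'\parP\pprocessid_2$ where $\pprocessid_1$ is terminated and $\processid'$ is the \emph{leftmost} ready-to-go component (active thread, queued thread, or call message); typing then places every future consumed by $\processid'$ either in $\futcontext$ or among the fulfilled futures of $\pprocessid_1$, so a three-way case analysis on $\processid'$ (messages and queued threads are discharged essentially as in your second paragraph) immediately yields an awaiting active thread whose future must be in $\dom(\futcontext)$, since otherwise \refRule{\procget} would fire. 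You instead make the dependency structure \emph{global}: you first show every thread is awaiting and that no message or fulfilled future can match an awaited one, then build a waits-for graph, derive acyclicity from the same left-to-right invariant, and take a sink. Both arguments rest on the identical structural fact, and the leftmost component is precisely a sink of your relation, so nothing is lost; the paper's choice just makes the ``no internal producer'' step a single typing inversion, whereas your sink argument needs the preliminary sweep over all components, including determining what each \emph{suspended} thread awaits---a point you leave implicit and which requires activating each one under $\Pcong$ and reapplying \cref{lem:stuckthread} in the (still stuck, still well-typed by \cref{lem:spre}) rearranged configuration. In compensation, you are more careful than the paper on two points: you justify why the singleton configuration fed to \cref{lem:stuckthread} is itself stuck (the \refRule{\prochold} case via monotonicity of subtraction in its first argument), and you make explicit the hypothesis that $\config$ is not terminated, without which the statement is false and which the paper's proof only assumes tacitly when it posits a ready-to-go $\processid'$.
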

\begin{proof}
    We pattern match $\processid \Pcong \parallelcomp{\pprocessid_1}{\parallelcomp{\processid'}{\pprocessid_2}}$, 
    such as $\parallelcomp\aactorcontext\pprocessid_1$ is terminated (this is always possible by a neutral $\pprocessid_1$), $\processid'$ is a \textit{ready to go process}, 
    and $\pprocessid_2$ is any process (also possibly neutral).
    By linearity of future contexts on the well-typeness hypothesis we know H1:
    $\WTproc{\actorcontext}{\futcontext, \ffutcontext}{m}{\processid'}{\futcontext'}$.
    Split 3 cases for $\processid'$: 
    (1) $\processid'=\callmsg{\futid}{\actorid}{\methid}{\many{\val}}$: by well-formness of $\config$ there must be either an $\idleactor{\actorid}$ in $\config$, in which case we can take congruence steps to move $\idleactor{\actorid}$ and perform (\procspawn) contradicting the hypothesis, or there exists an active thread $\threadid=\activethread{\activationrecord}{\expression}{\futid''}{\actorid}$ in $\pprocessid_2$. If $\threadid$ can reduce it contradicts the hypothesis, else by \cref{lem:stuckthread} $\expeval{\activationrecord}{\expression} \expstep[\placefut{\futid}{\val}]{} \expeval{\activationrecord'}{\expression'}$, which satisfies the premise of (\procsuspend) which contradicts the hypothesis.
    (2) $\processid'=\queuedthread{\activationrecord}{\expression}{\futid'}{\actorid}$: then similarly it is not possible that $\actorid$ is executing another thread, so it must be that $\idleactor{\actorid}$ is in $\config$, and by $\Pcong$ it can activate $\processid'$ turning it into case (3).
    (3) $\processid'=\activethread{\activationrecord}{\expression}{\futid'}{\actorid}$: by \cref{lem:stuckthread}
    and hypothesis it must be that $\expeval{\activationrecord}{\expression} \expstep[\placefut{\futid}{\val}]{} \expeval{\activationrecord'}{\expression'}$, and by H1 $\futid \in \dom(\futcontext \cup \ffutcontext)$. 
    If $\futid \in \dom(\ffutcontext) \implies \placefut{\futid}{\val}$ is in $\pprocessid_1$, hence by precongruence (\procget) can be applied, contradicting the hypothesis. Then $\futid \in \dom(\futcontext)$.
    
\end{proof}

\UTComm{do i need to formalise what does it mean to be IN a configuration?}
By these results we can state deadlock freedom and its proof follows.

\begin{theorem}[Deadlock Freedom] \label{thm:df} 
    If $\WTconf{\actorcontext}{\emptycontext}{n}{\config}{\futcontext}$ then 
    either $\config$ is terminated 
    or $\config \semanticstep \config'$ for some $\config'$. 
\end{theorem}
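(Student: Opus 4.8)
The plan is to obtain deadlock freedom as a direct consequence of the \emph{progress or await} lemma (\cref{lem:await}), exploiting the fact that the consumed-future context here is empty. The argument splits on whether $\config$ is terminated. If $\config$ is terminated, the first disjunct of the statement holds and there is nothing to prove, so the interesting case is when $\config$ is \emph{not} terminated.

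First I would check that the non-terminated case supplies what \cref{lem:await} needs. Inspecting the process grammar of \cref{fig:syntax}, a configuration $\config = \Conf\aactorenvironment\processid$ that is not terminated must contain, up to the neutral element $\pdone$ of the (associative) operator $\parP$, at least one atomic component whose shape is an active thread $\activethread{\activationrecord}{\expression}{\futid}{\actorid}$, a suspended thread $\queuedthread{\activationrecord}{\expression}{\futid}{\actorid}$, or a pending call $\callmsg{\futid}{\actorid}{\methid}{\many{\val}}$. Indeed, the only remaining atomic shapes, $\idleactor{\actorid}$ and $\placefut{\futid}{\val}$, are precisely those permitted in a terminated configuration. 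Thus $\processid$ contains a \emph{ready to go} process in the sense of \cref{def:rdtg}, which is exactly the ingredient \cref{lem:await} needs to build its decomposition.

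Next I would argue by contradiction that $\config$ can step. Suppose $\config \not\red$. The hypotheses of \cref{lem:await} are met: the judgement $\WTconf{\actorcontext}{\emptycontext}{n}{\config}{\futcontext}$ holds by assumption, with consumed-future context $\emptycontext$, and $\config$ is stuck. The lemma then yields $\processid \Pcong \pprocessid_1 \parP \activethread\activationrecord\expression{\futid'}{\actorid} \parP \pprocessid_2$ together with $\expeval\activationrecord\expression \expstep[\placefut{\futid}{\val}]{}$ for some $\futid \in \dom(\emptycontext)$. But $\dom(\emptycontext) = \emptyset$, so no such $\futid$ exists, a contradiction. Hence $\config \not\red$ is untenable and $\config \semanticstep \config'$ for some $\config'$, giving the second disjunct.

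The main obstacle is the bookkeeping that licenses the application of \cref{lem:await} in the non-terminated case: one must be sure that ``not terminated'' genuinely produces a ready-to-go process for the lemma's pattern match. This rests on the well-formedness assumption on configurations (each actor has exactly one active thread or is idle) and on the neutrality of $\pdone$ for $\parP$, which together let us always arrange the required decomposition $\pprocessid_1 \parP \processid' \parP \pprocessid_2$ with $\pprocessid_1$ terminated and $\processid'$ ready to go. Once this is secured, the whole theorem reduces to the slogan that a well-typed configuration can be stuck only by awaiting a future it itself consumes; a configuration typed with an empty consumed-future context consumes no future, hence cannot be stuck unless it is already terminated.
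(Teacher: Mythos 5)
Your proof is correct and follows essentially the same route as the paper: case-split on termination, then derive a contradiction from \cref{lem:await} applied with the empty consumed-future context, since $\futid \in \dom(\emptycontext)$ is impossible. The extra bookkeeping you supply about extracting a ready-to-go component from a non-terminated configuration is a reasonable elaboration of what the paper leaves implicit inside the proof of \cref{lem:await}.
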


\begin{proof}
    If $\config$ is not terminated and $\config\not\red$ applying \cref{lem:await} with $\futcontext=\emptycontext$ 
    gives us $\futid \in \dom(\emptycontext)$ which is not possible. Therefore, $\config$ cannot be terminated and have no step at the same time.
\end{proof}


The result of deadlock freedom guarantees that we can always take a step, and to prove weak termination we need to show that not only we can take a step, but we can take a step towards termination.
This is where the typing measure comes into play, and therefore we show in \cref{lem:termconf} the base case that any configuration typed under a 0 measure is terminated.

\begin{lemma}[Terminated configuration]\label{lem:termconf}
  If $\actorcontext \joinctx \futcontext \vdash^{0} \assocctx{\config}{\futcontext'}$, then $\config$ is terminated.
\end{lemma}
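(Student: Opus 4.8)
The plan is to reduce the statement about configurations to one about processes and then settle it by a simple inversion on the (natural-number–valued) measure. First I would invert the only configuration typing rule, \refRule{\tyconf}: from $\actorcontext \joinctx \futcontext \vdash^{0} \assocctx{\config}{\futcontext'}$ we obtain that $\config = \Conf{\aactorcontext}{\processid}$ for some actor environment $\aactorcontext$ and process $\processid$ with $\WTproc{\actorcontext}{\futcontext}{0}{\processid}{\futcontext'}$. Hence it suffices to show that any process typable with measure $0$ is a parallel composition of idle actors $\idleactor{\actorid}$, fulfilled futures $\placefut{\futid}{\val}$, and occurrences of $\pdone$; since $\pdone$ is the neutral element of $\parP$, such a process is terminated in the sense of the definition.

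The key observation is that each ``non-terminal'' process form is typed with a strictly positive measure. Inspecting the process typing rules in \cref{fig:types4proc,fig:types4expr:app}: rule \refRule{\tythread} assigns a thread (active or suspended) the measure $n+1 \geq 1$, and rule \refRule{\tymsg} assigns a pending method call the measure $n+2 \geq 2$; by contrast, the measure-$0$ judgments are produced exactly by \refRule{\tyresult} (for $\placefut{\futid}{\val}$), by \refRule{\tyidle} (for $\idleactor{\actorid}$) and by \refRule{\tydone} (for $\pdone$). The only remaining rule is \refRule{\typarallel}, whose conclusion carries the measure $n+m$, namely the \emph{sum} of the measures of its two components.

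The proof then proceeds by induction on the typing derivation of $\processid$. In the base cases, a measure-$0$ derivation cannot end in \refRule{\tythread} or \refRule{\tymsg}, since those force a positive measure; hence it must end in \refRule{\tyresult}, \refRule{\tyidle} or \refRule{\tydone}, and the corresponding processes are exactly of the allowed shapes. In the inductive case $\processid = \parallelcomp{\processid_1}{\processid_2}$, the derivation ends in \refRule{\typarallel} with $0 = n+m$; since $n,m \in \N$, this forces $n = m = 0$, so both $\processid_1$ and $\processid_2$ are typable with measure $0$, and the induction hypothesis applies to each. Combining the two, $\processid$ is again a parallel composition of idle actors, fulfilled futures and $\pdone$, as required.

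I do not expect any genuine obstacle: the statement is essentially an inversion lemma whose whole content is bookkeeping on the measure. The only points deserving care are (i) confirming that the measures of \refRule{\tythread} and \refRule{\tymsg} are always at least one, so that a thread or a message can never be concealed inside a measure-$0$ process, and (ii) recording that $\pdone$, although present in the syntax, is neutral for $\parP$ and therefore does not violate the definition of a terminated configuration.
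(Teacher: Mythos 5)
Your proposal is correct and follows essentially the same route as the paper's own proof: induction on the process typing derivation, observing that \refRule{\tythread} and \refRule{\tymsg} force a strictly positive measure, and that in \refRule{\typarallel} the sum $n+m=0$ forces both components to have measure $0$. Your treatment is in fact slightly more careful than the paper's in explicitly accounting for \refRule{\tydone} and the neutrality of $\pdone$ for $\parP$.
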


\begin{proof}
    Given $\config=\Conf{\aactorenvironment}{\processid}$ we apply induction on the typing rules over processes:
    \textbf{Base cases:} T-result and T-idle are terminated configurations. 
    T-msg and T-thread cannot be applied due to having a measurement strictly bigger to 0.
    \textbf{Inductive case:} (T-parallel), where $\config=\Conf{\aactorenvironment}{\parallelcomp{\processid}{\pprocessid}}$.
    	If any of $\processid$ or $\pprocessid$ is neither $\idleactor{\actorid}$ or $\placefut{\futid}{\val}$, they cannot be a
    	call message nor a thread (active or inactive), because they cannot be typed with a 0 measurement. 
    	Then, $\processid$ and $\pprocessid$ are either both $\idleactor{\actorid}$ or $\placefut{\futid}{\val}$, or at least one 
    	is a parallel composition of processes, in which case we apply inductive hypothesis.
    	Finally, (T-conf) replicates the measurement of the process, therefore a configuration typed with a 0 measurement 
    	holds a process typed with a 0 measurement, which is terminated.
\end{proof}



To isolate the only branching case of the reduction we consider the silent choice of $\nchoicesym$.

\begin{lemma}[Silent choice]\label{lem:silentchoice}
    If $\WTthread\actorcontext\futcontext\typingcontext{n}\actorid{\expeval{\activationrecord}{\e}}\T\actorcontext'$ and 0<n,
    if there exists a step
    $\expeval{\activationrecord}{\e} \expstep[]{\emptylabel} \expeval{\activationrecord'}{\e'}$,
    there exists $\expeval{\activationrecord}{\e} \expstep[]{\emptylabel} \expeval{\activationrecord''}{\e''}$
    such as $\WTthread\aactorcontext\ffutcontext\ttypingcontext{m}\actorid{\expeval{\activationrecord''}{\e''}}\T\aactorcontext'$, and 
    m<n.
\end{lemma}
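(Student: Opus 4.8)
The plan is to argue by structural induction on the expression $\e$, closely following the Progress lemma \cref{lem:progexp} but keeping track only of \emph{silent} steps and being deliberate about which branch is taken at a non-deterministic choice. After peeling off \refRule\tyexp to expose the typing of $\e$ itself, I would first dispose of the cases in which no silent step exists, so that the premise is unsatisfiable and the statement holds vacuously: these are the method call, await, hold and release expressions, whose only reductions (\refRule\expcall, \refRule\expawait, \refRule\exphold, \refRule\exprls) carry a non-silent label. The return expression is also trivial, since by \refRule\tyreturn it has measure $0$ and the hypothesis $0<n$ fails. This leaves exactly three cases to treat: primitive operations, choice, and let.

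For $\e=\opexp{\many\ve}$ the only reduction is \refRule\expop, which is silent and, once the arguments are evaluated, deterministic; by \refRule\tyop the measure is $1$, and \cref{lem:sre} gives that the residual $\return\val$ is well typed with measure $0<1=n$. The crucial case, and the reason this lemma is singled out, is the branching one, $\e=\nondetchoic{\e_1}{\e_2}$. Inverting \refRule\tychoice yields $n=1+p$, where $p$ is the measure assigned to exactly one of the two branches, say $\e_i$. I would take precisely the silent step to this \emph{helpful} branch, via \refRule\expchoiceleft or \refRule\expchoiceright according to $i$, and then invoke \cref{lem:sre} to conclude that $\e_i$ is well typed with measure $p=n-1<n$.

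For $\e=\letexp{\var}{\e_1}{\e_2}$, \refRule\tylet gives a measure of the form $1+n_2+m_1$, with $m_1$ the measure of $\e_1$ and $n_2$ that of $\e_2$. If $\e_1=\return\ve$, then $m_1=0$ and the silent step is \refRule\explet, after which \cref{lem:sre} yields a residual of measure $n_2<n$. If $\e_1\neq\return\ve$, then any silent step of $\e$ arises, through \refRule\expletred, from a silent step of $\e_1$ whose label it inherits; moreover $\e_1\neq\return\ve$ forces $m_1>0$, because \refRule\tyreturn is the only expression rule assigning measure $0$. Hence the induction hypothesis applies to $\e_1$ and produces a silent step decreasing its measure to some $m_1'<m_1$; lifting this step through \refRule\expletred and combining the additivity of the measure in \refRule\tylet with \cref{lem:sre} gives a silent step of $\e$ to a residual of measure $1+n_2+m_1'<n$.

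I expect the main obstacle to be conceptual rather than computational, and to live entirely in the choice case: the two silent steps of $\nondetchoic{\e_1}{\e_2}$ need not both decrease the measure — the branch not recorded in the derivation may be typed with an arbitrarily large measure — so I must commit to the branch whose measure was used by \refRule\tychoice. The only other subtlety is the observation, needed to invoke the induction hypothesis in the let case, that a non-\key{return} expression has strictly positive measure; this follows directly by inspecting the measures assigned by the expression-typing rules. All transfer of typing across a reduction step is delegated to subject reduction for expressions (\cref{lem:sre}).
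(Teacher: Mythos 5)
Your proof is correct and follows essentially the same route as the paper's (which is only three sentences long): the whole content lies in the choice case, where one must commit to the branch whose measure $p$ was recorded by \refRule\tychoice, the other cases being vacuous, trivial, or delegated to subject reduction. Your version is in fact more careful than the paper's, since you explicitly handle the propagation case \refRule\expletred by induction (needed when the choice sits under a \key{let}), which the paper silently subsumes under ``trivially holds''.
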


\begin{proof}
    Trivially holds for (\explet) and (\expop). 
    For (\expchoiceleft) and (\expchoiceright) holds by (\tychoice),
    where the measure is $1+p$ with $p$ the typing measure of 
    one branch.
    We select that branch and return the corresponding step. 
\end{proof}

Helpful Direction uses deadlock freedom and silent step to find the step that decreases the measure.

\begin{lemma}[Helpful Direction]\label{lem:help}
    If $\WTconf[]{\actorcontext}{\emptycontext}{n}{\config}{\futcontext'}$, with 0<n.
    Then there exists $\config\semanticstep\config'$ such as 
    $\WTconf[]{\aactorcontext}{\emptycontext}{m}{\config'}{\ffutcontext'}$ with m<n.
\end{lemma}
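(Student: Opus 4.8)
The plan is to reduce the claim to a case analysis on the reduction rules, using Deadlock Freedom to guarantee that \emph{some} step is available and the measure annotations of the typing rules to select one that strictly decreases the measure.

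First I would note that a terminated configuration is built only from $\idleactor{\actorid}$ and $\placefut{\futid}{\val}$, which are typed with measure $0$ by \refRule\tyidle and \refRule\tyresult, and that \refRule\typarallel sums the measures of parallel components; hence a terminated configuration can only be typed with measure $0$. Since $0 < n$, the configuration $\config$ is therefore not terminated, so Deadlock Freedom (\cref{thm:df}), applied with the empty future context, yields a reduction $\config \semanticstep \config_0$. This step already discharges the only genuinely global difficulty, namely the absence of a cycle of threads each waiting on a future produced by another; what remains is local. Because the measure of a process is the sum of the measures of its parallel components and is invariant under $\Pcong$ (\cref{lem:spre}), it suffices to inspect the head rule of the derivation of $\config \semanticstep \config_0$, obtained after unfolding \refRule\proccong and the two propagation rules.

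If that head rule is one of \refRule\procget, \refRule\procspawn, \refRule\proccall, \refRule\procrls, \refRule\prochold or \refRule\procreturn, I would simply take $\config' = \config_0$: a one-line bookkeeping on measures shows each such step lowers the global measure by exactly one. The informative computations are those creating or consuming a message. In \refRule\proccall the running expression drops by the $k+3$ contributed by \refRule\tycall while a fresh message of measure $k+2$ (the $+2$ of \refRule\tymsg) is added, for a net $-1$; dually \refRule\procspawn and \refRule\procget consume a message, respectively a fulfilled future, of the matching measure and start, respectively advance, a thread, again a net $-1$ once the $+1$ of \refRule\tythread is accounted for; \refRule\procrls, \refRule\prochold and \refRule\procreturn are immediate.

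The only rule whose effect on the measure is not fixed is \refRule\procsilent, and this is the crux: \refRule\tychoice types a nondeterministic choice with measure $1+p$, where $p$ is the measure of \emph{one} branch, so a careless silent step could increase the measure. If the head rule is \refRule\procsilent acting on an active thread $\activethread{\activationrecord}{\e}{\futid}{\actorid}$, then $\e$ admits a silent step and hence is not $\return{\ve}$ --- the unique expression of measure $0$ --- so its measure is positive; I then invoke Silent Choice (\cref{lem:silentchoice}) to obtain a silent step $\expeval{\activationrecord}{\e} \expstep{\emptylabel} \expeval{\activationrecord''}{\e''}$ with strictly smaller thread measure, and re-apply \refRule\procsilent in the same precongruence and parallel context to build the required reduction $\config \semanticstep \config'$. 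In all cases, \cref{thm:subred} gives well-typedness of $\config'$ with the input future context still empty (the produced context only grows by fresh, marked futures), and the measure read off the constructed derivation is strictly below $n$. The essential obstacle is thus exactly the branching rule, isolated and settled by \cref{lem:silentchoice}; everything else is the routine measure arithmetic above.
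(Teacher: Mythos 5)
Your proposal is correct and follows essentially the same route as the paper's own proof: Deadlock Freedom supplies a step, a case analysis shows every non-silent reduction decreases the measure by one, and the Silent Choice lemma resolves the only problematic case, the nondeterministic branch typed by \refRule\tychoice. Your version merely makes explicit two points the paper leaves implicit, namely that a terminated configuration can only be typed with measure $0$ (so $0<n$ licenses the use of Deadlock Freedom) and that Subject Reduction yields well-typedness of the resulting configuration.
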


\begin{proof}
    By \cref{thm:df} (deadlock freedom) there exists a step. 
    By induction on that step it is trivial to show that all steps reduce the measure, except the silent step.
    By silent choice \cref{lem:silentchoice} there exists $\expeval{\activationrecord}{\e} \expstep[]{\emptylabel} \expeval{\activationrecord'}{\e'}$ such as the measure decreases.
    By (\tythread) the measure of the typing for the configuration decreases.
\end{proof}

Finally, by the theorems proven in this section we can conclude with the proof of weak termination of well-typed configurations.

\begin{theorem}[Weak termination]
  If $
    \WTconf[]{\actorcontext}{\emptycontext}{n}{\config}{\futcontext'}$ then 
    it exists a reduction $\config \semanticstep^{*} \config'$ such as $\config'$ is terminated.
\end{theorem}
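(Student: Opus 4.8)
The plan is to proceed by well-founded induction on the measure $n$, reducing the theorem to an iteration of the two key auxiliary results established just above: the Helpful Direction lemma (\cref{lem:help}) and the Terminated configuration lemma (\cref{lem:termconf}). The genuine combinatorial content---that a stuck configuration must be awaiting a future produced inside itself, and that a well-typed closed configuration can always take a measure-decreasing step---has already been isolated in deadlock freedom (\cref{thm:df}) and in \cref{lem:help}, so what remains is purely the assembly of these pieces into a terminating reduction sequence.

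First I would treat the base case $n = 0$. Here \cref{lem:termconf} directly tells us that a well-typed configuration with measure $0$ is terminated, so the empty reduction sequence $\config \semanticstep^{*} \config$ already witnesses the claim. For the inductive step I assume $0 < n$ and that the statement holds for all closed well-typed configurations of measure strictly below $n$. Applying \cref{lem:help} to $\WTconf[]{\actorcontext}{\emptycontext}{n}{\config}{\futcontext'}$ yields a single reduction step $\config \semanticstep \config'$ together with a typing derivation $\WTconf[]{\aactorcontext}{\emptycontext}{m}{\config'}{\ffutcontext'}$ for some $\aactorcontext$, $\ffutcontext'$ and some $m < n$. The induction hypothesis then gives a reduction $\config' \semanticstep^{*} \config''$ with $\config''$ terminated, and prepending the first step produces the desired $\config \semanticstep^{*} \config''$.

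The crucial point that makes the induction go through cleanly---and the place I would be most careful---is that the progress step provided by \cref{lem:help} preserves the \emph{closedness} hypothesis, namely that the consumed future context on the left of the judgment is empty. This is exactly why \cref{lem:help} is stated with $\emptycontext$ in that position rather than with an arbitrary $\futcontext$: it guarantees that $\config'$ is again a closed well-typed configuration, so the induction hypothesis is applicable to it. Were this condition not maintained, one could not iterate, since deadlock freedom and the helpful-direction argument both rely on the absence of externally supplied futures (a configuration awaiting a future recorded in a nonempty left context need not be able to make progress on its own). Since the measure lives in $\N$ and strictly decreases at each step, the induction is well-founded and terminates, delivering a terminated configuration after finitely many reductions.
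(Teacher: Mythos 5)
Your proposal is correct and follows essentially the same route as the paper's own proof: induction on the measure $n$, with the base case discharged by the terminated-configuration lemma (\cref{lem:termconf}) and the inductive step by the Helpful Direction lemma (\cref{lem:help}) followed by the induction hypothesis. Your additional remark that \cref{lem:help} preserves the empty consumed-future context, which is what keeps the induction hypothesis applicable, is a point the paper leaves implicit but is accurately observed.
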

\begin{proof}
    We prove by induction on the measurement n. Base case: $n=0$ $\config$ is terminated. Inductive case: $n=m+\mu$ with $\mu>0$ by \cref{lem:help} there exists a step $\config\semanticstep\config'$ such as $\config'$ is typed under a $m<n$ measure.
    By I.H. on $\config'$ there exists a $\config \semanticstep \config' \manysemanticsteps \config''$ such as $\config''$ is terminated.
\end{proof}


\end{document}